\documentclass[a4paper,UKenglish,cleveref, autoref,numberwithinsect, thm-restate]{lipics-v2021}
\newif\iflong\longtrue
\nolinenumbers
\hideLIPIcs
\usepackage{tikz}
\usetikzlibrary{matrix,backgrounds}
\usepackage[utf8]{inputenc} 
\usepackage{amsmath}
\usepackage{hyperref}
\usepackage[noend]{algpseudocode}
\usepackage{algorithm}
\usepackage{amssymb}
\usepackage{todonotes}
\usepackage{dsfont}

\usepackage[]{hyperref}
\usepackage{cleveref}

\usepackage{booktabs}

\newtheorem{rrule}[theorem]{Reduction Rule}

\crefname{rrule}{Reduction Rule}{Reduction Rules}

\newcommand{\Oh}{\ensuremath{\mathcal{O}}}
\DeclareMathOperator{\cl}{cl}
\DeclareMathOperator{\cp}{cap}

\DeclareMathOperator{\argmax}{argmax}
\DeclareMathOperator{\opt}{opt}

\newcommand{\p}{P} 
\newcommand{\q}{Q} 


\newcommand{\DS}{{\normalfont\textsc{Dominating Set}}}
\newcommand{\IS}{{\normalfont\textsc{Independent Set}}}
\newcommand{\IM}{{\normalfont\textsc{Induced Matching}}}
\newcommand{\ConVC}{{\normalfont\textsc{Connected Vertex Cover}}}
\newcommand{\CapVC}{{\normalfont\textsc{Capacitated Vertex Cover}}}
\newcommand{\Ramsey}{{\normalfont\textsc{Ramsey}}}

\newcommand{\PHC}{{\normalfont coNP $\subseteq$ NP/poly}} 
\newcommand{\nPHC}{{\normalfont coNP $\not \subseteq$ NP/poly}} 

\newcommand{\problemdef}[3]{
  \begin{center}
    \begin{minipage}{0.95\textwidth}
      \normalsize\textsc{#1} \smallskip \\
      \begin{tabularx}{\textwidth}{@{}l@{\hspace{3pt}}X}
        \normalsize\textbf{Input:}    & \normalsize#2 \\
        \normalsize\textbf{Question:} & \normalsize#3
      \end{tabularx}
    \end{minipage}
  \end{center}
}

\tikzstyle{vertex}=[draw, circle, fill, inner sep = 2.4pt]

\title{Essentially Tight Kernels for\\ (Weakly) Closed Graphs}
\titlerunning{Essentially Tight Kernels for (Weakly) Closed Graphs}

\author{Tomohiro Koana}{Technische Universität Berlin, Algorithmics and Computational Complexity, Germany}{tomohiro.koana@tu-berlin.de}{https://orcid.org/0000-0002-8684-0611}{Supported by the Deutsche Forschungsgemeinschaft (DFG), project FPTinP, NI 369/19.}

\author{Christian Komusiewicz}{Philipps-Universität Marburg, Fachbereich Mathematik und Informatik,  Marburg, Germany}{komusiewicz@informatik.uni-marburg.de}{https://orcid.org/0000-0003-0829-7032}{}

\author{Frank Sommer}{Philipps-Universität Marburg, Fachbereich Mathematik und Informatik,  Marburg, Germany}{fsommer@informatik.uni-marburg.de}{https://orcid.org/0000-0003-4034-525X}{Partially supported by the Deutsche Forschungsgemeinschaft (DFG), project MAGZ, KO~3669/4-1.}

\authorrunning{Tomohiro~Koana, Christian~Komusiewicz, Frank~Sommer}

\ccsdesc[500]{Theory of computation~Parameterized complexity and exact algorithms}

\ccsdesc[500]{Theory of computation~Graph algorithms analysis}

\keywords{Fixed-parameter tractability, kernelization, \texorpdfstring{$c$}{c}-closure, weak \texorpdfstring{$\gamma$}{gamma}-closure, Independent Set, Induced Matching, Connected Vertex Cover, Ramsey numbers, Dominating Set}

\begin{document}

\maketitle

\begin{abstract}
  We study kernelization of classic hard graph problems when the input graphs fulfill triadic closure properties. More precisely, we consider the recently introduced parameters closure
  number~$c$ and the weak closure number~$\gamma$ [Fox et al., SICOMP 2020] in
  addition to the standard parameter solution size~$k$. For \textsc{Capacitated Vertex
    Cover}, \textsc{Connected Vertex Cover}, and \textsc{Induced Matching} we obtain the
  first kernels of size~$k^{\Oh(\gamma)}$ and~$(\gamma k)^{\Oh(\gamma)}$, respectively, thus
  extending previous kernelization results on degenerate
  graphs. 
  The kernels are essentially tight, since these problems are unlikely to admit kernels of size $k^{o(\gamma)}$ by previous results on their kernelization complexity in degenerate graphs [Cygan et al., ACM TALG 2017].
  In addition, we provide lower bounds for the kernelization of
  \textsc{Independent Set} on graphs with constant closure number~$c$ and kernels for \textsc{Dominating Set} on weakly closed split graphs and weakly closed bipartite graphs.
\end{abstract}

\section{Introduction}


A main tool for coping with hard computational problems is to design polynomial-time data
reduction algorithms which shrink large input data to a computationally hard core by
removing the easy parts of the instance. Parameterized algorithmics provides
the framework of \emph{kernelization} for analyzing the power and limits of
polynomial-time data reduction algorithms.

A parameterized problem comes equipped with the classic input instance~$I$ and a
parameter~$k$ which describes the structure of the input data or is a bound on the
solution size. A kernelization is an algorithm that replaces every input instance~$(I,k)$
in polynomial time by an equivalent instance~$(I',k')$ (the kernel) whose size depends
only on the parameter~$k$, that is, where~$|I'| + k'\le g(k)$ for some computable function~$g$. The kernel is
guaranteed to be small if~$k$ is small and~$g$ grows only modestly. A
particularly important special case are thus \emph{polynomial kernelizations} where~$g$ is
a polynomial.

Many problems do not admit a kernel simply because they are believed to be not
\emph{fixed-parameter tractable}. That is, it is assumed that they are not solvable
in~$f(k)\cdot |I|^{\Oh(1)}$~time. A classic example for such a problem is
\textsc{Independent Set} parameterized by the solution size~$k$. Moreover, even many
problems that do admit kernels are known to not admit polynomial kernels~\cite{BDFH09,DLS14,Kra14,Kra14b}\footnote{All kernelization lower bounds mentioned in this work are based on the assumption \nPHC.}; a classic example
is the \textsc{Connected Vertex Cover} problem parameterized by the solution
size~$k$~\cite{DLS14}.
To devise and analyze kernelization algorithms for such problems, one considers either
further additional parameters or restricted classes of input graphs. One example for this
approach is the study of kernelization in degenerate graphs~\cite{CGH17,CPPW12,PRS12}. A
graph~$G$ is~$d$-degenerate if every subgraph of~$G$ contains at least one vertex that has
degree at most~$d$. The \textsc{Dominating Set} problem, for example, is assumed to be
fixed-parameter intractable with respect to the solution size~$k$ in general graphs, but
admits a kernel of size~$k^{\Oh(d^2)}$ where~$d$ is the degeneracy of the input
graph~\cite{PRS12}. Thus, the degree of the kernel size depends on~$d$; we will say that
\textsc{Dominating Set} admits a polynomial kernel on $d$-degenerate graphs. In subsequent
work, this kernelization was shown to be tight in the sense that there is no kernel of
size~$k^{o(d^2)}$~\cite{CGH17}. The situation is different for \textsc{Independent Set} which admits a trivial problem kernel with~$\Oh(dk)$ vertices: here the kernel size is polynomial
in~$d+k$.

Real-world networks have small degeneracy, making it an interesting parameter from an
application point of view. Moreover, bounded degeneracy imposes an interesting combinatorial
structure that can be exploited algorithmically as evidenced by the discussion above.
Recently, Fox et al.~\cite{FRSWW20} discovered two further parameters that share these two
features: they are well-motivated from a practical standpoint and describe interesting and useful
combinatorial features of graphs.
The first parameter is the \emph{closure} of a graph, defined as follows.
\begin{definition}[\cite{FRSWW20}]
  \label{def:c}Let $\cl_G(v) = \max_{w \in V \setminus N[v]} |N(v) \cap N(w)|$ denote
  the \emph{closure number} of a vertex~$v$ in a graph~$G$. A graph $G$ is
  \emph{$c$-closed} if $\cl_G(v) < c$ for all $v \in V(G)$.
\end{definition}

In other words, a graph is~$c$-closed if every pair of nonadjacent vertices has at
most~$c-1$ common neighbors. The idea behind the parameter is to model triadic closure in social networks: the observation that it is unlikely that two persons have
many common acquaintances without knowing each other. Fox et al.~\cite{FRSWW20} devised a
further parameter, the \emph{weak closure number} which relates to $c$-closure
in the same way that degeneracy relates to maximum degree: instead of demanding a bounded
closure number for every vertex, we only demand that every subgraph
 contains at least one vertex with bounded closure number.
\begin{definition}[\cite{FRSWW20}]
  \label{def:gamma}
  A graph $G$ is weakly $\gamma$-closed\footnote{To avoid confusion with the closure number~$c$, we denote the weak closure by~$\gamma$ instead of~$c$.} if
  \begin{itemize}
    \item
      there exists a \emph{closure ordering} $\sigma:= v_1, \dots, v_n$ of the vertices such that $\cl_{G_i}(v_i) < \gamma$ for all $i \in [n]$ where $G_i := G[\{ v_i, \dots, v_n \}]$, or, equivalently, if 
    \item
      every induced subgraph $G'$ of $G$ has a vertex $v \in V(G')$ such that $\cl_{G'}(v) < \gamma$.
  \end{itemize}

\end{definition}

We call~$\sigma$ a \emph{closure ordering} of~$G$. The three parameters are related as
follows: The weak closure number~$\gamma$ is never larger than~$d+1$ and also never larger
than the closure number~$c$ of~$G$. Moreover,~$\gamma$ can be much smaller than~$d$ as
witnessed by large cliques. The degeneracy~$d$ and the closure~$c$ are unrelated as
witnessed by large cliques (these are 1-closed and have large degeneracy) and complete
bipartite graphs where one part has size two (these are 2-degenerate and have large
closure number). The latter example also shows that~$\gamma$ can be much smaller
than~$c$ which is very often the case in real-world data~\cite{FRSWW20,KKS20B}.

Akin to degeneracy, $c$-closure and weak~$\gamma$-closure have been proven to be very
useful parameters. In particular, all maximal cliques of a graph can be enumerated
in~$3^{\gamma/3}\cdot n^{\Oh(1)}$ time~\cite{FRSWW20}. By the discussion on the relation of~$\gamma$
and~$d$ above, this result thus extends the range of tractable clique enumeration
instances from the class of bounded-degeneracy graphs to the larger class of 
graphs with bounded weak closure. The clique enumeration algorithm has also
been extended to the enumeration of other cliquish subgraphs~\cite{HR20,KKS20a}

In previous work, we studied the kernelization complexity of classic problems such as
\textsc{Independent Set}, \textsc{Dominating Set}, and \textsc{Induced Matching}
on~$c$-closed graphs. This work continues this study and, more importantly, is the first
to focus on kernelization in weakly~$\gamma$-closed graphs. The main questions that we
want to address are the following: Can we improve polynomial kernelizations on
$d$-degenerate graphs or on $c$-closed graphs to polynomial kernelizations on
weakly~$\gamma$-closed graphs? What are the structural features of weakly~$\gamma$-closed
graphs that we can exploit in kernelization algorithms? What are the limits of
kernelization on (weakly) closed graphs?

\subparagraph{Our Results.} Building on a combinatorial lemma of Frankl and
Wilson~\cite{FW81}, we obtain a general lemma (\Cref{thm:bound-sets-p-and-q}) which can be
used to bound the size of graphs in terms of their vertex cover number and weak closure
number. More precisely, we show that in a graph~$G$ with vertex cover~$S$ of size~$k$ and
weak closure~$\gamma$, the number of different neighborhoods in the independent
set~$I:=V(G)\setminus S$ is~$k^{\Oh(\gamma)}$.  \Cref{thm:bound-sets-p-and-q} gives a general strategy for
obtaining kernels in weakly closed graphs: Devise reduction rules that 1) bound the size
of the vertex cover and 2) decrease the size of neighborhood classes.

We then show that this strategy helps in obtaining kernels on weakly closed graphs for
\textsc{Capacitated Vertex Cover}, \textsc{Connected Vertex Cover}, and \textsc{Induced
  Matching} all parameterized by the natural parameter solution size~$k$. For these problems,
polynomial kernels in degenerate graphs were known~\cite{CGH17,CPPW12,EKKW10,KPSX11}. Our results thus extend the class of graphs for which
polynomial kernels are known for these problems. The kernels have size~$k^{\Oh(\gamma)}$
and~$(\gamma k)^{\Oh(\gamma)}$, respectively, and it follows from previous results on
degenerate graphs that the dependence on~$\gamma$ in the exponent cannot be
avoided~\cite{CGH17,CPPW12}.

We complement these findings with a study of \textsc{Capacitated Vertex Cover} and \textsc{Connected Vertex Cover} on~$c$-closed graphs. Interestingly, the kernelization complexity of the  problems differs: \textsc{Capacitated Vertex Cover} does not admit a kernel of size~$\Oh(k^{(c-1)/2-\epsilon})$ whereas \textsc{Connected Vertex Cover} admits a kernel with $\Oh(ck^2)$~vertices.

Next, we revisit the kernelization complexity of \textsc{Independent Set} on (weakly)
$c$-closed graphs. We  show that \textsc{Independent Set} does not admit a
kernel of size~$\Oh(k^{2-\epsilon})$ even on $c$-closed graphs where~$c$ is constant. This
complements previous kernels of size~$\Oh(c^2k^3)$~\cite{KKS20}
and~$\Oh(\gamma^2k^3)$~\cite{KKS20a} and narrows the gap between upper and lower bound for
the achievable kernel size on (weakly) closed graphs.
We also obtain a lower bound of~$\Oh(k^{4/3-\epsilon})$ on the number of \emph{vertices}
in case of constant~$c$ and show that at least a linear dependence on~$c$ is necessary in
any kernelization of \textsc{Independent Set} in (weakly) closed graphs: under standard
assumptions, \textsc{Independent Set} does not admit a kernel of size
$c^{(1-\epsilon)}\cdot k^{\Oh(1)}$. Some of our results also hold for Ramsey-type problems
where one wants to find a large subgraph belonging to a class~$\mathcal{G}$ containing all
complete and all edgeless graphs. In this context, we observe that weakly
$\gamma$-closed graphs fulfill the Erdős–Hajnal property~\cite{EH89} with a linear dependence
on~$\gamma$: There is a constant~$q$ such that every weakly $\gamma$-closed graph
on~$k^{q\gamma}$ vertices has either a clique of size~$k$ or an independent set of
size~$k$. We believe that this observation is of independent interest and that it will be useful in
the further study of weakly $\gamma$-closed graphs.

Finally, we consider \textsc{Dominating Set} for which we developed a kernel of
size~$k^{\Oh(c)}$ in previous work~\cite{KKS20}. The natural open question is whether
\textsc{Dominating Set} admits a kernel of size~$k^{f(\gamma)}$ for some function~$f$, which would extend the class of
kernelizable input graphs from degenerate to weakly closed. We make partial progress
towards answering this question by showing that \textsc{Dominating Set} admits a kernel of
size~$k^{\Oh(\gamma^2)}$ in weakly closed graphs with constant clique number (these
include bipartite graphs) and a kernel of size~$(\gamma k)^{\Oh(\gamma)}$ in split graphs. In both cases the dependence on~$\gamma$ is tight in the sense that kernels of size~$k^{o(d^2)}$ and of size~$k^{o(c)}$ are unlikely to exist~\cite{CGH17,KKS20a}.


\subparagraph{Preliminaries}

By~$[n]$ we denote the set $\{ 1, \dots, n \}$ for some~$n \in \mathds{N}$.
For a graph~$G$, we denote by~$V(G)$ its \emph{vertex set} and by~$E(G)$ its \emph{edge set}.
By~$n := |V(G)|$ we denote the number of vertices.
Let~$X \subseteq V(G)$ be a vertex set.
By~$G[X]$ we denote the \emph{subgraph induced} by~$X$ and by~$G - X := G[V(G) \setminus X]$ we denote the graph obtained by removing the vertices of~$X$.
We denote by~$N_G(X):=\{ y \in V(G) \setminus X \mid xy \in E(G), x \in X \}$ the \emph{open neighborhood} of~$X$ and by~$N_G[X]:=N_G(X) \cup X$ the \emph{closed neighborhood} of~$X$.
The maximum degree of~$G$ is~$\Delta_G := \max_{v \in V(G)} \deg_G(v)$. 
A vertex with degree one is referred to as a \emph{leaf} vertex.
Let~$\sigma$ be a closure ordering.
We define~$\p_G^\sigma(v) := \{ u \in N_G(v) \mid \text{$u$ appears before $v$ in $\sigma$} \}$ and~$\q_G^\sigma(v) := \{ u \in N_G(v) \mid \text{$u$ appears after $v$ in $\sigma$} \}$.
We say that~$P(v)$ are \emph{prior neighbors} of~$v$ and~$Q(v)$ are \emph{posterior neighbors} of~$v$.
A \emph{matching}~$M$ is a set of edges without common vertices.
By~$V(M)$ we denote the union of all endpoints of edges in~$M$.
We omit the superscripts and subscripts when they are clear from the context.
Note that~$|Q(u) \cap N(v)| < \gamma$ for any nonadjacent vertices~$u, v \in V(G)$.
This leads to the following observation.

\begin{observation}
\label{obs-intersection-q-neighbors}
  For nonadjacent vertices $u, v \in V(G)$, it holds that $|Q(u) \cap Q(v)| \le |Q(u) \cap N(v)| \le \gamma - 1$.
\end{observation}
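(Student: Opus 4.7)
The statement is a short two-step chain, so I would simply prove each inequality in turn; I do not expect either step to be difficult, since the bound $|Q(u) \cap N(v)| < \gamma$ is essentially already asserted in the sentence immediately preceding the observation. The plan is to make that sentence's justification explicit and then note that the first inequality is free from the definitions.

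For the first inequality, I would argue that $Q(v) \subseteq N(v)$ by the very definition of posterior neighbors, and hence $Q(u) \cap Q(v) \subseteq Q(u) \cap N(v)$, which gives the cardinality bound.

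For the second inequality, I would invoke the closure ordering $\sigma$. Without loss of generality assume $u$ precedes $v$ in $\sigma$, and let $i$ be the position of $u$, so $u = v_i$ and $v \in V(G_i)$. By definition, every element of $Q(u)$ lies in $V(G_i) \setminus \{u\}$, so any $w \in Q(u) \cap N_G(v)$ satisfies $w \in N_{G_i}(u) \cap N_{G_i}(v)$. Since $u$ and $v$ are nonadjacent in $G$, they are also nonadjacent in the induced subgraph $G_i$, and the defining property of a closure ordering gives $|N_{G_i}(u) \cap N_{G_i}(v)| \le \cl_{G_i}(u) < \gamma$. Thus $|Q(u) \cap N(v)| \le \gamma - 1$.

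The only minor subtlety is the orientation of $u$ and $v$ in $\sigma$, but this is handled purely by symmetry: if $v$ precedes $u$, then applying the same reasoning with $j$ the position of $v$ and using $\cl_{G_j}(v) < \gamma$ yields the same bound. So I anticipate no real obstacle here; the observation is essentially a direct unpacking of \Cref{def:gamma}.
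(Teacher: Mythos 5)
Your proposal is correct and follows exactly the route the paper intends: the paper gives no separate proof, treating the observation as immediate from the remark that $|Q(u)\cap N(v)|<\gamma$ for nonadjacent $u,v$ (which is just the closure-ordering property applied to whichever of $u,v$ comes first) together with the trivial inclusion $Q(v)\subseteq N(v)$. Your explicit case analysis on the orientation of $u$ and $v$ in $\sigma$ is a faithful and careful unpacking of that same argument, with no gap.
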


\iflong A parameterized problem is \emph{fixed-parameter tractable} if every instance~$(I, k)$ can be solved in~$f(k) \cdot |I|^{\Oh(1)}$ time for some computable function~$f$.
An algorithm with such a running time is an \emph{FPT algorithm}.
A \emph{kernelization} is a polynomial-time algorithm which transforms every instance~$(I, k)$ into an equivalent instance~$(I', k')$ such that~$|I'| + k' \le g(k)$ for some computable function~$g$.
It is widely believed that W[$t$]-hard problems~$(t \in \mathds{N})$ do not admit an FPT algorithm. \fi
For more details on parameterized complexity, we refer to the standard monographs~\cite{CFK+15,DF13}.

\section{Variants of Vertex Cover}
Frankl and Wilson~\cite{FW81} proved the following bound on the size of set systems where the number of different intersection sizes is bounded. 
\begin{proposition}[Frankl and Wilson \cite{FW81}]
  \label{prop:fw81}
  Let $\mathcal{F}$ be a collection of pairwise distinct subsets of $[n]$ and let $L \subseteq [n]$ be some subset.
  If $|S \cap S'| \in L$ for all distinct $S, S' \in \mathcal{F}$, then~$|\mathcal{F}| \in \Oh(n^{(|L|)})$.
\end{proposition}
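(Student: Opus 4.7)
The plan is to use the polynomial method in the style of Ray--Chaudhuri--Wilson. Order the family as $\mathcal{F} = \{S_1, \dots, S_m\}$ with $|S_1| \le |S_2| \le \dots \le |S_m|$, and for each set $S$ write $v_S \in \{0,1\}^n$ for its characteristic vector. To each $S_i$ I would associate the polynomial
\[
  f_i(x_1, \dots, x_n) \;=\; \prod_{\substack{\ell \in L \\ \ell < |S_i|}} \bigl(\langle v_{S_i},\, x\rangle - \ell\bigr),
\]
where $\langle \cdot, \cdot \rangle$ is the standard inner product. Since $\langle v_{S_i}, v_{S_j}\rangle = |S_i \cap S_j|$, these polynomials are tailor-made to vanish on the 0/1-evaluations corresponding to other members of $\mathcal{F}$.

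The first step is to verify two properties of the evaluations. For $j < i$ we have $|S_i \cap S_j| \in L$ by hypothesis, and because $|S_j| \le |S_i|$ while $S_j \ne S_i$, the intersection is strictly smaller than $|S_i|$; hence this value is one of the $\ell$'s appearing as a factor in $f_i$, so $f_i(v_{S_j}) = 0$. On the other hand, $f_i(v_{S_i})$ is a product of strictly positive integers of the form $|S_i| - \ell$ and is therefore nonzero. Together these give a triangular pattern, which forces $f_1, \dots, f_m$ to be linearly independent over $\mathbb{Q}$.

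The final step is to bound the dimension of the ambient space. Each $f_i$ is a polynomial of total degree at most $|L|$. Since we only ever evaluate at 0/1-vectors, we may apply the relations $x_j^2 = x_j$ to replace every $f_i$ by its multilinearization without changing any evaluation. The resulting multilinear polynomials live in the vector space spanned by $\{\prod_{j \in T} x_j : T \subseteq [n],\, |T| \le |L|\}$, whose dimension is $\sum_{k=0}^{|L|} \binom{n}{k} \in \mathcal{O}(n^{|L|})$. Linear independence then yields $|\mathcal{F}| = m \le \mathcal{O}(n^{|L|})$, as required.

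The main obstacle is the ordering argument used to establish $f_i(v_{S_j}) = 0$ for $j < i$: it requires simultaneously that $|S_i \cap S_j|$ belongs to $L$ and that it is strictly less than $|S_i|$, and the latter is exactly what sorting by cardinality (combined with the distinctness of the sets) is used to ensure. Once that is in hand, the remainder of the argument is bookkeeping about degrees of multilinear polynomials and a standard triangular-evaluation argument for linear independence.
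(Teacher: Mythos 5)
Your proof is correct. Note that the paper itself does not prove this statement at all: it is imported as a black-box citation to Frankl and Wilson, and is only ever applied later (with $L=[\gamma-1]$) to bound the number of $Q$-equivalence classes, so there is no in-paper argument to compare against. What you give is the standard polynomial-method proof of the nonuniform Ray-Chaudhuri--Wilson/Frankl--Wilson theorem (the triangular-evaluation argument usually attributed to Alon, Babai, and Suzuki), whereas Frankl and Wilson's original article obtained the bound by a different linear-algebra route via higher inclusion matrices; your version is self-contained, more elementary, and actually yields the sharp bound $\sum_{k=0}^{|L|}\binom{n}{k}$, which is stronger than the $\Oh(n^{|L|})$ form stated in the proposition. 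The details check out: the only delicate point, that $|S_i\cap S_j|<|S_i|$ for $j<i$, is exactly where you use the ordering by cardinality together with distinctness (if $|S_i\cap S_j|=|S_i|$ then $S_i\subseteq S_j$ and $|S_j|\le|S_i|$ would force $S_j=S_i$), and this also handles ties in the ordering; the multilinearization step is legitimate because it preserves values on $\{0,1\}^n$, where all evaluations take place, so linear independence of the functions transfers to the space of multilinear polynomials of degree at most $|L|$.
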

We now use this lemma to achieve a bound on the number of neighborhoods in weakly closed graphs with a small vertex cover.
\begin{lemma}
\label{thm:bound-sets-p-and-q}
  Let $G$ be a graph and let $I$ be an independent set of $G$.
  Suppose that for every vertex $v \in I$, there are at most $t$ vertices $v' \in I \setminus \{ v \}$ such that $N(v) = N(v')$.
  Then,~$|I| \in t \cdot \Oh(3^{\gamma/3}\cdot k^{2\gamma+3})$, where $k := n - |I|$.
\end{lemma}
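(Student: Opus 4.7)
The plan is to bound the number of distinct neighborhoods $\{N(v) : v \in I\}$ (all subsets of the vertex cover $S$), and then apply the twin hypothesis, which yields $|I| \le (t+1)\cdot |\{N(v) : v \in I\}|$. Fix a closure ordering $\sigma$ of $G$ and, for each $v \in I$, decompose $N(v) = \p(v) \cup \q(v)$ into the prior and posterior parts relative to $\sigma$; note that $N(v) \subseteq S$ since $I$ is independent.

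My first step would be to apply \Cref{prop:fw81} (Frankl--Wilson) to the family of posterior sets $\{\q(v) : v \in I\}$. Since $I$ is independent, any two $u, v \in I$ are nonadjacent, so \Cref{obs-intersection-q-neighbors} gives $|\q(u) \cap \q(v)| \le \gamma - 1$. Taking $L = \{0, 1, \dots, \gamma-1\}$ in \Cref{prop:fw81} with ground set $S$ of size $k$ yields $\Oh(k^\gamma)$ distinct $\q$-sets. A short calculation rules out large posterior sets from collisions: if $v$ precedes $v'$ in $\sigma$ and $\q(v) = \q(v') = X$, then $X \subseteq N(v')$ forces $|\q(v) \cap N(v')| \ge |X|$, while \Cref{obs-intersection-q-neighbors} bounds the same quantity by $\gamma - 1$. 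Thus any $\q$-set of size at least $\gamma$ is realized by at most one vertex of $I$, and only $\q$-sets $X$ with $|X| < \gamma$ can be shared.

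The central task, and the main obstacle, is bounding, for each fixed $X$ with $|X| < \gamma$, the number of distinct prior parts $\p(v)$ among vertices of $I_X := \{v \in I : \q(v) = X\}$: the weak closure property only controls posterior neighborhoods, and the reverse of a closure ordering is in general not a closure ordering, so a symmetric Frankl--Wilson argument on $\p$-sets is unavailable. My strategy is to pass to the weakly $\gamma$-closed induced subgraph $G' := G[I_X \cup \bigcup_{v \in I_X} \p(v)]$, fix any closure ordering $\sigma'$ of $G'$, and reapply \Cref{prop:fw81} to the $\sigma'$-posterior sets of vertices of $I_X$ (which are still pairwise nonadjacent in $G'$), obtaining at most $\Oh(k^\gamma)$ distinct $\sigma'$-posterior sets. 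To translate this back into distinct $\p(v)$, I would assign each $v \in I_X$ a canonical maximal clique of $G'$ containing its $\sigma'$-prior part; by the clique enumeration result of Fox et al.\ there are at most $\Oh(3^{\gamma/3} \cdot k^{\Oh(1)})$ such maximal cliques, and the triple consisting of $X$, the $\sigma'$-posterior set, and the canonical clique should determine $N(v)$ up to a polynomial number of completions. Verifying that each $\sigma'$-prior part really is captured by a single maximal clique is the delicate step, and is where the $3^{\gamma/3}$ factor in the final bound is expected to enter. Multiplying the estimates $\Oh(k^\gamma)$ choices for $X$, $\Oh(k^\gamma)$ for the $\sigma'$-posterior set, $\Oh(3^{\gamma/3} \cdot k^{\Oh(1)})$ for the clique, and the twin factor $(t+1)$ then yields the stated bound $|I| \in t \cdot \Oh(3^{\gamma/3}\cdot k^{2\gamma+3})$.
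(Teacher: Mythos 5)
Your reduction from $|I|$ to counting neighborhood classes via the twin hypothesis, and your Frankl--Wilson bound of $\Oh(k^{\gamma})$ on the posterior sets, coincide with the paper's proof, and you correctly identify the real difficulty: bounding the number of distinct prior parts. However, your plan for that step has a genuine gap. Passing to $G' := G[I_X \cup \bigcup_{v \in I_X} P(v)]$ with a fresh closure ordering $\sigma'$ merely reproduces the same problem one level down: Frankl--Wilson again controls only the $\sigma'$-posterior sets, and the $\sigma'$-prior parts are exactly as unconstrained as the original $P(v)$ were. Your way out --- assigning each vertex a canonical maximal clique containing its $\sigma'$-prior part and asserting that $X$, the $\sigma'$-posterior set, and this clique determine $N(v)$ ``up to a polynomial number of completions'' --- is unsubstantiated on two counts. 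First, a prior part need not be a clique at all, so it need not be contained in any clique; the paper handles such vertices separately (the set $I_0$), noting that for each pair of nonadjacent $u, w \in S$ at most $\gamma - 1$ vertices of $I$ have both as prior neighbors (such a $v$ lies in $Q(u) \cap N(w)$, so \Cref{obs-intersection-q-neighbors} applies), which gives only $\Oh(\gamma k^2)$ vertices of this kind. Second, even when $P(v)$ is a clique contained in a maximal clique $C$ of $G[S]$, knowing $C$ leaves a priori $2^{|C|}$ candidate subsets and $|C|$ can be of order $k$, so ``polynomially many completions'' does not follow from anything you argue; this is precisely the crux of the paper's proof, which takes $u$ to be the \emph{first} vertex of $C \setminus P(v)$ in $\sigma$ and writes $P(v) = (C \cap P(u)) \cup B$ with $B = P(v) \cap Q(u)$ of size less than $\gamma$ by \Cref{obs-intersection-q-neighbors}, so that the triple $(C, u, B)$ determines $P(v)$ and the count becomes $\Oh(3^{\gamma/3} k^2) \cdot k \cdot \Oh(k^{\gamma})$. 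Without an argument of this type your completion count is simply unbounded.

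A further, smaller flaw: you invoke the clique-enumeration bound for $G'$, but $V(G')$ contains $I_X$, whose size is not bounded by any polynomial in $k$ (it is what we are trying to bound), so Fox et al.\ only give $\Oh(3^{\gamma/3} |V(G')|^2)$ maximal cliques, not $\Oh(3^{\gamma/3} k^{\Oh(1)})$; the paper avoids this by counting maximal cliques of $G[S]$ only, which has $k$ vertices. Finally, the stratification by $X = Q(v)$ is unnecessary overhead: once the prior classes are bounded globally, the number of $N$-classes is at most the product of the number of $Q$-classes and the number of $P$-classes, which is how the paper assembles the final bound $\Oh(3^{\gamma/3} k^{2\gamma+3})$.
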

\begin{proof}
  Let $S := V(G) \setminus I$.
  We say that two vertices $v, v' \in I$ are \emph{$P$-equivalent}, \emph{$Q$-equivalent}, and \emph{$N$-equivalent} if $P(v) = P(v')$, $Q(v) = Q(v')$, and $N(v) = N(v')$, respectively.
  Let~$\mathcal{P}$,`$\mathcal{Q}$, and $\mathcal{N}$ denote the collection of $P$-equivalence, $Q$-equivalence, and $N$-equivalence classes, respectively. We extend the notation of~$P$,~$Q$, and~$N$ to equivalence classes~$A$ by defining~$P(A):=P(v)$,~$Q(A):=Q(v)$, and~$N(A):=N(v)$ for some~$v\in A$.
  Since there is at most one $N$-equivalence class for every $P$-equivalence class and $Q$-equivalence class, we have~$|\mathcal{N}| \le |\mathcal{P}| \cdot |\mathcal{Q}|$.
  By the assumption that there are at most $t$ vertices in each $N$-equivalence class, we also have $|I| \le t \cdot |\mathcal{N}|$.
  Thus, it suffices to show suitable bounds on $|\mathcal{P}|$ and~$|\mathcal{Q}|$.

  First, we prove that $|\mathcal{Q}| \in \Oh(k^{\gamma})$, using the result of Frankl and Wilson (\Cref{prop:fw81}~\cite{FW81}).
  To do so, we associate each $Q$-equivalence class $A$ with the set $Q(A)$. Since~$I\supseteq A$ is an independent set, $Q(A) \subseteq S$.
  Moreover, for two distinct $Q$-equivalence classes $A$ and $A'$, we have $|Q(A) \cap Q(A')| < \gamma$ by Observation~\ref{obs-intersection-q-neighbors}, and equivalently, $|Q(A) \cap Q(A')| \in L$ for $L := [\gamma - 1]$.
  Consequently, by \Cref{prop:fw81}, we have $|\mathcal{Q}| \in \Oh(|S|^{|L|}) = \Oh(k^{\gamma})$.

  Next, we bound the size of $\mathcal{P}$.
  Let $I_0 := \{ v \in I \mid \exists u, w \in P(v) \colon uw \notin E(G) \}$ be the set of vertices in $I$ with nonadjacent prior neighbors.
  By the definition of weak $\gamma$-closure, there are at most $\gamma- 1$ vertices of $I_0$ for every pair of nonadjacent vertices in $S$.
  Thus, we have~$|I_0| < \gamma \binom{|S|}{2} \in \Oh(\gamma k^2)$.

  Let $I_1 := I \setminus I_0$ and let $\mathcal{P}_1$ be the collection of $P$-equivalence classes in $I_1$.
  Note that for every $A \in \mathcal{P}_1$, its neighborhood $P(A)$ is a clique.
  Since a weakly $\gamma$-closed graph on~$n$ vertices has $\Oh(3^{\gamma / 3} n^2)$ maximal cliques \cite{FRSWW20}, there are $\Oh(3^{\gamma / 3} k^2)$ equivalence classes~$A$ such that~$P(A)$ constitutes a maximal clique in $G[S]$.
  Consider an equivalence class~$A$ such that~$P(A) \subset C$ for some maximal clique $C$ in~$G[S]$.
  We will show that there are $k^{\Oh(\gamma)}$~such equivalence classes.
  Let $u$ be the first vertex of $C \setminus P(A)$ in the closure ordering~$\sigma$.
  Since $P(A) \subset C \subseteq N(u) = P(u) \cup Q(u)$, we have $P(A) = (P(A) \cap P(u)) \cup (P(A) \cap Q(u))$.
  As~$P(A) \cap P(u) = C \cap P(u)$ by the choice of $u$, we can rewrite $P(A) = (C \cap P(u)) \cup B$, where~$B := P(A) \cap Q(u)$.
  Thus, there is at most one equivalence class of $\mathcal{P}_1$ for every maximal clique $C$ in $G[S]$, vertex~$u \in S$, and vertex subset $B \subseteq S$, and thereby, we have $|\mathcal{P}_1| \in \Oh(3^{\gamma / 3} k^2 \cdot k \cdot b)$, where $b$ denotes the number of choices for $B$.
  Observe that~$P(A) = P(v)$ for some vertex $v \in I_1$ and thus that $B = Q(u) \cap P(v) \subseteq Q(u) \cap N(v)$.
  It follows that $|B| \le |Q(u) \cap N(v)| < \gamma - 1$ by Observation~\ref{obs-intersection-q-neighbors}, and hence $b \in \Oh(k^{\gamma})$ and $|\mathcal{P}_1| \in \Oh(3^{\gamma/3}\cdot k^3 \cdot k^{\gamma})= \Oh(3^{\gamma/3}\cdot k^{\gamma+3})$.
  Overall, we have $|\mathcal{P}| \le (|I_0| + |\mathcal{P}_1|) \in \Oh(3^{\gamma/3}\cdot k^{\gamma+3})$. The total number of~$N$-equivalence classes is thus at most $|\mathcal{Q}|\cdot |\mathcal{P}|\in \Oh(3^{\gamma/3}\cdot k^{2\gamma+3})$.
\end{proof}
\subsection{Capacitated Vertex Cover}
The first problem to which we apply \Cref{thm:bound-sets-p-and-q} is \textsc{Capacitated Vertex Cover}.

\problemdef {Capacitated Vertex Cover} {A graph $G$, a capacity function
  $\cp \colon V(G) \to\mathds{N}$, and $k \in \mathds{N}$.} {Is there a set $S$ of at most
  $k$ vertices and a function $f$ mapping each edge of~$E(G)$ to one of its endpoints in~$S$ such that
  $|\{ e \in E(G) \mid f(e) = v \}| \le \cp(v)$ for all $v \in S$?}  
  
  Cygan et
al.~\cite{CGH17} showed that \textsc{Capacitated Vertex Cover} admits a kernel with
$\Oh(k^{d + 1})$~vertices. Furthermore, they proved that this kernel is essentially tight:
a kernel with $\Oh(k^{d-\epsilon})$~vertices would imply \PHC~\cite{CGH17}.
We will show that the same reduction rule also gives a kernel in graphs with bounded weak closure. One may view this result as a way of showing that the rule is more powerful than what was previously known.

The kernel uses the following rule.
\begin{rrule}[\cite{CGH17}]
  \label{rr:twincrown}
  If $S \subseteq V(G)$ is a subset of twin vertices with a common neighborhood $N(S)$ such that $|S| = k + 2 \ge |N(S)|$, then remove a vertex with minimum capacity in $S$ from $G$, and decrease all the capacities of vertices in $N(S)$ by one.
\end{rrule}

We omit the proof for the correctness of \Cref{rr:twincrown}, referring to Cygan et al.~\cite[Lemma 20]{CGH17}.
One can easily verify that \Cref{rr:twincrown} does not increase the weak~$\gamma$-closure.
In the following theorem, we show that \Cref{rr:twincrown} indeed gives us a kernel with $k^{\Oh(\gamma)}$~vertices.

\begin{theorem}
  \label{thm:capvc}
  \CapVC{} has a kernel of size $k^{\Oh(\gamma)}$.
\end{theorem}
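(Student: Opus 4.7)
My plan is to combine \Cref{rr:twincrown} with \Cref{thm:bound-sets-p-and-q}. First, I would compute a $2$-approximate minimum vertex cover $S$ of $G$, e.g.\ by taking both endpoints of a maximal matching. If $|S|>2k$, then since any capacitated vertex cover is also a standard vertex cover, the instance is a no-instance and I would output a trivial equivalent instance; otherwise $|S|\le 2k$ and $I:=V(G)\setminus S$ is an independent set satisfying $N(v)\subseteq S$, and in particular $|N(v)|\le 2k$, for every $v\in I$. I would then apply \Cref{rr:twincrown} exhaustively; this preserves weak $\gamma$-closure as already noted in the excerpt.

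The crux of the proof will be to show that after exhaustive application of the rule, every $N$-equivalence class $T\subseteq I$ has size at most $k+1$. Suppose for contradiction that $|T|\ge k+2$ and consider any $k+2$ vertices of $T$ sharing the common neighborhood $N(T)$. If $|N(T)|\le k+2$, then \Cref{rr:twincrown} would still apply, contradicting exhaustiveness. Hence $|N(T)|\ge k+3>k$; but then every $v\in T$ has $|N(v)|>k$ and must lie in every capacitated vertex cover of size at most $k$, for otherwise all of $N(v)$ would be forced into the cover, pushing its size above~$k$. This puts all of $T$ into the solution, forcing $|T|\le k$, a contradiction. Therefore $|T|\le k+1$ in all cases (or we can already conclude the instance is a no-instance).

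With this structural claim in hand, I would invoke \Cref{thm:bound-sets-p-and-q} on the independent set $I$ together with its vertex cover $S$ of size at most $2k$, using $t := k+1$. This yields $|I|\in (k+1)\cdot \Oh(3^{\gamma/3}(2k)^{2\gamma+3}) = k^{\Oh(\gamma)}$, and hence $|V(G)|=|S|+|I|\le 2k + k^{\Oh(\gamma)} = k^{\Oh(\gamma)}$, as required; everything runs in polynomial time. The hard part will be precisely the two-case structural argument above: one case is dispatched directly by the reduction rule, while the other rests on the capacity-independent observation that any vertex of degree exceeding $k$ must lie in every vertex cover of size $k$, which is what lets us rule out large twin classes with large common neighborhoods even though the rule itself is not applicable in that regime.
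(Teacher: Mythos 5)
Your proposal is correct and follows essentially the same route as the paper: apply \Cref{rr:twincrown} exhaustively to bound the size of twin classes and then invoke \Cref{thm:bound-sets-p-and-q} to bound the independent part. The only (harmless) difference is that the paper analyzes a reduced yes-instance directly with respect to a hypothetical solution $S$ of size at most $k$, so every $v\notin S$ has $|N(v)|\le k$ and the rule alone rules out twin classes of size $k+2$, whereas you work with a computed $2$-approximate vertex cover of size at most $2k$ and therefore add the (correct) observation that vertices of degree exceeding $k$ must belong to any solution to dispose of classes with large common neighborhoods.
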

\begin{proof}
  We show that a yes-instance which is reduced with respect to \Cref{rr:twincrown} has
  size~$k^{\Oh(\gamma)}$. Let~$S$ be a capacitated vertex cover of size at most~$k$
  of~$(G,\cp,k)$. Let~$I:=V(G)\setminus S$. By definition,~$I$ is an independent set
  and~$N(v)\subseteq S$ for all~$v\in I$. Moreover, since~$(G,\cp,k)$ is reduced with
  respect to~\Cref{rr:twincrown} there is no set of~$k+2$ vertices in~$I$ that have the
  same neighborhood. Hence,~$I$ fulfills the condition of \Cref{thm:bound-sets-p-and-q} with~$t=k+2$. Thus,~$|I|\in k\cdot k^{\Oh(\gamma)}$ which implies~$|V(G)|=|S|+|I|\in k^{\Oh(\gamma)}$.
\end{proof}

We also show that this kernel is
essentially tight even if~$\gamma$ is replaced by~$c$.

\begin{theorem}
\label{thm:capvc-lb-for-c}
For~$c \ge 4$, \CapVC{} has no compression of size~$\Oh(k^{\frac{c-1}{2}-\epsilon})$ unless \PHC.
\end{theorem}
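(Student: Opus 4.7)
The plan is to leverage the kernelization lower bound of Cygan et al.~\cite{CGH17} for \CapVC{} on $d$-degenerate graphs and transfer it to the $c$-closed setting. Cygan et al. rule out compressions of size $\Oh(k^{d-\epsilon})$ by constructing an appropriate OR-cross-composition (or weak $d$-cross-composition) from an NP-hard source problem into \CapVC{} on $d$-degenerate graphs. The key claim to establish is that this construction, possibly after a small modification, yields graphs that are $(2d+1)$-closed, so that setting $d = (c-1)/2$ in their lower bound gives exactly the desired bound of $\Oh(k^{(c-1)/2 - \epsilon})$ for $c$-closed graphs.

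First, I would inspect Cygan et al.'s composition and, for every pair of nonadjacent vertices in the output, bound the size of their common neighborhood. In typical such constructions the vertices split into a small number of layers (selector vertices, capacity vertices, and encoding vertices), and common neighbors of a nonadjacent pair come from at most two adjacent layers, each of which is $d$-degenerate. Each such layer can contribute at most $d$ common neighbors to a single pair, yielding closure number at most $2d + 1$.

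Second, if the raw construction does not directly respect this closure bound, I would apply a patching step: for every pair of nonadjacent vertices whose common neighborhood has size at least $c$, add a direct edge between them and increase the capacity of one of the endpoints by one to absorb the new edge. This does not change the optimum capacitated vertex cover value, preserves $k$ up to constant factors, and, after saturation, produces a $c$-closed graph. Combined with the OR-semantics of the composition, any compression of size $\Oh(k^{(c-1)/2 - \epsilon})$ for \CapVC{} on $c$-closed graphs would then imply a compression of the source NP-hard problem whose size beats the $\nPHC$ barrier of Dell and van Melkebeek, completing the lower bound.

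The main obstacle is the closure analysis in the first step, together with the verification that the capacity-adjustment trick in the second step does not introduce spurious yes-instances or change the target size~$k$. The factor $2$ in the exponent is essentially inherent to this transfer, since it arises from the two-layer origin of common neighborhoods in the construction. The restriction $c \ge 4$ is consistent with this setup: it corresponds to $d \ge 3/2$ in the source lower bound and ensures that the exponent $(c-1)/2 - \epsilon$ remains meaningfully above~$1$, which is where the separation from trivial kernelization bounds becomes interesting.
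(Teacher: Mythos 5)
There is a genuine gap here, and it sits exactly where the real work of the theorem lies. Your first step rests on the claim that in Cygan et al.'s $d$-degenerate construction a nonadjacent pair can have at most $\Oh(d)$ common neighbors because the relevant ``layers'' are $d$-degenerate. Degeneracy does not bound common neighborhoods at all: $K_{2,n}$ is $2$-degenerate, yet its two nonadjacent degree-$n$ vertices have $n$ common neighbors (this incomparability of degeneracy and closure is stressed in the introduction of this very paper). In set-cover-style lower-bound constructions this is not a hypothetical worry --- two ``element'' vertices are typically nonadjacent and share all the ``set'' vertices containing both of them --- so there is no reason the degenerate construction is $\Oh(d)$-closed, and no inspection argument of the kind you sketch can be expected to deliver the $(2d+1)$ bound.

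Your second step, the patching trick, is also not correctness-preserving as stated. Every edge you add between a pair with many common neighbors must itself be covered and mapped to an endpoint, so adding edges can increase the optimum, and bumping a capacity by one does not ``absorb'' the new edge in any controlled way: the extra unit of capacity can instead be spent on a different edge, creating spurious yes-instances; moreover the process can cascade. Turning one side of the incidence structure into a clique is indeed the right instinct (it is what the paper does: it reduces directly from \textsc{$\lambda$-Exact Set Cover}, which has no compression of size $\Oh(k^{\lambda-\epsilon})$ unless \PHC{}, builds two copies $U^1,U^2$ of the universe forming a clique, attaches capacity-$0$ leaves to force $U^1\cup U^2$ into any solution, and obtains a $(2\lambda+1)$-closed graph with $k'\in\Oh(k)$), but the equivalence only holds because the capacities are tuned so that the total capacity of the forced solution equals the number of edges, which supports a tight counting argument, by induction along the clique, showing that the selected sets form an exact cover. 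That calibration and counting argument is the substance of the proof; your proposal replaces it with an unjustified ``does not change the optimum'' assertion, so the lower bound does not follow from what you have written.
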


\iflong

\begin{proof}
We will show the theorem by a reduction from \textsc{$\lambda$-Exact Set Cover}. 

\problemdef
{$\lambda$-Exact Set Cover}
{A set family~$\mathcal{F}$ over an universe $U:=\{u_1, u_2, \ldots , u_{\lambda k}\}$ such that each~$F\in\mathcal{F}$ has size~$\lambda$, and $k\in\mathds{N}$.}
{Is there a subfamily~$\mathcal{S} \subseteq \mathcal{F}$ of size exactly~$k$ such that~$\bigcup_{S\in\mathcal{S}}=U$?}

For any~$\lambda \ge 3$, \textsc{$\lambda$-Exact Set Cover} does not have a compression of size~$\Oh(k^{\lambda-\epsilon})$ unless \PHC~\cite{DM12,HW12}. 
Let~$(U,\mathcal{F},k)$ be an instance of \textsc{$\lambda$-Exact Set Cover}. 
We will construct a~$(2\lambda+1)$-closed graph~$G$ as follows: 
The vertex set~$V(G)$ consists of one copy of~$\mathcal{F}$, and two copies~$U^1$ and~$U^2$ of~$U$. 
Furthermore, one leaf-vertex is attached to each vertex in~$U^1\cup U^2$.  
The edges between~$\mathcal{F}$ and both copies~$U^1$ and~$U^2$ of~$U$ represent the incidence graph of the instance~$(U,\mathcal{F},k)$.
Furthermore, we add edges such that~$U^1\cup U^2$ forms a clique in~$G$.
Note that~$G$ contains exactly~$2\lambda k +2\lambda|\mathcal{F}|+ \lambda k (2\lambda k-1)$ edges. 
Next, we set the capacity of each leaf-vertex which is attached to a vertex in~$U^1\cup U^2$ to zero, and the capacity of each vertex in~$\mathcal{F}$ to~$2\lambda$. 
For an element~$u_i\in U$, we denote by~$z_i$ the number of sets in~$\mathcal{F}$ containing element~$u_i$.
We set the capacity of the~$i$-th vertex~$u^1_i$ of~$U^1$ to~$z_i+2\lambda k-i$ and the capacity of the~$i$-th vertex~$u^2_i$ of~$U^2$ to~$z_i+i-1$. 
Finally, we set~$k':=2\lambda k+k$.
Since~$\deg(F)=2\lambda$ for each~$F\in\mathcal{F}$, each leaf-vertex has degree one, and~$U^1\cup U^2$ is a clique, we observe that~$G$ is indeed~$2\lambda+1$-closed. 

To avoid heavy notation we will not give an exact definition of the mapping function~$f$. 
Instead, we say that an edge~$uv$ is \emph{mapped} to one of its endpoints~$u$.
Formally this means that~$f(uv)=u$.

Let~$\mathcal{S}$ be an exact set cover. We prove that~$X:=\mathcal{S}\cup U^1\cup U^2$ is a capacitated vertex cover of~$G$. 
Since~$N[U^1\cup U^2]=V(G)$, the set~$X$ is a vertex cover of size exactly~$2\lambda k+k$ of~$G$. 
It remains to show that there is a mapping from each edge to one of its endpoints in~$X$. 
Edges between vertices in~$\mathcal{F}$ and~$U^1$ are mapped to the vertex in~$\mathcal{F}$ if and only if this vertex is contained in~$\mathcal{S}$. 
Otherwise, this edge is mapped to the corresponding vertex in~$U^1$. 
We map edges between~$\mathcal{F}$ and~$U^2$ analogously. 
Edges between~$U^1$ and~$U^2$ are mapped to the vertex in~$U^1$ and edges where one endpoint is a leaf-vertex are mapped to the other endpoint of that edge.
Furthermore, an edge between two vertices in~$U^1$ is mapped to the vertex with lower index and similar, an edge between two vertices in~$U^2$ is mapped to the vertex with higher index. 
Hence, each vertex in~$\mathcal{S}$ covers exactly~$2\lambda$ edges and since~$\mathcal{S}$ is an exact cover, exactly one edge incident with a vertex in~$U^1\cup U^2$ is covered by~$\mathcal{S}$. 
Thus, the~$i$-th vertex~$u^1_i\in U^1$ covers exactly~$\lambda k-i$ edges to other vertices in~$U^1$, exactly one edge to a leaf-vertex, exactly~$\lambda k$ edges to vertices in~$U^2$, and exactly~$z_i-1$ edges to vertices in~$\mathcal{F}$. 
These are exactly~$z_i+2\lambda k-i$ many. 
Similarly, the~$i$-th vertex~$u^2_i\in U^2$ covers exactly~$i-1$ edges to other vertices in~$U^2$, exactly one edge to a leaf-vertex, and exactly~$z_i-1$ edges to vertices in~$\mathcal{F}$. 
These are exactly~$z_i+i-1$ many. 
Hence,~$\mathcal{S}\cup U^1\cup U^2$ is indeed a capacitated vertex cover of~$G$. 

Conversely, suppose that~$G$ has a capacitated vertex cover~$X$ of size~$2\lambda k+k$.
Since the capacity of each leaf-vertex is zero, we observe that~$U^1\subseteq X$, and~$U^2\subseteq X$.
Hence, for the set~$\mathcal{S}:=\mathcal{F}\cap X$ we can assume that~$|\mathcal{S}|=k$. 

Recall that for an element~$u_i$ the number~$z_i$ denotes the number of sets in~$\mathcal{F}$ containing~$u_i$. 
Hence,~$\sum_{i=1}^{\lambda k}z_i=\lambda|\mathcal{F}|$. 
Furthermore, since each vertex in~$\mathcal{F}$ has capacity exactly~$2\lambda$, we observe that the total capacity of all vertices in~$X$ is~$2\lambda|\mathcal{F}|+2\lambda k+ \lambda k (2\lambda k-1)$. 
Since this matches the number of edges in~$G$, we conclude the following.

\begin{observation}
\label{obs-total-capcity-of-x}
The number of edges mapped to a vertex in~$X$ is equal to its capacity.
In particular, each edge with one endpoint in~$S\in\mathcal{S}$ is mapped to~$S$.
\end{observation}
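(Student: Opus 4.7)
The plan is to prove the observation by a double-counting argument showing that the total capacity of $X$ equals $|E(G)|$, from which the tightness at every vertex follows. I would first compute the total capacity of~$X = \mathcal{S} \cup U^1 \cup U^2$ as the sum of three contributions: the $k$ vertices in $\mathcal{S}$ contribute $2\lambda k$ (since every vertex of $\mathcal{F}$ has capacity~$2\lambda$), the vertices of~$U^1$ contribute $\sum_{i=1}^{\lambda k}(z_i + 2\lambda k - i)$, and the vertices of~$U^2$ contribute $\sum_{i=1}^{\lambda k}(z_i + i - 1)$. Using $\sum_{i=1}^{\lambda k} z_i = \lambda|\mathcal{F}|$ (each set of~$\mathcal{F}$ contains exactly $\lambda$ elements) and noticing that the $-i$ and $+i$ terms telescope, the sum simplifies to $2\lambda k + 2\lambda|\mathcal{F}| + \lambda k(2\lambda k - 1)$, which is precisely the number of edges of~$G$ recorded after the construction.

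Next, since $X$ is a capacitated vertex cover of $G$ with mapping $f$, every edge is mapped to some vertex of~$X$, so $\sum_{v \in X} |f^{-1}(v)| = |E(G)|$, and for every $v \in X$ the capacity constraint gives $|f^{-1}(v)| \le \cp(v)$. Together with $\sum_{v \in X} \cp(v) = |E(G)|$ from the preceding paragraph, each of these upper bounds must be attained: for every $v \in X$ the number of edges mapped to $v$ equals $\cp(v)$, which is the first part of the observation.

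For the \emph{in particular} claim, I would observe that any $S \in \mathcal{S} \subseteq \mathcal{F}$ has $\deg(S) = 2\lambda$ by construction (it is adjacent to $\lambda$ vertices of $U^1$ and $\lambda$ of $U^2$), and $\cp(S) = 2\lambda$. By the preceding tightness, exactly $2\lambda$ edges are mapped to~$S$, and these must be edges incident to~$S$; since $S$ has only~$2\lambda$ incident edges to begin with, every edge with an endpoint at~$S$ is mapped to~$S$. The only non-routine step in the plan is the algebraic identity between the total capacity and~$|E(G)|$, and once that identity is verified the rest is immediate from the definition of a capacitated vertex cover.
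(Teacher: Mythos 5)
Your proposal is correct and follows essentially the same route as the paper: the paper also sums the capacities of $X=\mathcal{S}\cup U^1\cup U^2$ (using $\sum_i z_i=\lambda|\mathcal{F}|$ and $|\mathcal{S}|=k$), notes that the total $2\lambda k+2\lambda|\mathcal{F}|+\lambda k(2\lambda k-1)$ equals $|E(G)|$, and concludes tightness at every vertex by the same double-counting; your explicit argument for the ``in particular'' part (capacity of $S$ equals its degree $2\lambda$, so all $2\lambda$ incident edges must be mapped to $S$) is exactly the reasoning the paper leaves implicit.
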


In the following, we prove inductively over the vertex indices~$i\in[\lambda k]$ \textbf{a)} that element~$u_i$ is covered exactly once, and \textbf{b)} that the edge~$u^1_iu^1_j$ in~$U^1$ is mapped to the vertex with lower index and that the edge~$u^2_iu^2_j$ in~$U^2$ is mapped to the vertex with higher index.
Afterwards, by setting~$i:=\lambda k$ we can conclude that~$\mathcal{S}$ is an exact cover for~$U$.
Assume towards a contradiction of \textbf{a)} that element~$u_i$ is not covered exactly once by~$\mathcal{S}$. 

\textbf{Case~1:} Element~$u_i$ is not covered by~$\mathcal{S}$. 
We give a lower bound on the number of edges in~$G$ which have to be mapped to vertex~$u^2_i$.
Since~$u_i$ is not covered by~$\mathcal{S}$, all~$z_i$ edges with the endpoint~$u^2_i$ and the other endpoint in~$\mathcal{F}$ have to be mapped to~$u^2_i$. 
Furthermore, since each leaf-vertex has capacity zero, also the edge from the leaf-vertex attached to~$u^2_i$ has to be mapped to~$u^2_i$. 
First, assume that~$i=1$. 
Then, at least~$z_1+1$ edges are mapped to vertex~$u^2_1$, contradicting the fact that vertex~$u^2_1$ has capacity~$z_1$.
Second, assume that~$i\ge 2$. 
According to the induction hypothesis of \textbf{b)} for each~$j<i$ the edge~$u^2_iu^2_j$ in~$U_2$ is mapped to~$u^2_i$.
Hence, overall at least~$z_i+i$ edges are mapped to~$u^2_i$.
This is not possible since vertex~$u^2_i$ has capacity~$z_i+i-1$.

\textbf{Case~2:} Element~$u_i$ is covered at least twice by~$\mathcal{S}$. 
We upper bound the number of edges in~$G$ which can be mapped to vertex~$u^1_i$. 
Since each leaf-vertex has capacity zero, the edge from the leaf-vertex attached to~$u^1_i$ is mapped to vertex~$u^1_i$.
Furthermore, all edges between~$u^1_i$ and vertices of~$U^2$ can be mapped to~$u^1_i$.
Also, since~$u_i$ is covered at least twice by~$\mathcal{S}$, according to Observation~\ref{obs-total-capcity-of-x}, we observe that at most~$z_i-2$ edges between~$\mathcal{F}$ and vertex~$u^1_i$ can be mapped to~$u^1_i$. 
First, assume that~$i=1$. 
Observe that all~$\lambda k-1$ many edges with one endpoint~$u^1_1$ and the other endpoint in~$U^1$ can be mapped to~$u^1_1$. 
Hence, at most~$1+\lambda k+z_1-2+\lambda k-1=z_1+2\lambda k -2$ edges can be mapped to~$u^1_1$. 
According to Observation~\ref{obs-total-capcity-of-x} this is not possible, since vertex~$u^1_1$ has capacity~$z_1+2\lambda k-1$. 
Second, assume that~$i\ge 2$. 
According to the induction hypothesis of \textbf{b)} for each~$j<i$ the edge~$u^1_iu^1_j$ in~$U_1$ is mapped to~$u^1_j$.
Hence, at most~$\lambda k-i$ edges within~$U^1$ are mapped to vertex~$u^1_i$. 
Thus, overall at most~$1+\lambda k+z_i-2+\lambda k-i=z_i+2\lambda k-i-1$ many edges are mapped to~$u^1_i$.
Since the capacity of~$u^1_i$ is exactly~$z_i+2\lambda k -i$ this contradicts Observation~\ref{obs-total-capcity-of-x}. 

Hence, element~$u_i$ is covered exactly once. 
Next, we show that this implies \textbf{b)}. 
To match the capacity~$z_i+2\lambda k -i$ of vertex~$u^1_i$ the edge from the leaf-vertex attached to~$u^1_i$, all edges between~$u^1_i$ and a vertex of~$U_2$, and because of \textbf{a)} exactly~$z_i-1$ edges between vertex~$u^1_i$ and a vertex of~$\mathcal{F}$ have to be mapped to vertex~$u^1_i$. 
These are exactly~$z_i+\lambda k$ many. 
By induction hypothesis of \textbf{b)} we know that all edges of the form~$u^1_iu^1_j$ with~$j<i$ are mapped to vertex~$u^1_j$. 
Hence, exactly~$\lambda k -i$ edges within~$U^1$ are not mapped yet. 
These edges have the form~$u^1_iu^1_j$ for~$j\in[i+1,\lambda k]$. 
Because of Observation~\ref{obs-total-capcity-of-x} all these edges have to be mapped to vertex~$u^1_i$.
By similar arguments it follows that all edges of the form~$u^2_iu^2_j$ for~$j\in[i+1,\lambda k]$ are mapped to~$u^2_j$. 
Hence, \textbf{b)} is proved. 
As mentioned above, by setting~$i=\lambda k$ we conclude from \textbf{a)} that~$\mathcal{S}$ is an exact cover for~$(U,\mathcal{F},k)$.

Observe that since~$\lambda$ is a constant we obtain~$k'\in\Oh(k)$. 
Thus, it follows from the result of Hermelin and Wu~\cite{HW12} that if \CapVC{} admits a kernel of size~$\Oh(k^{(c-1)/2-\epsilon})$ for some~$\epsilon >0$ in~$c$-closed graphs, then \textsc{$c$-Exact Set Cover} admits a kernel of size~$\Oh(k^{c -\epsilon})$, implying that \PHC~\cite{DM12,HW12}.
\end{proof}
\fi

\subsection{Connected Vertex Cover}

We now provide kernels for \ConVC{}, a well-studied variant of \textsc{Vertex Cover} which notoriously does not admit a polynomial kernel when parameterized~$k$~\cite{DLS14}.

\problemdef
{Connected Vertex Cover}
{A graph $G$ and $k \in \mathds{N}$.}
{Is there a set $S$ of at most $k$ vertices such that $S$ covers all edges of $G$ and $G[S]$ is connected?}

We will show that by applying~\Cref{thm:bound-sets-p-and-q} we obtain a kernel of size~$k^{\Oh(\gamma)}$. This kernel is complemented by a polynomial kernel for the parameter~$k+c$. Thus, \textsc{Connected Vertex Cover} is very different from \CapVC{} concerning the kernelization complexity on~$c$-closed graphs. \iflong
\subparagraph{A Polynomial Kernel in Weakly Closed Graphs.}
\fi
We may use the following known rule.  

\begin{rrule}[\cite{CGH17}]
  \label{rr:twinset}
  If $S \subseteq V(G)$ is a set of at least two twin vertices with a common neighborhood $N(S)$ such that $|S| > |N(S)|$, then remove one vertex~$v$ of $S$ from $G$.
\end{rrule}

After the exhaustive application of~\Cref{rr:twinset} we have, again by
\Cref{thm:bound-sets-p-and-q}, that the size of yes-instances is~$k^{\Oh(\gamma)}$, which
bounds the size of the independent set in terms of the vertex cover size and the size of a
largest twin set. The proof is completely analogous to the one of~\Cref{thm:capvc}.
\begin{theorem}
\label{thm-con-vc-kernel-gamma}
  \ConVC{}{} admits a kernel of size $k^{\Oh(\gamma)}$.
\end{theorem}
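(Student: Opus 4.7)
My plan is to follow the template of the proof of \Cref{thm:capvc} almost verbatim, simply replacing \Cref{rr:twincrown} with \Cref{rr:twinset}. Suppose $(G,k)$ is an instance reduced under \Cref{rr:twinset}; I want to show that if it is a yes-instance then $|V(G)| \in k^{\Oh(\gamma)}$. So I fix a connected vertex cover $S$ of size at most $k$ and set $I := V(G) \setminus S$. By definition of a vertex cover, $I$ is independent and $N(v) \subseteq S$ for every $v \in I$, which is the setup required by \Cref{thm:bound-sets-p-and-q}.

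The key step is to bound the size of twin classes inside $I$. Take any maximal set $T \subseteq I$ whose vertices share a common neighborhood $N(T)$. Since $N(T) \subseteq S$, we have $|N(T)| \le |S| \le k$. Because the instance is reduced under \Cref{rr:twinset}, we must have $|T| \le |N(T)| \le k$, so every vertex of $I$ shares its neighborhood with at most $k - 1$ other vertices of $I$. Applying \Cref{thm:bound-sets-p-and-q} with $t := k$ yields $|I| \in k \cdot \Oh(3^{\gamma/3} \cdot k^{2\gamma + 3}) = k^{\Oh(\gamma)}$, and combining with $|S| \le k$ gives $|V(G)| = |S| + |I| \in k^{\Oh(\gamma)}$ as required.

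Two auxiliary sanity checks are needed but should not pose real difficulty. First, \Cref{rr:twinset} must be safe for \ConVC{}; this follows from the correctness analysis of Cygan et al.~\cite{CGH17}, so I would just cite it. Second, the rule only deletes vertices, so it never increases the weak closure~$\gamma$, which means the bound from \Cref{thm:bound-sets-p-and-q} remains applicable after each application. I expect the most delicate point to be the safety check for the \emph{connected} variant, since removing a twin from $I$ could a priori interfere with connectivity of the cover; however, the removed vertex lies in the independent side, so its deletion leaves the induced subgraph on $S$ untouched and correctness follows directly from the known analysis. Beyond that, the proof is a routine instantiation of the general strategy laid out after \Cref{thm:bound-sets-p-and-q}.
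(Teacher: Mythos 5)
Your proposal is correct and follows exactly the paper's route: the paper likewise applies Reduction Rule~\ref{rr:twinset} exhaustively and then argues, completely analogously to \Cref{thm:capvc}, that \Cref{thm:bound-sets-p-and-q} (with the twin-class size bounded by the vertex cover size) yields $|V(G)|\in k^{\Oh(\gamma)}$, citing Cygan et al.\ for the rule's safety. Your bound $t=k$ versus the paper's implicit analogue is immaterial, and your observations that the rule only deletes vertices (so $\gamma$ does not increase) match the paper's remarks.
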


This kernel is essentially tight, because there is no kernel of size~$k^{o(d)}$~\cite{CGH17}.

\iflong
\subparagraph{A Polynomial Kernel for $k+c$.}

In order to obtain a polynomial kernel for \ConVC{}{} parameterized by~$k+c$, let us introduce an ``annotated'' version of \ConVC{}{} defined as follows.
In a \emph{red and white graph} $G$, the vertex set is partitioned into two sets: the red and white vertices, which we denote by $V_R(G)$ and by~$V_W(G)$, respectively.
The annotated version of \ConVC{}{} imposes the additional constraint that all the red vertices must be included into the solution.

\iflong
\problemdef
{Annotated Connected Vertex Cover}
{A red and white graph $G$ and $k \in \mathds{N}$.}
{Is there a set $S \supseteq V_R(G)$ of at most $k$ vertices such that $G[S]$ is connected and at least one endpoint of $e$ is included in $S$ for each edge $e \in E(G)$?}
\fi

The first reduction rule takes care of several trivial cases; the correctness is obvious.
\begin{rrule}
  \label{rr:isolated}
  Remove all isolated white vertices.

  If~$G$ has two connected components that contain
  edges then return No.

  If~$G$ has two connected components with red vertices, then return No. 

  If~$G$ has a solution of size one, then return Yes. 
  
\end{rrule}

We say that a vertex $v \in V(G)$ is \emph{simplicial} if its neighborhood forms a clique.
In particular, note that any degree-one vertex is simplicial.
We remove simplicial vertices in the next rule.

\begin{rrule}
  \label{rr:simplicial}
  If there is a simplicial vertex $v \in V(G)$, then we do as follows:
  \begin{itemize}
    \item
      If $v \in V_R(G)$, then decrease $k$ by 1.
    \item
      If $v \in V_W(G)$ or $\deg_G(v) = 1$, then color all the vertices in $N_G(v)$ red.
    \item
      Remove $v$.
  \end{itemize}
\end{rrule}

We assume that $|V(G)| \ge 3$ and that $G$ is connected.

\begin{lemma}
\label{lemma-simplical-vertices}
  \Cref{rr:simplicial} is correct.
\end{lemma}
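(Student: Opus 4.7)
The plan is to verify correctness by a case analysis on the color of $v$ and, when relevant, on whether $\deg_G(v) = 1$, showing in each case that $(G, k)$ and the reduced instance are simultaneously yes-instances. The central combinatorial lever is the simpliciality of $v$: because $N_G(v)$ is a clique, every vertex cover must contain at least $|N_G(v)| - 1$ neighbors of $v$ (to cover the clique edges among them), and moreover $G - v$ remains connected, since any path that previously traversed $v$ can be shortcut by a single edge inside the clique. In particular, \Cref{rr:isolated} is not spuriously triggered by the removal of $v$.

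For white $v$ I would first establish the exchange property that any solution $S$ of $(G, k)$ can be transformed into a solution $S^{\ast}$ of the same size with $v \notin S^{\ast}$ and $N_G(v) \subseteq S^{\ast}$. Since at most one neighbor of $v$ can lie outside any vertex cover, this splits into two subcases: if $N_G(v) \subseteq S$ and $v \in S$, simply drop $v$; if exactly one $u_j \in N_G(v)$ lies outside $S$, swap $v$ out for $u_j$. Both modifications preserve the vertex-cover property, and they preserve connectivity of the induced subgraph because $v$'s neighbors in $S$ lie in a clique and hence remain mutually connected when $v$ is deleted. Thus $S^{\ast}$ witnesses the reduced instance; conversely, any solution $S'$ of the reduced instance contains $N_G(v)$ by the new red coloring, hence covers the removed edges $v u_i$ and remains connected in $G$.

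For red $v$ with $\deg_G(v) \ge 2$, every solution $S$ contains $v$, so $S \setminus \{v\}$ has size at most $k - 1$, covers every edge of $G - v$, and stays connected by the same clique argument. Conversely, given a solution $S'$ of the reduced instance, the clique constraint on $N_G(v)$ forces at least one neighbor of $v$ into $S'$, so $S' \cup \{v\}$ is connected and is a solution to $(G, k)$ of size at most $k$. For the red leaf case, the premise $|V(G)| \ge 3$ together with the CVC connectivity requirement forces $v$'s unique neighbor $u$ into every solution, which justifies marking $u$ red and then applying the red-case argument verbatim.

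The main obstacle throughout is connectivity rather than edge coverage. The nontrivial step in every case is verifying that removing or swapping $v$ does not disconnect $G[S]$, and this hinges exclusively on simpliciality: since $v$'s solution-neighbors induce a clique, any path previously routed through $v$ can be rerouted through a single clique edge. Without the clique structure of $N_G(v)$ the rule would be unsafe for \ConVC, even though the vertex-cover aspect of the exchange would still go through.
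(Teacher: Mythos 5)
Your proof is correct and follows essentially the same route as the paper's: a case analysis on the color and degree of the simplicial vertex $v$, an exchange argument swapping $v$ for its (at most one) missing neighbor in the white case, and the observation that a vertex cover misses at most one vertex of the clique $N_G(v)$ to restore connectivity when re-adding $v$ in the red case. The only cosmetic difference is that you justify the red-leaf case via $|V(G)|\ge 3$ and connectivity of $G$, whereas the paper invokes \Cref{rr:isolated} to rule out size-one solutions; both rest on the same standing assumptions.
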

\begin{proof}
  Let $v$ be a simplicial vertex in $G$ and let $(G', k')$ be the instance obtained by applying \Cref{rr:simplicial}.
  Suppose that $(G, k)$ is a yes-instance with a solution $S$.
  If $v \notin S$, then $v$ is white by definition and hence $k' = k$.
  Consequently, $S$ is a solution of $(G', k')$.
  So assume that $v \in S$.

  We first show that if $v$ is red, then $S' = S \setminus \{ v \}$ is a solution of $(G', k')$.
  This is clear for the case $\deg_G(v) > 1$.
  To see why, note that $v$ is simplicial in $G[S]$ and thus $S$ remains connected after deleting $v$.
  Suppose that $\deg_G(v) = 1$.
  Let $u$ be the neighbor of $v$.
  Since \Cref{rr:simplicial} colors $u$ red, we have to show that $u \in S$.
  Observe that $S \setminus \{ v \}$ contains at least one vertex by \Cref{rr:isolated}.
  Thus, $S$ must include $u$ for $G[S]$ to be connected.

  Now, consider the case that $v$ is white.
  Suppose that there exists a vertex $u \in N_G(v)$ with $u \notin S$.
  Since $N_G(v)$ is complete in~$G$, all the vertices of $N_G(v) \setminus \{ u \}$ are included in~$S$.
  It follows that $(S \setminus \{ v \}) \cup \{ u \}$ is also a solution of $(G, k)$.
  Hence, we can assume that~$S \supseteq N_G(v)$, thereby showing that $S \setminus \{ v \}$ is a solution of $(G', k')$.

  Conversely, suppose that $S'$ is a solution of $(G', k')$.
  If $v$ is white, $S'$ is also a solution of~$(G, k)$, because all the neighbors of $v$ are colored red by \Cref{rr:simplicial}.
  Otherwise, $S' \cup \{ v \}$ is a vertex cover of size $k' + 1 \le k$.
  Moreover, $S' \cup \{ v \}$ is connected in $G$ because~$S'$ includes a neighbor of $v$:
  If $\deg_G(v) = 1$, then the sole neighbor $u$ of $v$ is red in $G'$ due to \Cref{rr:simplicial}.
  Otherwise, $S'$ contains at least $|N_G(v)| - 1 \ge 1$ vertices of $N_G(v)$, because $N_G(v)$ forms a clique in $G'$.
\end{proof}

Now, we show that \Cref{rr:isolated,rr:simplicial} yield a polynomial kernel for \textsc{Annotated Connected Vertex Cover} in $c$-closed graphs.

\begin{lemma}
  \label{lemma:convckernel}
  {\normalfont\textsc{Annotated Connected Vertex Cover}} has a kernel with $\Oh(ck^2)$ vertices.
\end{lemma}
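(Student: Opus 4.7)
My plan is to show that after exhaustive application of \Cref{rr:isolated,rr:simplicial}, any yes-instance of \textsc{Annotated Connected Vertex Cover} on a $c$-closed graph has at most $\Oh(ck^2)$ vertices. Both rules are clearly polynomial-time and only delete vertices (so they preserve $c$-closure), which will yield the desired kernel once the size bound is established.

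I would fix a hypothetical solution $S\supseteq V_R(G)$ of size at most $k$ and let $I:=V(G)\setminus S$. Since $S\subseteq V(G)$ is a vertex cover, $I$ is an independent set with $N(v)\subseteq S$ for all $v\in I$. The bound $|S|\le k$ is immediate, so the task reduces to bounding $|I|$. The central observation is that \Cref{rr:simplicial} is no longer applicable, so no vertex of the reduced graph is simplicial; in particular, for every $v\in I$ the neighborhood $N(v)\subseteq S$ is \emph{not} a clique. Hence I can assign to each $v\in I$ a witness pair $\{u_v,w_v\}\subseteq N(v)$ of nonadjacent vertices in $S$. Since $G$ is $c$-closed, any pair of nonadjacent vertices of $S$ has at most $c-1$ common neighbors, so at most $c-1$ vertices of $I$ can be assigned to the same pair. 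Counting yields
\[
  |I|\le (c-1)\binom{|S|}{2}\in\Oh(ck^2),
\]
and combined with $|S|\le k$ this gives $|V(G)|\in \Oh(ck^2)$.

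The step I expect to need the most care is verifying that \Cref{rr:simplicial} genuinely eliminates every simplicial vertex of $I$, including leaves and vertices whose neighborhood collapses to a single vertex. The rule's case split (red versus white, and separately degree one versus larger) covers every possibility, and \Cref{rr:isolated} has already removed isolated white vertices that might otherwise trivially satisfy the simplicial condition. A minor point to confirm is that the argument does not depend on $v$ being white: even if $v\in I$ were red it would have been placed in $S$, contradicting $v\in I$, so every $v\in I$ is indeed white and the non-simpliciality of $v$ applies directly to produce the required nonadjacent pair in $S$.
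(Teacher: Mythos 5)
Your proposal is correct and follows essentially the same argument as the paper: after Reduction Rules~\ref{rr:isolated} and~\ref{rr:simplicial} are exhausted, every vertex outside a solution $S$ has all neighbors in $S$ and at least two of them nonadjacent, so $c$-closure charges at most $c-1$ such vertices to each nonadjacent pair in $S$, giving $|V(G)|< k+c\binom{k}{2}\in\Oh(ck^2)$. The paper's proof is exactly this counting; your added remarks (vertices outside $S$ are white, leaves and isolated vertices are gone) are fine and consistent with it.
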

\begin{proof}
  We claim that an \textsc{Annotated Connected Vertex Cover} instance is a No-instance if it has at least $k + c \binom{k}{2}$ vertices after \Cref{rr:isolated,rr:simplicial} are exhaustively applied.
  Suppose that an instance $(G, k)$ of \textsc{Annotated Connected Vertex Cover} is a yes-instance with a solution $S$.
  Then, it holds for each vertex $v \in V(G) \setminus S$ that~$N_G(v) \subseteq S$ and that $\deg_G(v) \ge 1$ by \Cref{rr:isolated}.
  Moreover, each vertex $v \in V(G) \setminus S$ must have at least two nonadjacent neighbors in $S$ by \Cref{rr:simplicial}.
  It follows that $|V(G) \setminus S| < c \binom{|S|}{2} \le c \binom{k}{2}$ and thus $|V(G)| = |S| + |V(G) \setminus S| < k + c \binom{k}{2}$.
\end{proof}

Finally, we can reduce from \textsc{Annotated Connected Vertex Cover} to \textsc{Connected Vertex Cover} by attaching a leaf-vertex to each red vertex.
\fi

\begin{theorem}
  \label{thm:convc}
  \ConVC{} has a kernel with $\Oh(ck^2)$ vertices.
\end{theorem}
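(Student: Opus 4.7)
The plan is to reduce \ConVC{} to \textsc{Annotated Connected Vertex Cover} and invoke \Cref{lemma:convckernel}. Given an instance $(G, k)$ of \ConVC{}, I would first view it as an equivalent instance of \textsc{Annotated Connected Vertex Cover} by coloring every vertex of $G$ white. Since there are no red vertices, the extra constraint is vacuous, and solutions carry over in both directions. Applying the kernelization of \Cref{lemma:convckernel} then yields an equivalent annotated instance $(G', k')$ on $\Oh(ck^2)$ vertices.

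The second step converts $(G', k')$ back into an unannotated instance. For every red vertex $v \in V_R(G')$, I would attach a fresh leaf $\ell_v$ by the edge $v\ell_v$; call the resulting graph $H$. I claim $(H, k')$ is a yes-instance of \ConVC{} iff $(G', k')$ is a yes-instance of \textsc{Annotated Connected Vertex Cover}. In one direction, any solution $S \supseteq V_R(G')$ for the annotated instance covers every pendant edge (via its red endpoint), so $S$ is also a connected vertex cover of $H$ of the same size. Conversely, since each pendant edge $v\ell_v$ must be covered and the leaves $\ell_v$ are useless for connectivity (they have degree one in $H$), every minimal connected vertex cover of $H$ of size at most $k'$ is contained in $V(G')$ and contains all of $V_R(G')$, hence is an annotated solution.

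For the size and closure, observe that $H$ has at most $2|V(G')| \in \Oh(ck^2)$ vertices. The pendants do not spoil $c$-closedness: each new leaf $\ell_v$ has a single neighbor, so it contributes at most one common neighbor with any other vertex (hence $\cl_H(\ell_v) \le 1 < c$), and no pair of original non-adjacent vertices gains a new common neighbor because $\ell_v$ is adjacent only to $v$. Therefore $H$ is still $c$-closed, and the chain of reductions gives the desired kernel of size $\Oh(ck^2)$.

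The only subtle point, and the main thing I would verify carefully, is that the back-and-forth preserves both the connectivity of the solution and $c$-closedness simultaneously; the arguments above handle each in a self-contained way, so no additional machinery beyond \Cref{lemma:convckernel} is needed.
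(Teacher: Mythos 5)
Your proposal is correct and follows essentially the same route as the paper: color all vertices white, apply the annotated kernel of \Cref{lemma:convckernel}, and then attach a pendant leaf to each red vertex to return to the unannotated problem, with the same exchange argument showing red vertices can be assumed in any solution. The extra checks on $c$-closure and size are consistent with what the paper's proof implicitly uses.
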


\iflong
\begin{proof}
  Let $(G, k)$ be an instance of \ConVC{}{}.
  We construct an \textsc{Annotated Connected Vertex Cover} instance $(H, k)$ where all the vertices in $G$ are colored white (that is, $V_R(H) = \emptyset$, $V_W(H) = V(G)$, and $E(H) = E(G)$).
  By \Cref{lemma:convckernel}, we obtain an instance of $(H', k')$ with $\Oh(ck^2)$ vertices in polynomial time.
  Let $G'$ be the graph obtained from $H'$ by attaching a degree-one vertex $\ell_v$ to each red vertex $v$.

  If there is a connected vertex cover in $H'$ including all the red vertices, then it is also a connected vertex cover in $G'$.
  Conversely, suppose that $S$ is a connected vertex cover of $G'$.
  Then, we can assume that~$S$ contains all red vertices of $H'$.
  So~$S$ is a solution of $(H', k')$.
\end{proof}
\fi

\subsection{Induced Matching}
\label{sec:im}

In this section, we provide a kernel of size $(\gamma k)^{\Oh(\gamma)}$ for \IM{}:

\problemdef
{Induced Matching}
{A graph $G$ and $k \in \mathbb{N}$.}
{Is there a set $M$ of at least $k$ edges such that the endpoints of distinct edges are pairwise nonadjacent?}

\IM{} is W[1]-hard for the parameter $k$ on general graphs.
For $c$-closed graphs, we~recently developed a kernel with $\Oh(c^7 k^8)$ vertices~\cite{KKS20}.
As for $d$-degenerate graphs, Kanj et al.~\cite{KPSX11} and Erman et al.~\cite{EKKW10} independently presented kernels of size $k^{\Oh(d)}$.
Later, Cygan et al.~\cite{CGH17} provided a matching lower bound $k^{o(d)}$  on the kernel size.
Note that this also implies the nonexistence of $k^{o(\gamma)}$-size kernels unless \PHC.

It turns out that \Cref{thm:bound-sets-p-and-q} is again helpful in designing a $k^{\Oh(\gamma)}$-size kernel for \IM{}.
In a nutshell, we show that the application of a series of reduction rules results in a graph with a $(\gamma k)^{\Oh(1)}$-size vertex cover.
We do so by combining different known kernelizations, namely, the one of Erman et al.~\cite{EKKW10} with the one in our previous work~\cite{KKS20}.
\Cref{thm:bound-sets-p-and-q} and the reduction rule which removes twin vertices then gives us a kernel of size~$(\gamma k)^{\Oh(\gamma)}$.

Erman et al.~\cite{EKKW10} use the following observation for~$d$-degenerate graphs.
\begin{lemma}[\cite{EKKW10}]
  \label{lemma:im:sparseim}
  Any graph $G$ with a matching $M$ has an induced matching of size $|M| / (4d_G + 1)$.
\end{lemma}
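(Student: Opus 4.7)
The plan is to build a ``conflict graph'' $H$ with $V(H)=M$, placing an edge between distinct $e,f \in M$ exactly when $G$ contains an edge with one endpoint in $V(e)$ and the other in $V(f)$. Then an independent set in $H$ is precisely a subset of $M$ whose edges form an induced matching in $G$, so it suffices to produce an independent set in $H$ of size at least $|M|/(4d_G+1)$.

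The heart of the argument is to show that $H$ is $4d_G$-degenerate; a standard greedy coloring along a degeneracy order then yields $\chi(H)\le 4d_G+1$, and the largest color class is an independent set of size at least $|M|/(4d_G+1)$. To bound the degeneracy I would take an arbitrary $M'\subseteq M$ and exhibit a low-degree vertex in $H[M']$. The induced subgraph $G[V(M')]$ inherits $d_G$-degeneracy from $G$, so $|E(G[V(M')])|\le d_G\cdot|V(M')|=2d_G|M'|$ (immediate by summing $|Q(v)|\le d_G$ along a degeneracy ordering of $V(M')$). Therefore $\sum_{v\in V(M')}\deg_{G[V(M')]}(v)=2|E(G[V(M')])|\le 4d_G|M'|$. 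Reorganising the sum per edge of $M'$, the quantity $\deg_{G[V(M')]}(u)+\deg_{G[V(M')]}(v)$ averaged over $e=uv\in M'$ is at most $4d_G$, so some $e^{\star}=u^{\star}v^{\star}\in M'$ satisfies $\deg_{G[V(M')]}(u^{\star})+\deg_{G[V(M')]}(v^{\star})\le 4d_G$. Any $M'$-edge conflicting with $e^{\star}$ must have at least one endpoint in $N_{G[V(M')]}(u^{\star})\cup N_{G[V(M')]}(v^{\star})$, so the $H[M']$-degree of $e^{\star}$ is at most $4d_G$, giving the claimed degeneracy bound.

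I do not foresee any real obstacle beyond checking the routine edge-count $|E(G')|\le d_{G'}\cdot|V(G')|$ for a $d_{G'}$-degenerate graph $G'$, and the mild care needed to translate $G$-incidences between $V(e^{\star})$ and the rest of $V(M')$ into a degree bound in $H[M']$. The constant $4d_G+1$ in the statement is in fact slightly loose: subtracting the two contributions of the edge $e^{\star}$ itself from its $H[M']$-degree tightens the bound to $4d_G-1$, but this slack is absorbed by the formulation of the lemma and is not needed in what follows.
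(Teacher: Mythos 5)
Your argument is correct: the conflict graph on $M$ is $4d_G$-degenerate (in fact $(4d_G-2)$-degenerate, as your closing remark exploits), hence $(4d_G+1)$-colourable, and a largest colour class is an induced matching of the claimed size. The paper does not prove this lemma at all --- it is imported verbatim from Erman et al.\ \cite{EKKW10} --- and your proof is essentially the standard degeneracy-plus-greedy-colouring argument behind that cited result, so there is nothing in the paper to diverge from; the averaging step over the edges of $M'$ and the injection from conflicting edges to neighbours of $u^{\star},v^{\star}$ (using that each vertex lies in at most one matching edge) are exactly the points that needed care, and you handle both correctly.
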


Ideally, we would like to prove a lemma analogous to \Cref{lemma:im:sparseim} on weakly $\gamma$-closed graphs.
Note, however, that a complete graph on $n$ vertices (which is weakly 1-closed) has no induced matching of size 2, although it contains a matching of size $\lfloor n / 2 \rfloor$.
So our algorithm needs to follow a slightly different route, and we prove an analogous lemma on weakly~$\gamma$-closed bipartite graphs (there exist bipartite $2$-closed graphs whose degeneracy is unbounded; see e.g. Eschen et al.~\cite{EHSS11}).
As we shall see, this serves our purposes.

\begin{lemma}
  \label{lemma:im:findim}
  Suppose that $G$ is a bipartite graph with a bipartition $(A, B)$.
  If $G$ has a matching $M$ of size $f_\gamma(k) := 4 \gamma k^2 + 3k$, then $G$ has an induced matching of size $k$.
\end{lemma}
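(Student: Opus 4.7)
I propose to prove Lemma~\ref{lemma:im:findim} by a greedy construction guided by the closure ordering $\sigma$. For each edge $e = ab \in M$, let $\ell(e)$ denote the later endpoint of $e$ in $\sigma$, and process the edges of $M$ in increasing order of $\ell(e)$. Maintain a partial induced matching $I$, initially empty; at each processing step, add the current edge $e$ to $I$ if $V(e) \cap V(I) = \emptyset$ and there is no cross-edge of $G$ between $V(e)$ and $V(I)$, otherwise discard $e$.

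To show that the procedure produces $|I| \ge k$ whenever $|M| \ge f_\gamma(k)$, I would argue by contradiction: assume $|I| = r < k$ at termination. I charge each discarded edge of $M$ to an edge of $I$ responsible for its rejection. The discards split into two types. Shared-vertex discards are bounded by $2r$ in total, since each accepted edge has at most two endpoints and hence can be blamed for at most two such discards. For cross-edge discards, I claim at most $4\gamma k$ per accepted edge. Fix an accepted edge $e^{(h)} = a_h b_h \in I$ (with $a_h \in A$, $b_h \in B$) and a discarded $e^{(i)} = a_i b_i$ blocked by a cross-edge to $e^{(h)}$. Say the cross-edge is $a_h b_i \in E(G)$; then $b_i$ is a common neighbor of the non-adjacent $A$-vertices $a_h, a_i$. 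The processing order in increasing $\ell$ ensures that $b_i$ is later than $a_h$ in $\sigma$, so $b_i \in Q(a_h)$, placing it in the set $Q(a_h) \cap N(a_i)$, which by Observation~\ref{obs-intersection-q-neighbors} has size less than $\gamma$. An appropriate charging then distributes the discards across the two possible cross-edge orientations and the two endpoints of $e^{(h)}$, yielding at most $4\gamma k$ cross-edge discards per accepted edge. Combining the two types: $|M| \le r + 2r + 4\gamma k \cdot r \le (k-1)(4\gamma k + 3) < 4\gamma k^2 + 3k = f_\gamma(k)$, the desired contradiction.

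The main obstacle is carrying out the case analysis underlying the bound of $4\gamma k$ cross-edge discards per accepted edge. In particular, one must verify that in every configuration---depending on which side of the bipartition contains the early endpoint of each edge involved and on the orientation of the cross-edge---the processing order ensures that the common-neighbor witness lies in the appropriate posterior neighborhood, so that Observation~\ref{obs-intersection-q-neighbors} applies. The bipartite structure of $G$ is essential here, as it guarantees that the pair of same-side vertices implicated in each cross-edge discard are genuinely non-adjacent, which is precisely the precondition for the weak-closure bound. Once this bookkeeping is set up, the factor $\gamma$ comes from the closure bound, while the factor $4$ arises from the four symmetric configurations and the factor $k$ from summing $Q$-neighborhood bounds over the endpoints of the edges in $I$.
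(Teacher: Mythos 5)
Your charging step does not go through. Observation~\ref{obs-intersection-q-neighbors} bounds $|Q(a_h)\cap N(a_i)|<\gamma$ for one \emph{fixed} pair of nonadjacent vertices; it says nothing about how many different indices $i$ can appear, and each discarded edge brings its own vertex $a_i$ and hence its own set $Q(a_h)\cap N(a_i)$. So the claim ``at most $4\gamma k$ cross-edge discards per accepted edge'' is not implied by the observation, and in fact it is false for the greedy you describe. Take the tree with vertices $a_h,b_h,a_1,b_1,\dots,a_m,b_m$ and edges $a_hb_h$, $a_hb_i$ and $a_ib_i$ for all $i\in[m]$; it is bipartite with sides $\{a_h,a_1,\dots,a_m\}$ and $\{b_h,b_1,\dots,b_m\}$, weakly $2$-closed, and $\sigma=a_h,b_h,a_1,b_1,\dots,a_m,b_m$ is a closure ordering. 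For $M=\{a_hb_h\}\cup\{a_ib_i\mid i\in[m]\}$ your rule processes $a_hb_h$ first (its later endpoint $b_h$ is earliest), accepts it, and then discards every $a_ib_i$ because of the cross-edge $a_hb_i$, ending with $|I|=1$ although $\{a_ib_i\mid i\in[m]\}$ is an induced matching of size $m$. Here every witness set $Q(a_h)\cap N(a_i)=\{b_i\}$ has size $1<\gamma$, so all your pairwise closure bounds hold, yet the single accepted edge blocks $m$ edges with $m$ unbounded in $\gamma$ and $k$. The obstruction is a vertex whose posterior neighborhood is huge; no amount of per-pair bookkeeping over the four configurations can cap the discards charged to it, so the gap is in the approach, not merely in the case analysis you deferred.

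The paper avoids exactly this by treating high-$Q$-degree vertices separately: let $S=\{v:|Q(v)|\ge\gamma k\}$. If $|S|\ge 2k$, one side contains $k$ such vertices, and since $|Q(v)\cap N(v')|<\gamma$ for distinct same-side vertices, each of the $k$ vertices keeps a private posterior neighbor after excluding the others' neighborhoods, which yields an induced matching of size $k$ directly. If $|S|<2k$, then $G-S$ is $\gamma k$-degenerate and still contains a matching of size at least $4\gamma k^2+k$, so Lemma~\ref{lemma:im:sparseim} gives an induced matching of size $k$. If you want to rescue a greedy argument, you would first have to remove or separately exploit the vertices in $S$ (which is essentially the paper's proof); as stated, your proposal is not correct.
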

\begin{proof}
Recall that~$\q^\sigma(v) := \{ u \in N(v) \mid \text{$u$ appears after $v$ in $\sigma$} \}$.
  Let $S \subseteq V(G)$ be the set of vertices $v$ such that $|Q(v)| \ge \gamma k$.
  Suppose that $|S| \ge 2k$.
  Then, we may assume that~$|A \cap S| \ge k$.
  Let $A' \subseteq A \cap S$ be an arbitrary vertex set of size exactly $k$ and consider some vertex $v \in A'$.
  Since $|Q(v) \cap N(v')| < \gamma$ for every $v' \in A' \setminus \{ v \}$, we have~$|Q(v) \setminus \bigcup_{v' \in A' \setminus \{ v \}} N(v')| > 0$ for each~$v\in A'$. 
  Consequently, there is at least one vertex~$q_v\in Q(v) \setminus \bigcup_{v' \in A' \setminus \{ v \}} N(v')$.
  Then, the edge set $\{ v q_v \mid v \in A' \}$ forms an induced matching of size $k$ in $G$.

  Now, consider the case $|S| < 2k$.
  By the definition of $S$, it holds that $|Q_{G - S}(v)| \le |Q_G(v)| \le \gamma k$ for each vertex $v \in V(G) \setminus S$.
   Hence, the degeneracy of $G - S$ is at most~$\gamma k$.
  Since~$G-S$ has a matching $M_{G - S}$ of size at least~$|M| - |S| \ge 4 \gamma k^2 + k$, \Cref{lemma:im:sparseim} yields an induced matching of size $|M_{G - S}| / (4 d_{G - S} + 1) \ge k$.
\end{proof}

We use the following reduction rule to sparsify the graph~$G$ such that each sufficiently large vertex set contains a large independent set (see \Cref{lemma:im:findis}).

\begin{rrule}
  \label{rr:im:delv}
  If for some vertex $v \in V(G)$, there is a maximum matching $M_v$ of size at least $2 \gamma k$ in $G[Q(v)]$, then delete $v$.
\end{rrule}

\begin{lemma}
  \Cref{rr:im:delv} is correct.
\end{lemma}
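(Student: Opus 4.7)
\textbf{Proof plan for correctness of \Cref{rr:im:delv}.} The equivalence $(G,k)$ is a yes-instance iff $(G-v,k)$ is a yes-instance has one trivial direction: any induced matching of $G-v$ is still an induced matching in $G$. The interesting direction is to show that if $G$ has an induced matching $M$ of size $k$, then so does $G-v$. If $v \notin V(M)$ there is nothing to do, so I would assume $v$ is an endpoint of a (unique) edge $vu \in M$, and let $X := V(M) \setminus \{v,u\}$, which has $|X| = 2(k-1)$ vertices.

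The core idea is to \emph{swap out} the edge $vu$ for a suitable edge of $M_v$ whose endpoints lie in $Q(v)$ but avoid $X$ and its neighborhood. To bound how many candidate edges of $M_v$ are ruined by $X$, I would argue as follows: every $w \in X$ is nonadjacent to $v$ in $G$, since $M$ is an induced matching and $w$ is an endpoint of a matching edge distinct from $vu$. Hence by \Cref{obs-intersection-q-neighbors}, $|N(w) \cap Q(v)| \le \gamma-1$, and adding $w$ itself gives $|N[w] \cap Q(v)| \le \gamma$. Summing over $X$, at most $2\gamma(k-1) < 2\gamma k$ vertices of $Q(v)$ lie in $N[X]$.

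Since $M_v$ is a matching in $G[Q(v)]$, each vertex of $Q(v)$ is an endpoint of at most one edge of $M_v$, so at most $2\gamma(k-1)$ edges of $M_v$ have an endpoint in $N[X] \cap Q(v)$. The hypothesis $|M_v| \ge 2\gamma k > 2\gamma(k-1)$ then guarantees the existence of an edge $xy \in M_v$ with $\{x,y\} \cap N[X] = \emptyset$. I would then define $M' := (M \setminus \{vu\}) \cup \{xy\}$ and verify that $M'$ is an induced matching of size $k$ in $G-v$: the edges of $M \setminus \{vu\}$ remain induced and vertex-disjoint; the new edge $xy$ is vertex-disjoint from $V(M \setminus \{vu\}) = X$ because $x,y \notin X$; and no endpoint of $xy$ is adjacent to any vertex of $X$ by choice of $xy$. (The only minor subtlety is that $u$ might lie in $Q(v)$ and could coincide with $x$ or $y$, but this is harmless: $u$ is nonadjacent to every vertex of $X$, since $M$ was induced, so the resulting edge set still induces a matching.)

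I do not anticipate a serious obstacle: the only nontrivial input is the use of \Cref{obs-intersection-q-neighbors} to convert adjacency bounds via $v$'s posterior neighborhood into a counting argument, and the numerical check $2\gamma(k-1) < 2\gamma k \le |M_v|$ is what pins down the threshold $2\gamma k$ in the statement of the rule.
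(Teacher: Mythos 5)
Your proposal is correct and follows essentially the same route as the paper's proof: swap the edge $vu$ for an edge of $M_v$ whose endpoints avoid $N[X]$, using \Cref{obs-intersection-q-neighbors} to bound the number of vertices of $Q(v)$ hit by neighborhoods of the other matched vertices, which is exactly the paper's counting argument (the paper phrases it as counting surviving endpoints plus pigeonhole rather than ruined edges). Your explicit remark that the partner $u$ may itself lie in $Q(v)$ is a small bonus the paper handles only implicitly.
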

\begin{proof}
  Let $G' := G - v$.
  Suppose that $G$ has an induced matching $M$ of size $k$.
  If~$v \notin V(M)$, then $M$ is also an induced matching in $G'$.
  So assume that $vv' \in V(M)$ for some vertex~$v' \in V(G)$.
  Then, we have $|N(u) \cap Q(v)| < \gamma$ for any vertex $u \in V(M  \setminus \{ vv' \})$ and thus $|V(M_v) \setminus \bigcup_{u \in V(M \setminus \{ vv'\})} N(u)| \ge 2|M_v| - (\gamma - 1) (2k - 2) > |M_v|$.
  By the pigeon-hole principle, this implies that there is an edge $e \in M_v$ not incident to any vertex in $V(M)$ and no endpoint of~$e$ is adjacent to any vertex in~$V(M\setminus\{vv'\})$. 
  Then, $(M \setminus \{ vv' \}) \cup \{ e \}$ is an induced matching of size $k$ in $G'$.
\end{proof}

\begin{lemma}
  \label{lemma:im:findis}
  Suppose that $G$ is a graph in which \Cref{rr:im:delv} is applied on every vertex.
  Then, every vertex set~$S \subseteq V(G)$ of size at least $g_\gamma(k) := 4 \gamma k^2 + k^2$  contains an independent set $I \subseteq S$ of size $k$.
\end{lemma}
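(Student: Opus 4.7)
The plan is to combine a greedy selection in closure order with the elementary fact that the vertices left unmatched by a maximum matching of a graph form an independent set; in particular, any graph $H$ on $n$ vertices whose maximum matching has size $m$ contains an independent set of size at least $n - 2m$.

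Process the vertices of $S$ in closure order $\sigma$ and maintain an independent set $I$ (initially empty); add a vertex $v \in S$ to $I$ if and only if $v$ has no neighbor in $I$ at the time it is processed. If $|I| \ge k$ at the end, we are done. Otherwise $|I| \le k - 1$, and every rejected $v \in S \setminus I$ has some $u \in I$ with $uv \in E(G)$ and $u$ preceding $v$ in $\sigma$, so that $v \in Q(u)$. Assigning each rejected $v$ to one such $u$ partitions $S \setminus I$ into sets $R_u \subseteq Q(u)$, $u \in I$. Since $\sum_{u \in I} |R_u| = |S| - |I| \ge |S| - k + 1$, the pigeonhole principle yields some $u \in I$ with $|R_u| \ge (|S| - k + 1)/(k - 1)$. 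A short calculation shows this is at least $4\gamma k + k$ whenever $k \ge 2$ and $|S| \ge 4\gamma k^2 + k^2$ (the case $k = 1$ is trivial).

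Because \Cref{rr:im:delv} has been applied exhaustively at $u$, a maximum matching in $G[Q(u)]$ has at most $2\gamma k - 1$ edges, and the same bound holds for the subgraph $G[R_u]$. Applying the observation from the first paragraph to $G[R_u]$ produces an independent set of size at least $|R_u| - 2(2\gamma k - 1) \ge (4\gamma k + k) - (4\gamma k - 2) = k + 2$ inside $R_u \subseteq S$, as required. The main technical hurdle is simply balancing constants: the bound $|S| \ge 4\gamma k^2 + k^2$ must survive both the pigeonhole loss (a factor of $k - 1$) and the matching-to-independent-set conversion (a loss of roughly $4\gamma k$), and the particular form of $g_\gamma(k)$ is exactly what makes this balance work out.
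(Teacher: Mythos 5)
Your proof is correct. It relies on the same two ingredients as the paper's argument -- the bound of at most $2\gamma k - 1$ edges on a maximum matching in $G[Q(u)]$ guaranteed by \Cref{rr:im:delv}, and the fact that the vertices left unmatched by a maximal matching form an independent set -- but it organizes them differently. The paper argues by contradiction: assuming $G[S]$ has no independent set of size $k$, the unmatched part of each $Q_{G[S]}(v)$ has fewer than $k$ vertices, so every posterior degree in $G[S]$ is below $4\gamma k + k$; this bounds the degeneracy of $G[S]$, and the standard $|S|/(d+1)$ independent-set bound then yields the contradiction. You instead give a direct, constructive argument: a greedy pass along the closure ordering either already produces $k$ independent vertices or, via pigeonhole over the at most $k-1$ selected vertices, concentrates at least $4\gamma k + k$ rejected vertices inside a single posterior neighborhood $R_u \subseteq Q(u)$, where the matching bound alone (without any global assumption on $G[S]$) leaves at least $k+2$ unmatched, pairwise nonadjacent vertices. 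Your calculations check out: $(|S|-k+1)/(k-1) \ge 4\gamma k + k$ holds for $|S| \ge 4\gamma k^2 + k^2$ and $k \ge 2$, and $|R_u| - 2(2\gamma k - 1) \ge k+2$. The paper's route is shorter because it invokes the degeneracy-to-independent-set bound as a black box; yours is self-contained, avoids the proof by contradiction, and explicitly exhibits the independent set in polynomial time, which is a mild bonus even though \Cref{rr:im:vpos} only needs its existence.
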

\begin{proof}
  Suppose that there is no independent set of size $k$ in $G' := G[S]$ for some vertex set~$S$ of size~$g_\gamma(k)$.
  For every vertex~$v \in S$, let $M_v$ be a maximum matching in $Q_{G'}(v)$ and let~$I_v := Q_{G'}(v) \setminus V(M_v)$.
  By \Cref{rr:im:delv}, we have $|V(M_v)| = 2 |M_v| \le 4\gamma k$.
  Since~$I_v$ is an independent set, we then have $|Q_{G'}(v)| = |M_v| + |I_v| < 4 \gamma k + k$ for every vertex~$v \in S$, and thus $d_{G'} < 4 \gamma k + k$.
  Note, however, that $G'$ has an independent set of size~$|S| / (d_{G'} + 1) \ge k$, which is a contradiction.
\end{proof}

To identify a part of the graph with a sufficiently large induced matching, we rely on the LP relaxation of \textsc{Vertex Cover}, following our previous approach to obtain a polynomial kernel on $c$-closed graphs~\cite{KKS20}.
Recall that \textsc{Vertex Cover} can be formulated as an integer linear program as follows, using a variable $x_v$ for each $v \in V(G)$:
\begin{align*}
    \min \sum_{v \in V(G)} x_v
    \qquad\text{subject to} \quad & x_u + x_v \ge 1 \quad \forall uv \in E(G), \\[-2ex]
                                  & x_v \in \{ 0, 1 \} \quad \forall v \in V(G).
\end{align*}
We will refer to the LP relaxation of \textsc{Vertex Cover} as VCLP.
We use the well-known facts that VCLP always admits an optimal solution in which $x_v \in \{ 0, 1/2, 1 \}$ for each $v \in V(G)$ and that such a solution can be found in polynomial time.
Suppose that we have such an optimal solution $(x_v)_{v \in V(G)}$.
Let $V_0 := \{ v \in V(G) \mid x_v = 0 \}$, $V_1 := \{ v \in V(G) \mid x_v = 1 \}$, and $V_{1/2} := \{ v \in V(G) \mid x_v = 1 / 2 \}$.
Also, let $\opt(G)$ be the optimum of VCLP.
We show that we can immediately return Yes, whenever $\opt(G)$ is sufficiently large:

\begin{rrule}
  \label{rr:im:vpos}
  If $\opt(G) \ge 2 g_\gamma(g_\gamma(f_\gamma(k)))$, then return Yes.
\end{rrule}

Here, the functions $f_\gamma$ and $g_\gamma$ are as specified in \Cref{lemma:im:findim,lemma:im:findis}, respectively.

\begin{lemma}
  \Cref{rr:im:vpos} is correct.
\end{lemma}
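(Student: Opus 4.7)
The plan is to show that $G$ has an induced matching of size $k$, which makes returning Yes correct. Writing the LP optimum as $\opt(G) = |V_1| + \tfrac{1}{2}|V_{1/2}|$, the assumption $\opt(G) \ge 2 g_\gamma(g_\gamma(f_\gamma(k)))$ forces either $|V_1| \ge g_\gamma(g_\gamma(f_\gamma(k)))$ or $|V_{1/2}| \ge g_\gamma(g_\gamma(f_\gamma(k)))$, and I would handle these two cases separately.

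My main tools are three structural properties of any half-integral optimum $(x_v)_{v \in V(G)}$. First, $V_0$ is an independent set, and second, there are no edges between $V_{1/2}$ and $V_0$; both follow immediately from the edge constraint $x_u + x_v \ge 1$. Third, and crucially, a standard LP-optimality/perturbation argument yields Hall-type conditions: for every independent $A \subseteq V_1$ and every $A' \subseteq A$, $|N(A') \cap V_0| \ge |A'|$; and for every independent $I \subseteq V_{1/2}$ and every $A' \subseteq I$, $|N(A') \cap (V_{1/2} \setminus I)| \ge |A'|$. Each is verified by lowering $x_v$ by a small $\epsilon > 0$ on $A'$ and raising $x_w$ by $\epsilon$ on the targeted neighbors; the only nontrivial feasibility checks are that edges from $A'$ to $V_{1/2} \cup V_1$ stay covered (which holds since the other endpoint already has $x \ge 1/2$) and that there are no edges from $A'$ to the ``protected'' set $I$ in the second case (which holds by independence of $I$). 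Failing the Hall-type condition would then produce a feasible LP solution of strictly smaller value, contradicting optimality.

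In Case 1, I would first invoke \Cref{lemma:im:findis} on $V_1$ to extract an independent set $A \subseteq V_1$ of size $f_\gamma(k)$; Hall's theorem together with the first condition above then gives a matching $M$ of size $|A|$ from $A$ into $V_0$. Since both $A$ and $V_0$ are independent, the induced subgraph on the vertices of $M$ is bipartite and carries a matching of size $f_\gamma(k)$, so \Cref{lemma:im:findim} supplies an induced matching of size $k$, which is automatically induced in $G$.

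Case 2 is where the nested $g_\gamma$ in the threshold becomes unavoidable, and I expect it to be the main obstacle: after extracting an independent $I_1 \subseteq V_{1/2}$ of size $g_\gamma(f_\gamma(k))$ via \Cref{lemma:im:findis} and matching it into $V_{1/2} \setminus I_1$ via Hall's theorem, the set $J$ of matched partners need not be independent, so the matching does not yet live in a bipartite induced subgraph. The remedy is to iterate: apply \Cref{lemma:im:findis} once more to $J$ (of size $g_\gamma(f_\gamma(k))$) to extract an independent $J' \subseteq J$ of size $f_\gamma(k)$, restrict the matching to the sub-matching using only partners in $J'$, and observe that the corresponding $I_1$-endpoints together with $J'$ induce a bipartite subgraph (both sides independent) carrying a matching of size $f_\gamma(k)$. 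A final application of \Cref{lemma:im:findim} then produces the required induced matching of size $k$.
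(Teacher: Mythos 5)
Your proof is correct, but it takes a genuinely different route from the paper's. The paper ignores the half-integral structure entirely: it takes an arbitrary maximal matching $M$, notes that $V(M)$ is a vertex cover and hence $|M|\ge \opt(G)/2\ge g_\gamma(g_\gamma(f_\gamma(k)))$, applies \Cref{lemma:im:findis} to the left endpoints of $M$ to get an independent set, applies it again to the matched partners of that set (your ``iterate on the partner set'' idea), and finishes with \Cref{lemma:im:findim} on the resulting induced bipartite graph --- no case distinction and no use of $V_0,V_{1/2},V_1$ beyond the inequality $\opt(G)\le 2|M|$. You instead exploit the half-integral optimum: the Hall-type expansion properties you claim ($V_1$ expands into $V_0$, and any independent $I\subseteq V_{1/2}$ expands into $V_{1/2}\setminus I$) do follow from the $\epsilon$-perturbation argument (they are Nemhauser--Trotter/crown-style facts), and your subsequent combination of Hall's theorem, \Cref{lemma:im:findis} and \Cref{lemma:im:findim} goes through in both cases, including the point that an induced matching of an induced subgraph is induced in $G$. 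The paper's route buys brevity and robustness (no Hall lemmas, no case split); yours buys structural insight into the LP solution, and in your Case 1 a single application of \Cref{lemma:im:findis} suffices. One imprecision to fix in your feasibility check for Case 2: an edge from $A'$ to $V_{1/2}\setminus I$ is covered because its other endpoint lies in $N(A')\cap(V_{1/2}\setminus I)$ and is raised to $1/2+\epsilon$, not merely because that endpoint ``already has $x\ge 1/2$'' (which alone would give $1-\epsilon<1$); and edges from $A'$ into $V_0$ are excluded by your second tool (no $V_{1/2}$--$V_0$ edges). These are matters of stating the check precisely, not gaps in the argument.
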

\begin{proof}
  We show that $G$ has an induced matching of size $k$ whenever $\opt(G) \ge 2 g_\gamma(g_\gamma(f_\gamma(k)))$.
  Let $M$ be an arbitrary maximal matching in $G$.
  Since $V(M)$ is a vertex cover, we have~$\opt(G) \le |V(M)| = 2|M|$, and hence $|M| \ge \opt(G) / 2 \ge g_\gamma(g_\gamma(f_\gamma(k)))$.
  Let~$M := \{ a_1 b_1, \dots, a_{|M|} b_{|M|} \}$ and let $A := \{ a_1, \dots, a_{|M|} \}$ and $B := \{ b_1, \dots, b_{|M|} \}$.
  By \Cref{lemma:im:findis}, there exists an independent set $A' \subseteq A$ of size $s' := g_{\gamma}(f_\gamma(k))$.
  Without loss of generality, suppose that $A' = \{ a_1, \dots, a_{s'} \}$ and let $B' := \{ b_1, \dots, b_{s'} \}$ be the set of vertices matched to~$A'$ in $M$.
  Again by \Cref{lemma:im:findis}, we obtain an independent set $B'' \subseteq B'$ of size $s'' := f_\gamma(k)$.
  We assume without loss of generality that $B'' = \{ b_1, \dots, b_{s''} \}$.
  Let $A'' := \{ a_1, \dots, a_{s''} \}$ be the set of vertices matched to $B''$ in $M$.
  Then, $G[A'' \cup B'']$ is a bipartite graph with a matching of size at least $s'' = f_\gamma(k)$.
  Consequently, by \Cref{lemma:im:findim}, there is an induced matching of size $k$ in $G[A'' \cup B'']$.
\end{proof}

Since $\opt(G) = |V_{1/2}| / 2 + |V_1|$, it holds that $|V_{1/2}| / 2 + |V_1| \le 2 g_\gamma(g_\gamma(f_\gamma(k))) \in \Oh(\gamma^7 k^8)$ after the application of \Cref{rr:im:vpos}.
Hence, it remains to bound the size of $V_0$.
To do so, it suffices to simply remove ``twins'':

\begin{rrule}
  If $N(u) = N(v)$ for some vertices $u, v \in V(G)$, then delete $v$.
\end{rrule}

Since an induced matching contains at most one of~$u$ and~$v$, the rule is obviously correct.  
We are finally ready to utilize \Cref{thm:bound-sets-p-and-q} to derive an upper bound on $V_0$:
Since $V_0$ is an independent set, \Cref{thm:bound-sets-p-and-q} gives us $|V_0| \in |V_{1/2} \cup V_{1}|^{\Oh(\gamma)} \in (\gamma k)^{\Oh(\gamma)}$.
Thus, we have the following:

\begin{theorem}
  \IM{} has a kernel of size $(\gamma k)^{\Oh(\gamma)}$.
\end{theorem}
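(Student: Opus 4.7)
The plan is to simply string together the reduction rules and lemmas developed in this section. First, I would apply \Cref{rr:im:delv} exhaustively; this can be done in polynomial time by computing, for each vertex $v$, a maximum matching in $G[Q(v)]$. Next, I would solve the VCLP of the resulting graph in polynomial time, obtaining an optimal half-integral solution with vertex classes $V_0, V_{1/2}, V_1$. If $\opt(G) \ge 2 g_\gamma(g_\gamma(f_\gamma(k)))$, \Cref{rr:im:vpos} returns Yes and we are done; otherwise, we have $|V_{1/2}|/2 + |V_1| < 2g_\gamma(g_\gamma(f_\gamma(k))) \in \Oh(\gamma^7 k^8)$, so in particular $|V_{1/2} \cup V_1| \in \Oh(\gamma^7 k^8)$. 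Finally, I would apply the twin-removal rule exhaustively.

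The key structural observation is that $V_0$ is an independent set: if $v \in V_0$ and $uv \in E(G)$, then $x_u + x_v \ge 1$ together with $x_v = 0$ forces $x_u = 1$, so every neighbor of a $V_0$-vertex lies in $V_1$. Consequently, setting $S := V_{1/2} \cup V_1$ and $I := V_0$, we are in a position to invoke \Cref{thm:bound-sets-p-and-q}. After the twin-removal rule, every $N$-equivalence class in $I$ is a singleton, so the hypothesis holds with $t = 1$. The lemma then yields $|V_0| \in \Oh(3^{\gamma/3}\cdot |S|^{2\gamma + 3})$, and plugging in $|S| \in \Oh(\gamma^7 k^8)$ bounds this by $(\gamma k)^{\Oh(\gamma)}$.

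Putting the pieces together, $|V(G)| = |S| + |V_0| \in (\gamma k)^{\Oh(\gamma)}$, which is the claimed kernel size. Correctness of the full procedure follows from the correctness of the individual rules established earlier in the section, and each rule is clearly polynomial-time computable.

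The only nontrivial piece is really \Cref{thm:bound-sets-p-and-q}, which has already been proven; everything else is bookkeeping around the half-integral VCLP solution. The main point worth double-checking is that the twin-removal rule is compatible with the prior rules, i.e., that deleting one of two true twins cannot increase $\opt(G)$ past the \Cref{rr:im:vpos} threshold nor create vertices violating \Cref{rr:im:delv}; both are immediate since deletion only shrinks the graph and the relevant quantities are monotone in the vertex set.
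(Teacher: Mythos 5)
Your proposal is correct and follows essentially the same route as the paper: reduce via \Cref{rr:im:delv} and \Cref{rr:im:vpos} to bound $|V_{1/2}\cup V_1|$ by $\Oh(\gamma^7 k^8)$, remove twins, and then apply \Cref{thm:bound-sets-p-and-q} to the independent set $V_0$ against the vertex cover $V_{1/2}\cup V_1$ to get the $(\gamma k)^{\Oh(\gamma)}$ bound. The additional observations you make (why $V_0$ is independent, and that the rules remain compatible under deletion) are correct and only elaborate on steps the paper treats implicitly.
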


\section{Independent Set and Ramsey-Type Problems}

In this section, we further investigate the kernelization complexity of \textsc{Independent Set}:

\problemdef{Independent Set}
{A graph $G$ and $k \in \mathds{N}$.}
{Is there a set $I$ of at least $k$ vertices that are pairwise nonadjacent?}

In previous work, we provided kernels with $\Oh(ck^2)$~\cite{KKS20} (and $\Oh(\gamma k^2)$~\cite{KKS20a}) vertices whose maximum degree (degeneracy, respectively) is at most $ck$ ($\gamma k$, respectively). This results in kernels of size $\Oh(c^2 k^3)$ \cite{KKS20} and $\Oh(\gamma^2 k^3)$ \cite{KKS20a}, respectively.
We complement these results by showing that, unless \PHC{}, \IS{} admits no kernel of size $k^{2-\epsilon}$ and no kernel with~$k^{4/3 - \varepsilon}$ vertices even if the~$c$-closure is constant.
We also show a kernel lower bound of size~$c^{1 - \varepsilon} k^{\Oh(1)}$ for any~$\varepsilon$.

We then turn our attention to the following related problem where $\mathcal{G}$ is a graph class containing all complete graphs and edgeless graphs.

\problemdef{$\mathcal{G}$-Subgraph}
{A graph $G$ and $k \in \mathds{N}$.}
{Is there a set $S$ of at least $k$ vertices such that $G[S] \in \mathcal{G}$?}

Khot and Raman~\cite{KR02} showed that \textsc{$\mathcal{G}$-Subgraph} is FPT when parameterized by $k$, using Ramsey's theorem,
which states that for any $k \in \mathds{N}$, there exists some number $R(k)\in 2^{\Oh(k)}$ such that any graph $G$ on at least $R(k)$ vertices contains a clique of size $k$ or an independent set of size $k$.
\Ramsey{} is the special case when~$\mathcal{G}$ consists of all complete and edgeless graphs and admits no polynomial kernel unless \PHC{}~\cite{Kra14}.
  Similarly, \textsc{$\mathcal{G}$-Subgraph} admits no polynomial kernel for several graphs classes $\mathcal{G}$, such as cluster graphs~\cite{KPRR14}.
Our contribution for \textsc{$\mathcal{G}$-Subgraph} is two-fold: First, we observe that the lower bounds for \textsc{Independent Set} on graphs with constant $c$-closure also hold for \Ramsey{}. This complements a kernel for \textsc{$\mathcal{G}$-Subgraph} with $\Oh(c k^2)$ vertices that we showed in previous work \cite{KKS20}.\footnote{We showed that any $c$-closed $n$-vertex graph  contains a clique or an independent set of size~$\Omega(\sqrt{n / c})$.}
Second, we provide a kernel of size~$k^{\Oh(\gamma)}$.
To show our kernel lower bounds, we will use \emph{weak~$q$-compositions}.
Weak~$q$-compositions exclude kernels of size $\Oh(k^{q - \varepsilon})$ for $\varepsilon > 0$.

\begin{definition}[\cite{DM12,HW12}]
  Let~$q\ge 1$ be an integer, let~$L_1\subseteq \{0,1\}^*$ be a classic (non-parameterized) problem, and let~$L_2\subseteq \{0,1\}^*\times\mathds{N}$ be a parameterized  problem.  A \emph{weak~$q$-composition} from~$L_1$ to~$L_2$ is a polynomial-time algorithm that on input~$x_1, \ldots ,x_{t^q}\in\{0,1\}^n$ outputs an instance~$(y,k')\in\{0,1\}^*\times\mathds{N}$ such that:
\iflong  \begin{itemize}
    \item $(y,k')\in L_2 \Leftrightarrow x_i\in L_1 \text{ for some } i \in [t^q]$, and
    \item $k'\le t\cdot n^{\Oh(1)}$.
  \end{itemize}
\else
\textbf{a)} $(y,k')\in L_2 \Leftrightarrow x_i\in L_1 \text{ for some } i \in [t^q]$, and \textbf{b)}~$k'\le t\cdot n^{\Oh(1)}$.
\fi
\end{definition}

\begin{lemma}[\cite{CFK+15,DM12,HW12}] \label{lemma:is:noq}
  Let~$q\ge 1$ be an integer, let~$L_1\subseteq \{0,1\}^*$ be a classic NP-hard problem, and let~$L_2\subseteq \{0,1\}^*\times\mathds{N}$ be a parameterized problem.
  The existence of a weak~$q$-composition from~$L_1$ to~$L_2$ implies that~$L_2$ has no compression of size~$\Oh(k^{q-\epsilon})$ for any~$\epsilon >0$, unless \PHC.
\end{lemma}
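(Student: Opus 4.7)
The plan is to follow the standard information-theoretic argument underlying the Dell--van Melkebeek / Dell--Hermelin--Wu framework. Assume for contradiction that $L_2$ admits a compression of size $\Oh(k^{q-\epsilon})$ for some fixed $\epsilon>0$. The goal is to chain the given weak $q$-composition together with this hypothetical compression to obtain an OR-distillation of $L_1$ whose output length is \emph{strictly sub-linear} in the number of inputs, which then implies \PHC{} by the results of Fortnow--Santhanam as sharpened by Dell and van Melkebeek for the compression setting.

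Fix an integer $t$ to be chosen later and invoke the weak $q$-composition on exactly $t^q$ instances $x_1,\dots,x_{t^q}\in\{0,1\}^n$ of $L_1$. By the composition's guarantees, we obtain in polynomial time an instance $(y,k')$ of $L_2$ with $k'\le t\cdot n^{\Oh(1)}$ that is a yes-instance of $L_2$ if and only if at least one $x_i\in L_1$. Applying the assumed compression of $L_2$ produces an instance $z$ of some target language whose bit-length satisfies
\[ |z|\in \Oh\bigl((k')^{q-\epsilon}\bigr)\subseteq \Oh\bigl(t^{q-\epsilon}\cdot n^{\Oh(1)}\bigr), \]
and membership of $z$ in the target language still encodes the OR of the $t^q$ original instances.

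Next, I would calibrate $t$ against $n$: choose $t:=n^{d}$ for a constant $d$ large enough that $d\epsilon$ strictly dominates the hidden $\Oh(1)$ exponent of $n$ in the bound above. Then $|z|=o(t^q\cdot n)$, so we have compressed $t^q$ NP-hard instances of total size $\Theta(t^q\cdot n)$ into a single instance whose length is strictly sub-linear in this total input size. This is precisely the scenario ruled out by the Fortnow--Santhanam OR-distillation theorem in its sharpened form (compressing into an arbitrary language, not just $L_2$ itself): any such sub-linear distillation of an NP-hard OR would force \PHC{}. The only delicate point I expect is the exponent bookkeeping that ties $d$, $q$, and $\epsilon$ together so that $|z|$ is genuinely $o(t^q\cdot n)$; once that is verified, the conclusion follows directly from the black-box machinery of~\cite{CFK+15,DM12,HW12}.
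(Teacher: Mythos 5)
First, note that the paper itself gives no proof of this lemma: it is imported verbatim from the cited literature (Fortnow--Santhanam via Dell--Marx and Hermelin--Wu), so you are reconstructing the standard argument rather than matching an argument in the paper. Your overall plan --- feed $t^q$ instances into the weak $q$-composition, apply the hypothetical $\Oh(k^{q-\epsilon})$ compression to the resulting instance, and then appeal to the OR-compression lower bound machinery --- is the right one, and your calibration $t := n^d$ with $d\epsilon$ exceeding the hidden polynomial exponent is exactly the calibration used there.

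The genuine gap is the threshold you invoke in the final step. You conclude only that $|z| = o(t^q \cdot n)$, i.e., sub-linear in the \emph{total input bitsize}, and assert that any such sub-linear distillation of an NP-hard OR forces \PHC{}. That is not the known theorem. The Complementary Witness Lemma of Fortnow and Santhanam (the engine behind both Dell--van Melkebeek and Hermelin--Wu) requires the compressed output to have bitsize at most roughly $T \log T$, where $T$ is the \emph{number} of composed instances and $T$ is polynomially bounded in the instance length $n$; equivalently, in its sharpened form, OR($L_1$) admits no compression to size $T^{1-\delta}\cdot n^{\Oh(1)}$. An output of size, say, $t^q\sqrt{n}$ is sub-linear in the total input size $\Theta(t^q n)$ yet far exceeds $T\log T$ for $T = t^q$ polynomial in $n$, so your stated criterion does not suffice and the step as written fails. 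Fortunately, your own computation gives more than you claim: writing $c_1$ for the hidden exponent, your choice of $d$ with $d\epsilon > c_1$ yields $|z| \in \Oh(t^{q-\epsilon} n^{c_1}) = \Oh(n^{dq-(d\epsilon - c_1)}) = o(t^q) \le T\log T$, i.e., sub-linear in the \emph{number} of instances, which is precisely what the Complementary Witness Lemma needs (and then $L_1 \in$ coNP/poly together with NP-hardness of $L_1$ gives \PHC{}). So the argument is repairable by quoting the correct threshold and observing that your bound meets it, but as stated the crucial invocation appeals to a claim that is not known to be true.
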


\begin{figure}[t]
  \begin{center}
    \begin{tikzpicture}
      \node[vertex, label={$p^i_{r, 1, 1}$}] (u1) at (0, 0) {};
      \node[vertex, label={$p^i_{r, 2, 2}$}] (u2) at (1, 0) {};
      \node[vertex, label={$p^i_{r, 2, 1}$}] (u3) at (2, 0) {};
      \node[vertex, label={$p^i_{r, 3, 2}$}] (u4) at (3, 0) {};
      \node[vertex, label={$p^i_{r, 3, 1}$}] (u5) at (4, 0) {};
      \node[vertex, label={$p^i_{r, 4, 2}$}] (u6) at (5, 0) {};
      \node[vertex, label={$p^i_{r, 4, 1}$}] (u7) at (6, 0) {};
      \node[vertex, label={$p^i_{r, 5, 2}$}] (u8) at (7, 0) {};

      \node[vertex, label=below:$p^{i + 1}_{r, 1, 1}$] (v1) at (0, -1) {};
      \node[vertex, label=below:$p^{i + 1}_{r, 2, 2}$] (v2) at (1, -1) {};
      \node[vertex, label=below:$p^{i + 1}_{r, 2, 1}$] (v3) at (2, -1) {};
      \node[vertex, label=below:$p^{i + 1}_{r, 3, 2}$] (v4) at (3, -1) {};
      \node[vertex, label=below:$p^{i + 1}_{r, 3, 1}$] (v5) at (4, -1) {};
      \node[vertex, label=below:$p^{i + 1}_{r, 4, 2}$] (v6) at (5, -1) {};
      \node[vertex, label=below:$p^{i + 1}_{r, 4, 1}$] (v7) at (6, -1) {};
      \node[vertex, label=below:$p^{i + 1}_{r, 5, 2}$] (v8) at (7, -1) {};

      \draw (u1) -- (u2) -- (u3) -- (u4) -- (u5) -- (u6) -- (u7) -- (u8);
      \draw (v1) -- (v2) -- (v3) -- (v4) -- (v5) -- (v6) -- (v7) -- (v8);

      \draw (u1) -- (v2); \draw (u2) -- (v1);
      \draw (u3) -- (v4); \draw (u4) -- (v3);
      \draw (u5) -- (v6); \draw (u6) -- (v5);
      \draw (u7) -- (v8); \draw (u8) -- (v7);
    \end{tikzpicture}
  \end{center}
  \caption{An illustration of $P^i_r$ and $P_{i + 1}^r$ for $t = 5$.}
  \label{fig:is:pir}
\end{figure}
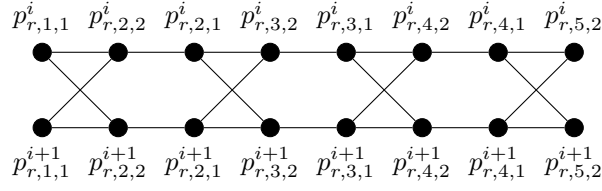

\subparagraph*{Weak composition.}
We provide a weak composition from the following problem:
\problemdef{Multicolored Independent Set}
{A graph $G$ and a partition $(V_1, \dots, V_k)$ of $V(G)$ into $k$ cliques.}
{Is there an independent set of size exactly $k$?}

A standard reduction from a restricted variant of \textsc{3-SAT} (for instance, each literal appears exactly twice \cite{BKS03})  shows that \textsc{Multicolored Independent Set} is NP-hard even when $\Delta_G \in \mathcal{O}(1)$ and $|V_i| \in \mathcal{O}(1)$ for all~$i \in [k]$.
Let $[t]^q$ be the set of $q$-dimensional vectors whose entries are in $[t]$.
Suppose now that~$q\ge 2$ is a constant and that we are given $t^q$ instances $\mathcal{I}_{x} = (G_{x}, (V_x^1, \dots, V^k_x))$ for $x \in [t]^q$, where $\Delta_{G_x} \in \Oh(1)$ and $V^i_x \in \Oh(1)$ for all $x \in [t]^q$ and $i \in [k]$. 
We will construct an \IS{} instance $(H, k')$.
 The kernel lower bound of size~$k^{2-\epsilon}$ will be based on the special case where~$q=2$. To obtain the lower bound of~$c^{1-\epsilon}k^{\Oh(1)}$ we need, however, that the composition works for all~$q\in \mathds{N}$. Hence, we give a generic description in the following.
First, we construct a graph $H_i$ as follows for every $i \in [k]$:
\begin{itemize}
  \item
    For every $x \in [t]^q$, include $V^i_x$ into $V(H_i)$.
  \item
    For every $r \in [q]$, introduce a path $P^i_r$ on $2t - 2$ vertices.
    We label the $(2j - 1)$-th vertex as $p^{i}_{r, j, 1}$ and the $2j$-th vertex as $p^{i}_{r, j + 1, 2}$ (see \Cref{fig:is:pir} for an illustration).
    Note that~$V(P^i_r) = \{ p^i_{r, j, 1}, p^i_{r, j + 1, 2} \mid j \in [t - 1] \}$.
    For every $j \in [t]$, we define $P^i_{r, j}$: let~$P^{i}_{r, 1} = \{ p^i_{r, 1, 1} \}$, $P^i_{r, t} = \{ p^i_{r, t, 2} \}$, and $P^i_{r, j} = \{ p^i_{r, j, 1}, p^i_{r, j, 2} \}$ for $j \in [2, t - 2]$.
  \item
    For every $r \in [q]$ and $j \in [t]$, add edges such that $P^i_{r, j} \cup \bigcup_{x \in [t]^q, x_r = j} V_x^i$ forms a clique.
\end{itemize}

We then construct the graph $H$.
We start with a disjoint union of $H_i$ for $i \in [k]$ and add the following edges:
\begin{itemize}
  \item
    For every $x \in [t]^q$, we add edges such that $H[V(G_x)] = G_x$.
  \item
    For every $i \in [k - 1]$, $r \in [q]$, and $j \in [t - 1]$, add edges $p^i_{r, j, 1} p^{i + 1}_{r, j + 1, 2}$ and $p^{i + 1}_{r, j, 1} p^{i}_{r, j + 1, 2}$.
\end{itemize}

This concludes the construction of $H$.
Let $k' := qkt - qk + k$.

We will call the vertices of $\bigcup_{x \in [t]^q, i \in [k]} V^i_x$ the \emph{instance vertices}.
The other vertices, which are on $P^i_r$ for some $i \in [k]$ and~$r \in [q]$, serve as \emph{instance selectors}:
As we shall see later, any independent set $J$ of size $k'$ in $H$ contains exactly $t - 1$ vertices of $P^i_r$ for every $i \in [k]$ and~$r \in [q]$.
In fact, there is exactly one $j \in [t]$ such that $J \cap P^i_{r, j} = \emptyset$ and $|J \cap P^i_{r, j'}| = 1$ for all $j' \in [t] \setminus \{ j \}$.
Consequently, $J$ does not contain any instance vertex in $V^i_x$ for $x_r \ne j$, and thereby, $j$ is \emph{selected} for the $r$-th dimension.

We examine the $c$-closure of the constructed graph $H$ and prove the correctness.

\begin{lemma}
  \label{lemma:is:ch}
  It holds that $\cl(H) \in \Oh(t^{q - 2})$.
\end{lemma}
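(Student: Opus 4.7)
The plan is to bound $|N_H(u) \cap N_H(v)|$ for every pair of nonadjacent vertices $u, v \in V(H)$ by a case analysis on the types of $u$ and $v$ (instance vertex from some $V^i_x$ versus path vertex from some $P^i_r$). The overarching intuition is that many common neighbors can be collected only inside the $H_i$-cliques indexed by pairs $(r, j)$, because elsewhere the structure is sparse: each path $P^i_r$ has only $2t-2$ vertices, each $|V^i_x|\in\Oh(1)$, $\Delta_{G_x}\in\Oh(1)$, and each path vertex has only a constant number of path edges and cross-$H_i$ path edges. Every other adjacency contributes only $\Oh(1)$ common neighbors, and the main contribution will turn out to force \emph{two} coordinates of the index vector $y$ of a common neighbor in $V^{i''}_y$ to be pinned, leaving $q-2$ free coordinates.

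The key case is two nonadjacent instance vertices $u\in V^i_x$ and $v\in V^{i'}_{x'}$ with $i=i'$. I would first observe that nonadjacency implies $x_r\ne x'_r$ for every $r\in[q]$, since a shared coordinate $x_r=x'_r=j$ would place $u,v$ in the common $H_i$-clique $P^i_{r,j}\cup\bigcup_{x''_r=j}V^i_{x''}$. Consequently, any common neighbor lying in $V^i_y$ must share some coordinate with $x$ and some (necessarily different) coordinate with $x'$, so it satisfies $y_r=x_r$ and $y_{r'}=x'_{r'}$ for some two distinct indices $r\ne r'$. This pins two of the $q$ coordinates of $y$ and leaves $t^{q-2}$ choices for the others, and with $\Oh(q^2)=\Oh(1)$ choices of $(r,r')$ and $|V^i_y|\in\Oh(1)$ this yields $\Oh(t^{q-2})$ candidates. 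The remaining potential common neighbors -- path vertices in $P^i_r$ reached via the $H_i$-clique and instance vertices reached through $G_x$- or $G_{x'}$-edges across different $H_{i''}$ -- contribute only $\Oh(1)$, using $\Delta_{G_x}\in\Oh(1)$, $|V^i_x|\in\Oh(1)$, and the observation that $u$ sits in at most $2q$ cliques of the form $P^i_{r,x_r}\cup\bigcup V^i_{x''}$. The subcase $i\ne i'$ is easier: no $V^{i''}_y$ can be a common neighbor unless $y\in\{x,x'\}$ and the adjacency is realized by a $G_x$- or $G_{x'}$-edge, again giving only $\Oh(1)$.

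For the instance-path and path-path cases the same arithmetic applies. A path vertex $p^{i}_{r,j,\ell}$ has instance-type neighbors only in $V^i_{x''}$ with $x''_r=j$, so for two nonadjacent vertices of these types a common instance neighbor $y$ must satisfy two coordinate constraints (for example $y_r=j$ and $y_{r'}=j'$ when both are path vertices on paths with $r\ne r'$ inside the same $H_i$), again pinning two coordinates and yielding $\Oh(t^{q-2})$ candidates, while common neighbors among path vertices or through $G_x$-edges contribute $\Oh(1)$ by the bounded-degree structure of the paths, the cross-$H_i$ path connections, and $\Delta_{G_x}\in\Oh(1)$. Summing the bounds over all cases gives $\cl(H)\in\Oh(t^{q-2})$.

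The main obstacle is the bookkeeping: $H$ is assembled from several overlapping structures (the $H_i$-cliques indexed by $(r,j)$, the instance graphs $G_x$ glued along the $V^i_x$ partition, and the ``snake'' path gadgets with the cross-$H_i$ connections between $P^i_r$ and $P^{i+1}_r$), so one must verify that no cross-structure adjacency -- in particular neither the $G_x$-edges linking different $H_{i''}$ nor the cross-path edges linking different layers -- can blow up the common-neighbor count beyond $\Oh(t^{q-2})$. This reduces to checking case by case that any such adjacency forces either a shared coordinate or a same-path position, which in turn rules out the formation of a large family of common neighbors.
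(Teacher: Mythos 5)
Your proposal is correct and follows essentially the same route as the paper's proof: a case analysis over the types of the two nonadjacent vertices, where the decisive step is that a common neighbor among the instance vertices must have its index vector agree with one vertex in some coordinate and with the other in a necessarily \emph{different} coordinate (nonadjacency ruling out a shared coordinate), so two of the $q$ entries are pinned, leaving $\Oh(q^2 t^{q-2})$ candidates, while path vertices, $G_x$-edges, and cross-layer adjacencies contribute only $\Oh(1)$ by the constant bounds on $\Delta_{G_x}$, $|V^i_x|$, and $|P^i_{r,j}|$. The paper merely organizes the cases slightly differently (bounding $\cl_H$ at instance vertices against arbitrary nonadjacent vertices first, then treating pairs of path vertices), but the counting argument is the same.
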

\iflong
\begin{proof}
  We first show that $\cl_H(v^i_x) \in \Oh(t^{q - 2})$ for every instance vertex $v^i_x \in V^i_x$.
  More precisely, we show that $|N(v) \cap N(v^i_x)| \in \Oh(t^{q - 2})$ for every vertex $v \in V(H) \setminus N(v^i_x)$.
  By construction, it holds that
  \begin{align*}
    N(v^i_x) \subseteq U_i^x \cup V(G_x) \cup \bigcup_{r \in [q]} P^i_{r, x_r},\, \text{where } U_i^x = \bigcup_{x' \in \mathcal{X}^q_t, \, \exists r \in [q] \colon x_r = x_r'} V_i^{x'}.
  \end{align*}
  
  Since $|P^i_{r, x_r}| \le 2$ for each $r \in [q]$, we have $|N(v) \cap \bigcup_{r \in [q]} P^i_{r, x_r}| \le 2q \in \Oh(1)$.
  Moreover, we have $|N(v) \cap V(G_x)| \in \Oh(1)$:
  If $v \in V(G_x)$, then $|N(v) \cap V(G_x)| \le \Delta_{G_x} \in \Oh(1)$.
  Otherwise, $N(v) \cap V(G_x) = V_{i'}^x$ for some $i' \in [k]$, and hence $|N(v) \cap V(G_x)| = |V_{i'}^x| \in \Oh(1)$.

  Thus, it suffices to show that either $v^i_x \in N(v)$ or $|N(v) \cap U_i^x| \in \Oh(t^{q - 2})$ holds for every vertex $v \in V(H)$.
  If $v \in V(H_{i'})$ for $i' \in [k] \setminus \{ i \}$, then $v$ has $\Oh(1)$ neighbors in $U_i^x$.
  Otherwise,~$v \in V_i^r$ or $v \in P^i_{r, y_r}$ for some $y \in [t]^q$.
  \begin{itemize}
    \item
      If $x_r = y_r$ for some $r \in [q]$, then $v^i_x$ and $v$ are adjacent.
    \item
      If $x_r \ne y_r$ for all $r \in [q]$, then $N(v) \cap U_i^x \subseteq \bigcup_{z \in \mathcal{Z}} V_i^z$, where $\mathcal{Z}$ is the set of vectors~$z \in [t]^q$ such that $x_{r'} = z_{r'}$ and $y_{r''} = z_{r''}$ for some $r', r'' \in [q]$.
      We claim that $\mathcal{Z}$ contains $\Oh(t^{q - 2})$~vectors.
      As $x_r \ne y_r$ for all $r \in [q]$, we can assume that $r' \ne r''$.
      So deleting the~$r$th and the~$r'$th entries from any vector in $\mathcal{Z}$ yields a vector in $[t]^{q - 2}$.
      Thus, there is a one-to-one correspondence between an element of $\mathcal{Z}$ and $r' \ne r'' \in [q]$, and a vector~$z' \in [t]^{q - 2}$.
      It follows that $|\mathcal{Z}| \le q(q - 1) t^{q - 2} \in \Oh(t^{q - 2})$, and consequently, $|N(v) \cap U_i^x| \le \Oh(|\mathcal{Z}|) = \Oh(t^{q - 2})$.
  \end{itemize}

  It remains to show that for every pair $v, v'$ of instance selector vertices, we have either~$vv'\in E(H)$ or $|N(v) \cap N(v')| \in \Oh(t^{q - 2})$.
  Suppose that $v \in P^i_{j, r}$ and $v \in P^{i'}_{j', r'}$.
  If $i \ne i'$, then~$v$ and $v'$ have a constant number of neighbors in common.
  So assume that $i = i'$.
  If~$r = r'$, then either $vv' \in E(H)$ or $N(v) \cap N(v') = \emptyset$ holds.
  Otherwise, it holds that $N(v) \cap N(v') \subseteq \bigcup_{z \in \mathcal{Z}} V_i^z$ where $\mathcal{Z}$ is the set of vectors $z$ such that $z_r = j$ and $z_{r'} = j'$.
  It follows from an analogous argument as above that $|\mathcal{Z}| = t^{q - 2}$.
  Since $|V^i_x| \in \Oh(1)$ for every~$i \in [k]$ and $x \in [t]^q$, we have $|N(v) \cap N(v')| \in \Oh(t^{q - 2})$.
\end{proof}
\fi

\iflong\else
\begin{lemma}
\label{lemma-is-cor-long}
The graph~$G_x$ has a multicolored independent set of size~$k$ for some~$x\in[t]^q$ if and only if the graph~$H$ has an independent set~$I$ of size~$k'$.
\end{lemma}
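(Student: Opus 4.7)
For the $(\Rightarrow)$ direction, suppose $G_x$ admits a multicolored independent set $\{v_1,\dots,v_k\}$ with $v_i \in V^i_x$ for some $x \in [t]^q$. I will set
\[
I := \{v_1,\dots,v_k\} \cup \bigcup_{i \in [k],\, r \in [q]} \bigl(\{p^i_{r,j,1} \mid j < x_r\} \cup \{p^i_{r,j,2} \mid j > x_r\}\bigr),
\]
so that the selector part contributes exactly $t-1$ vertices to each path $P^i_r$, namely the unique largest IS on $P^i_r$ that skips column $P^i_{r,x_r}$; hence $|I| = k + kq(t-1) = k'$. Independence is verified by a case distinction over the edge types of $H$: within a single $P^i_r$ the path edges together with the column cliques rule out adjacency by the definition of the skip pattern; between $P^i_r$ and $P^{i+1}_r$ the two cross edges $p^i_{r,j,1}p^{i+1}_{r,j+1,2}$ and $p^{i+1}_{r,j,1}p^i_{r,j+1,2}$ never have both endpoints in $I$ because the same column $x_r$ is skipped on both sides (the two conditions $j<x_r$ and $j\ge x_r$ are mutually exclusive); between $v_i$ and any selector the only relevant clique edges go to $P^i_{r,x_r}$, which is precisely what is skipped; and between two instance vertices independence holds because $H[V(G_x)]=G_x$.

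For the $(\Leftarrow)$ direction, given an IS $I$ of size $k'$, the task is to show that $I$ must follow essentially the canonical pattern above. I set $a_i := |I \cap (V(H_i) \setminus \bigcup_r V(P^i_r))|$ and $b_{i,r} := |I \cap V(P^i_r)|$. A maximum IS on $P^i_r$ has size $t-1$ with at most one vertex per column, so $b_{i,r} \le t-1$, and moreover leaving two columns entirely empty forces $b_{i,r} \le t-2$. If $a_i \ge 2$, two witnesses lie in $V^i_x \cap I$ and $V^i_y \cap I$ with $x$ and $y$ differing in every coordinate, since otherwise they would share a column clique via some $P^i_{r,x_r}=P^i_{r,y_r}$; this forces two columns to be skipped on every $P^i_r$ and hence $b_{i,r}\le t-2$. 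Comparing the identity $\sum_i a_i + \sum_{i,r} b_{i,r} = k'$ with the canonical per-$H_i$ bound $1 + q(t-1)$, and noting that $q \ge 2$ makes every deviation from $(a_i,b_{i,r})=(1,t-1)$ strictly lossy (with no possible surplus to compensate), I conclude $a_i=1$ and $b_{i,r}=t-1$ for every $(i,r)$.

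Having pinned down a unique instance vertex $v_i \in V^i_{x^{(i)}}$ per $H_i$ and a unique skipped column $j^\ast_{i,r}$ per path, the clique edges from $v_i$ to $P^i_{r,x^{(i)}_r}$ force $j^\ast_{i,r}=x^{(i)}_r$, while the cross edges between $P^i_r$ and $P^{i+1}_r$ force $j^\ast_{i,r}=j^\ast_{i+1,r}$: any mismatch produces a cross edge both of whose endpoints lie in $I$, by the same inequality check as in the $(\Rightarrow)$ direction. Chaining yields $x^{(1)}=\cdots=x^{(k)}=:x$, and since $H[V(G_x)]=G_x$ the vertices $v_1,\dots,v_k$ form a multicolored independent set of size $k$ in $G_x$. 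The main obstacle is the $(\Leftarrow)$ counting step, which requires carefully exploiting the interplay between the column cliques, the path structure, and the cross edges to eliminate every non-canonical configuration; everything afterwards is a bookkeeping argument driven by the constraints the canonical pattern imposes on $x^{(i)}$ and $j^\ast_{i,r}$.
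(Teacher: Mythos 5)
Your forward direction is exactly the paper's construction and is fine. The problem is in the $(\Leftarrow)$ counting step. You argue: if $a_i \ge 2$, then two selected instance vertices lie in $V^i_x$ and $V^i_y$ with $x_r \ne y_r$ for all $r$, hence two columns are empty on every $P^i_r$ and $b_{i,r} \le t-2$; from this you conclude that any deviation from $(a_i,b_{i,r})=(1,t-1)$ is "strictly lossy with no possible surplus to compensate". That conclusion does not follow from the bounds you actually established. With only two skipped columns per path you get the per-$H_i$ bound $a_i + q(t-2)$, and $a_i$ is not a priori small: one can place up to $t$ pairwise nonadjacent instance vertices in a single $H_i$ (vertices from $V^i_{x^{(1)}},\dots,V^i_{x^{(m)}}$ whose index vectors pairwise differ in every coordinate are mutually nonadjacent, since the only edges between different instances go through the column cliques). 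For $a_i \ge q+1$ your bound gives $a_i + q(t-2) \ge 1+q(t-1)$, i.e.\ the "surplus" $a_i-1$ can match or exceed the selector deficit $q$, so equality in $\sum_i\bigl(a_i+\sum_r b_{i,r}\bigr)=k'$ no longer forces $a_i=1$, and the whole structural analysis that follows is unsupported. The gap is real but repairable: the same adjacency argument shows that \emph{all} $a_i$ selected instance vertices pairwise differ in every coordinate, so they force $a_i$ distinct empty columns on each path, giving $b_{i,r}\le t-a_i$ and a per-$H_i$ total of at most $a_i+q(t-a_i)=1+q(t-1)-(a_i-1)(q-1)$, which is strictly below the canonical value for $a_i\ge 2$ and $q\ge 2$ (and trivially below it for $a_i=0$).

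For comparison, the paper avoids this case analysis entirely by a packing argument: it covers $V(H_i)$ by the $q-1$ paths $P^i_r$, $r\in[q-1]$, together with the $t$ cliques $P^i_{q,j}\cup\bigcup_{x\colon x_q=j}V^i_x$, $j\in[t]$, so each $H_i$ contributes at most $(q-1)(t-1)+t=1+q(t-1)$ outright; equality then forces one vertex per clique, a symmetric count (using coordinate $1$ in place of $q$) gives $|J\cap V(P^i_q)|=t-1$, and hence exactly one instance vertex per $H_i$. Your remaining steps (uniqueness of the size-$(t-1)$ independent set on a path once the skipped column is fixed, the clique edges forcing $j^\ast_{i,r}=x^{(i)}_r$, and the cross edges forcing $j^\ast_{i,r}=j^\ast_{i+1,r}$) match the paper's argument and are fine, but they all rely on the counting conclusion, so you must first close the hole above.
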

\fi
\iflong

\begin{lemma}
  \label{lemma:is:cor1}
  If $G_x$ has a multicolored independent set $I$ of size $k$ for some $x \in [t]^q$, then $H$ has an independent set of size $k'$.
\end{lemma}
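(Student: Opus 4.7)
My plan is to extend the given independent set $I = \{v_1, \ldots, v_k\}$ of $G_x$, where $v_i \in V^i_x$, by exactly $t - 1$ carefully chosen ``selector'' vertices from each path $P^i_r$, yielding a total of $k + qk(t-1) = k'$ vertices. For each $i \in [k]$ and $r \in [q]$, write $j := x_r$ and take from $P^i_r$ the set
\[
  S^i_{r, j} := \{ p^i_{r, \ell, 1} : 1 \le \ell \le j - 1 \} \cup \{ p^i_{r, \ell, 2} : j + 1 \le \ell \le t \}.
\]
This gives $(j-1) + (t-j) = t - 1$ vertices, and by construction $S^i_{r, j}$ is disjoint from the block $P^i_{r, j}$. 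Set $J := I \cup \bigcup_{i \in [k],\, r \in [q]} S^i_{r, x_r}$.

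Next I would verify that $J$ is independent in $H$. The set $I$ is independent in $G_x$ by hypothesis, and any other edge among instance vertices must arise from a common clique $P^i_{r, j} \cup \bigcup_{y : y_r = j} V^i_y$, which forces both endpoints into the same layer $H_i$ — impossible since $I$ contains exactly one vertex per layer. Moreover, $v_i$ meets $P^i_r$ only inside the clique $P^i_{r, x_r}$, which $S^i_{r, x_r}$ explicitly avoids, and it has no neighbors in $H_{i'}$ for $i' \neq i$ besides those already accounted for by $G_x$.

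The main obstacle, I expect, is independence among the selector vertices themselves, especially across the cross edges $p^i_{r, \ell, 1} p^{i+1}_{r, \ell + 1, 2}$ and $p^{i+1}_{r, \ell, 1} p^i_{r, \ell + 1, 2}$ linking $H_i$ and $H_{i+1}$. Within a single $P^i_r$, independence follows from a case check on the two edge types, namely the path edges $p^i_{r, \ell, 1} p^i_{r, \ell+1, 2}$ and the intra-block edges $p^i_{r, \ell, 1} p^i_{r, \ell, 2}$; in both cases the defining conditions on $\ell$ for membership in $S^i_{r, j}$ are incompatible. Between $P^i_r$ and $P^i_{r'}$ with $r \neq r'$ there are no direct edges at all, since the only interaction is through instance vertices. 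For the cross edges, with $j := x_r$, the type-$1$ endpoint $p^i_{r, \ell, 1}$ is chosen only when $\ell < j$, while its partner $p^{i+1}_{r, \ell+1, 2}$ is chosen only when $\ell + 1 > j$; these conditions cannot simultaneously hold, and the symmetric pair is handled identically. This completes the construction of an independent set of size $k'$ in $H$.
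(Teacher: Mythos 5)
Your construction is exactly the one in the paper: take $I$ together with $p^i_{r,j,1}$ for $j\in[x_r-1]$ and $p^i_{r,j,2}$ for $j\in[x_r+1,t]$, giving $k+qk(t-1)=k'$ vertices; the paper leaves the independence check as an easy verification, which you carry out correctly (path/block edges, selector--instance edges, and the cross edges between consecutive layers). So the proposal is correct and takes essentially the same approach as the paper, just with the verification spelled out.
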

\begin{proof}
  One can easily verify that the set $J$ consisting of the following vertices is an independent set in $H$.
  \begin{itemize}
    \item All vertices of $I$.
    \item The vertex $p^i_{r, j, 1}$ for each $r \in [q]$, $i \in [k]$, and $j \in [x_r - 1]$.
    \item The vertex $p^i_{r, j, 2}$ for each $r \in [q]$, $i \in [k]$, and $j \in [x_r + 1, t]$.
  \end{itemize}
  The size of $J$ is then $k + \sum_{ r \in [q], i \in [k]} (x_r - 1) + (t - x_r) = k + qk(t - 1) = k'$.
\end{proof}

\begin{lemma}
  \label{lemma:is:cor2}
  If $H$ has an independent set $J$ of size $k'$, then $G_x$ has a multicolored independent set of size $k$ for some $x \in [t]^q$.
\end{lemma}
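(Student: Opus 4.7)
The plan is to show that any size-$k'=qk(t-1)+k$ independent set $J$ in $H$ decomposes into $qk(t-1)$ instance-selector vertices that encode a common vector $x\in[t]^q$, together with $k$ instance vertices that form a multicolored independent set of $G_x$.

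First I would bound the instance-selector contribution. Each path $P^i_r$ has $2t-2$ vertices, so $|J\cap V(P^i_r)|\le t-1$. Summing over $r\in[q]$ and $i\in[k]$ gives at most $qk(t-1)$ selector vertices in $J$, and since $|J|=qk(t-1)+k$, equality must hold in every single $P^i_r$, leaving exactly $k$ instance vertices in $J$.

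Next I would characterize the size-$(t-1)$ independent sets of a single path $P^i_r$. Traversed in order, this path visits the sets $P^i_{r,1},P^i_{r,2},\ldots,P^i_{r,t}$ of sizes $1,2,2,\ldots,2,1$ with one vertex per step, so a maximum independent set must skip exactly one index $j^*_i\in[t]$ entirely and take exactly one vertex from every other $P^i_{r,j}$; a short induction along the path then forces $p^i_{r,j,1}\in J$ for all $j<j^*_i$ and $p^i_{r,j,2}\in J$ for all $j>j^*_i$. I would then use the cross-edges $p^i_{r,j,1}p^{i+1}_{r,j+1,2}$ and $p^{i+1}_{r,j,1}p^i_{r,j+1,2}$: independence of $J$ along the first family rules out $j^*_{i+1}<j^*_i$, while the second rules out $j^*_i<j^*_{i+1}$. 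Hence $j^*_1=j^*_2=\cdots=j^*_k$, and I would define $x_r$ to be this common value.

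Finally, I would exploit the large clique $P^i_{r,j}\cup\bigcup_{y\colon y_r=j}V^i_y$ attached to each slot: because $J$ intersects $P^i_{r,j}$ for every $j\ne x_r$, no instance vertex $v^i_y$ with $y_r\ne x_r$ can belong to $J$. Intersecting this constraint across all $r\in[q]$ forces every instance vertex of $J$ to lie in $V^i_x$ for $x=(x_1,\ldots,x_q)$. Since each $V^i_x$ is a clique in $G_x$ and the $k$ remaining vertices of $J$ must fit into the $k$ sets $V^1_x,\ldots,V^k_x$, exactly one vertex is chosen from each $V^i_x$; these $k$ vertices are pairwise nonadjacent in $H$ and hence in $G_x$, so they form a multicolored independent set of size $k$ in $G_x$.

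The main obstacle is the two-sided cross-edge analysis that pins the skipped index $j^*_i$ to a common value $x_r$ across all $k$ copies; this is exactly what the asymmetric cross-edges $p^i_{r,j,1}p^{i+1}_{r,j+1,2}$ versus $p^{i+1}_{r,j,1}p^i_{r,j+1,2}$ were designed to enforce, and once this consistency is established the clique structure of the construction collapses the choice of instance vertices to a single target graph $G_x$ essentially for free.
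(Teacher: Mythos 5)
Your overall route is the same as the paper's (tight counting, structure of maximum independent sets on the paths, cross-edges forcing a common selected index, then the cliques $P^i_{r,j}\cup\bigcup_{y\colon y_r=j}V^i_y$ pinning the instance vertices), and your cross-edge and final clique arguments are correct. However, your very first counting step has a genuine gap. From $|J\cap V(P^i_r)|\le t-1$ and $|J|=qk(t-1)+k$ you may only conclude that $J$ contains \emph{at least} $k$ instance vertices; you cannot conclude that equality holds on every path, nor that there are exactly $k$ instance vertices. The reason is that the instance vertices are not a priori bounded by $k$: two vertices $v\in V^i_x$ and $v'\in V^i_{x'}$ in the same copy are nonadjacent in $H$ whenever $x$ and $x'$ differ in every coordinate, so a single copy $H_i$ could in principle contribute several pairwise nonadjacent instance vertices to $J$, trading them against path vertices. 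Since your later steps (the ``skip exactly one index'' characterization and the selector consistency) all rely on $|J\cap V(P^i_r)|=t-1$ holding in every path, the argument as written does not go through.

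The missing ingredient is a bound that couples instance vertices with selector vertices. The paper does this by partitioning $V(H_i)\setminus\bigcup_{r\in[q-1]}V(P^i_r)$ into the $t$ cliques $P^i_{q,j}\cup\bigcup_{x\colon x_q=j}V^i_x$ for $j\in[t]$, so each copy contributes at most $(q-1)(t-1)+t$, summing to exactly $k'$; equality then forces one vertex per clique and $t-1$ per path, and a symmetric partition (swapping the roles of $r=q$ and $r=1$) yields $|J\cap V(P^i_q)|=t-1$ and hence exactly one instance vertex per copy. Alternatively, you could patch your own argument by observing that the $m_i$ instance vertices of $J$ in copy $i$ pairwise differ in all coordinates, so each path $P^i_r$ must avoid $m_i$ distinct groups $P^i_{r,j}$, giving a per-copy bound of $m_i+q(t-m_i)\le q(t-1)+1$ with equality only if $m_i=1$ and every path meets $t-1$ groups. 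Either way, some such coupling argument must be supplied before the rest of your proof applies.
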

\begin{proof}
  First, we show that $|J \cap \bigcup_{x \in [t]^q} V^i_x| = 1$ and $|J \cap V(P^i_r)| = t - 1$ for every $i \in [k]$ and $r \in [q]$.
  Let $J^i := J \cap V(H_i)$ for each $i \in [k]$.
  Since $P^i_r$ is a path on $2t - 2$ vertices, $J^i$ contains at most $t - 1$ vertices of $P^i_r$ for every $r \in [q - 1]$, that is, $|J^i \cap V(P^i_r)| \le t - 1$.
  The set of remaining vertices in $H_i$, that is, $V(H_i) \setminus \bigcup_{r \in [q - 1]} V(P^i_r)$, can be partitioned into $t$ cliques:
  $P^i_{q, j} \cup \bigcup_{x \in [t]^q, x_q = j} V_x^i$ for $j \in [t]$.
  So we have $|J^i \cap (P^i_{q, j} \cup \bigcup_{x \in [t]^q, x_q = j} V_x^i)| \le 1$ for each~$j\in[t]$.
  Consequently, we obtain
  \begin{align*}
    |J|
    &= \sum_{i \in [k], r \in [q - 1]} |J^i \cap P^i_r| + \sum_{i \in [k], j \in [t]} |J^i \cap (P^i_{q, j} \cup \bigcup_{x \in [t]^q, x_q = j} V_x^i)|\\
    &\le k(q - 1)(t - 1) + kt = qkt - qk + k = k'.
  \end{align*}
  
  In fact, the equality holds.
  Hence, $|J^i \cap P^i_r| = t - 1$ and $|J^i \cap (P^i_{q, j} \cup \bigcup_{x \in [t]^q, x_q = j} V_x^i)| = 1$ for each $i \in [k]$ and $r \in [q - 1]$.
  An analogous argument (in which $J^i$ is partitioned into $J^i \cap P^i_r$ for $r \in [2, q]$ and $J^i \cap (P^i_{1,j} \bigcup_{x \in [t]^q, x_r = j} V_x^i)$) also shows that $|J^i \cap P^i_q| = t - 1$.
  We thus have for each $i \in [k]$,
  \begin{align*}
    |J^i \cap \bigcup_{x \in [t]^q, x_r = j} V_x^i| = |J^i \cap (\bigcup_{j \in [t]} (P^i_{q, j} \cup \bigcup_{x \in [t]^q, x_r = j} V_x^i))| - |J^i \cap P^i_q| = t - (t - 1) = 1.
  \end{align*}

  Now, let $x^i \in [t]^q$ be such that $J \cap V_i^{x^i} \ne \emptyset$ for each $i \in [k]$.
  We claim that $J \cap V(P^i_r) = \{ p^i_{r, j, 1} \mid j \in [x_r^i - 1] \} \cup \{ p^i_{r, j, 2} \mid j \in [x_r^i + 1, t]\}$ for every $i \in [k]$ and $r \in [q]$:
  \begin{itemize}
    \item
      If $x_r^i = 1$, then $J \cap V(P^i_r)$ lies inside the path $P^i_r - \{ p^i_{r, 1, 1} \}$ on $2t - 3$ vertices, in which there is exactly one independent set of size $t - 1$, namely, $\{ p^i_{r, j, 2} \mid j \in [2, t] \}$.
      It follows that $J \cap V(P^i_r) = \{ p^i_{r, j, 2} \mid j \in [2, t] \}$.
    \item
      If $x_r^i = t$, then $J \cap V(P^i_r)$ lies inside the path $P^i_r - \{ p^i_{r, t, 2} \}$ on $2t - 3$ vertices, in which there is exactly one independent set of size $t - 1$, namely, $\{ p^i_{r, j, 1} \mid j \in [t - 1] \}$.
      It follows that $J \cap V(P^i_r) = \{ p^i_{r, j, 1} \mid j \in [t - 1] \}$.
    \item
      If $x_r^i \in [2, t - 1]$, then $J \cap V(P^i_r)$ lies in the disjoint union of $P^i_r[\bigcup_{j \in [x_r^i - 1]} P^i_{r, j}]$ and $P^i_r[\bigcup_{j \in [x_r^i + 1, t]} P^i_{r, j}]$.
      Since $P^i_r[\bigcup_{j \in [x_r^i - 1]} P^i_{r, j}]$ (and $P^i_r[\bigcup_{j \in [x_r^i + 1, t]} P^i_{r, j}]$) is a path on $2x_r^i - 3$ (and $2t - 2x_r^i - 1$, respectively) vertices, it has only one independent set of size $x_r^i - 1$ (and $t - x_r^i$, respectively): $\{ p^i_{r, j, 1} \mid j \in [x_r - 1] \}$ (and $\{ p^i_{r, j, 2} \mid j \in [x_r + 1, t]\}$, respectively)
      Thus, our claim holds.
  \end{itemize}

  Finally, we show that $x^i = x^{i + 1}$ for all $i \in [k - 1]$.
  Assume to the contrary that $x_r^i \ne x^{i + 1}_r$ for some $i \in [k - 1]$ and $r \in [q]$.
  If $x_r^i < x_r^{i + 1}$ ($x_r^i > x_{r}^{i + 1}$), then $J$ contains $p^i_{r, x_r^i + 1, 2}$ and~$p^{i + 1}_{r, x_r^i, 1}$ ($p^i_{r, x_{r}^{i + 1}, 1}$ and $p^{i + 1}_{r, x_{r}^{i + 1} + 1, 2}$, respectively).
  By construction, these vertices are adjacent in $H$, which contradicts the fact that $J$ is an independent set.
  We thus have shown that $x^i = x^{i + 1}$ for all $i \in [k - 1]$, and thereby, there exists $x \in [t]^q$ such that $J \cap V^i_x \ne \emptyset$ for all $i \in [k]$.
  Since $H[\bigcup_{i \in [k]} V^i_x] = G_x$, it follows that $J \cap \bigcup_{i \in [k]} V^i_x$ is a multicolored independent set of size $k$ in $G_x$.
\end{proof}
\fi

For $q = 2$, we have a weak $2$-composition from \textsc{Multicolored Independent Set} to \IS{} on $\Oh(t^{q - 2}) = \Oh(1)$-closed graphs by \iflong\Cref{lemma:is:ch,lemma:is:cor1,lemma:is:cor2} \else\Cref{lemma:is:ch,lemma-is-cor-long}\fi.
Since the constructed graph $H$ has no clique of size $k'$, the construction also constitutes a weak $2$-composition to \Ramsey{} on $\Oh(1)$-closed graphs.
Thus, \Cref{lemma:is:noq} implies the following:

\begin{theorem}
  \label{prop:is:lb}
  For any $\varepsilon > 0$, neither \IS{} nor \Ramsey{} has a kernel of size $k^{2 - \varepsilon}$ on graphs of constant $c$-closure, unless \PHC{}.
\end{theorem}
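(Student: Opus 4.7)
The plan is to invoke the weak $q$-composition constructed in this section with $q = 2$ and appeal to \Cref{lemma:is:noq}. Given $t^2$ instances $\mathcal{I}_x = (G_x, (V^1_x, \dots, V^k_x))$ of \textsc{Multicolored Independent Set} with $\Delta_{G_x} \in \Oh(1)$ and $|V^i_x| \in \Oh(1)$, the construction above produces an \IS{} instance $(H, k')$ with $k' = 2kt - k$. By \Cref{lemma:is:ch} with $q = 2$, the $c$-closure of $H$ is $\Oh(t^{q - 2}) = \Oh(1)$, and by \Cref{lemma:is:cor1,lemma:is:cor2} the instance $(H, k')$ is a Yes-instance iff at least one $\mathcal{I}_x$ is. Since each input has $n \in \Oh(k)$ vertices and therefore bit-length $n^{\Oh(1)}$, and since $k' \in \Oh(kt) \subseteq t \cdot n^{\Oh(1)}$, the parameter bound required by the definition of a weak $2$-composition is satisfied. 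Because \textsc{Multicolored Independent Set} is NP-hard in the restricted setting used here, \Cref{lemma:is:noq} then rules out compressions of \IS{} on $\Oh(1)$-closed graphs of size $\Oh(k^{2 - \varepsilon})$ unless \PHC.

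For \Ramsey{}, the plan is to reuse the very same construction once we show that $H$ contains no clique of size $k'$ when $k$ is large. Every maximal clique of $H$ lies inside either (i) some $G_x$, whose clique number is $\Oh(1)$ since $\Delta_{G_x}$ is constant, or (ii) one of the cliques $P^i_{r, j} \cup \bigcup_{x \in [t]^q, x_r = j} V^i_x$, whose size is $\Oh(t^{q - 1}) = \Oh(t)$ for $q = 2$ since each $V^i_x$ has constant size. As $k' = 2kt - k$ dominates both bounds for sufficiently large $k$, any $k'$-vertex induced subgraph of $H$ lying in the Ramsey class must in fact be edgeless. Hence the composition transfers verbatim to \Ramsey{}, and \Cref{lemma:is:noq} yields the identical lower bound.

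The main conceptual obstacle, namely establishing the $\Oh(t^{q - 2})$ closure bound of \Cref{lemma:is:ch} and the correctness of the composition in \Cref{lemma:is:cor1,lemma:is:cor2}, has already been dispatched. What remains for the theorem itself is the lightweight bookkeeping just described: checking the parameter bound $k' \le t \cdot n^{\Oh(1)}$ and ruling out large cliques in $H$ for the Ramsey variant, after which \Cref{lemma:is:noq} delivers both lower bounds in a single application.
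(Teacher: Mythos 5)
Your proposal is correct and follows essentially the same route as the paper: invoke the weak $2$-composition via \Cref{lemma:is:ch,lemma:is:cor1,lemma:is:cor2}, check $k' = 2kt - k \le t\cdot n^{\Oh(1)}$, observe that $H$ has no clique of size $k'$ so the same construction composes into \Ramsey{}, and apply \Cref{lemma:is:noq}. One small nit: your case analysis of maximal cliques omits cliques using the cross-layer edges between $P^i_r$ and $P^{i+1}_r$, but these have constant size, so the conclusion $\omega(H) \in \Oh(t) < k'$ (which the paper asserts without proof) is unaffected.
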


\Cref{prop:is:lb} immediately implies that neither \IS{} nor \Ramsey{} admits a kernel of $k^{1 - \varepsilon}$ vertices.
We improve this lower bound on the number of vertices, taking advantage of the fact that any $n$-vertex $c$-closed graph can be encoded in polynomial time in $\Oh(c n^{1.5} \log n)$~bits~\cite{EHSS11}.
Assume for a contradiction that \IS{} or \Ramsey{} admits a kernel of $k^{4/3 - \varepsilon'}$ vertices for constant $c$.
Using the above-mentioned encoding, we obtain a string with $\Oh(k^{(4/3 - \varepsilon')1.5}\log k) = \Oh( k^{2 - \varepsilon})$ bits.
So a kernel of $k^{4/3 - \varepsilon'}$~vertices implies that there is a compression of~\IS{} or~\Ramsey{} with bitsize $\Oh(k^{2 - \varepsilon})$, a contradiction.
Thus, we have the following:

\begin{theorem}
  For any $\varepsilon > 0$, neither \IS{} nor \Ramsey{} has a kernel of $k^{4/3 - \varepsilon}$~vertices on graphs of constant $c$-closure, unless \PHC{}.
\end{theorem}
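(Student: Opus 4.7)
The plan is to derive a contradiction with the previous \Cref{prop:is:lb} (which forbids compressions of bitsize $k^{2-\varepsilon}$ on constant-closure graphs) by combining a hypothetical vertex-efficient kernel with the succinct encoding of $c$-closed graphs from Eschen et al.~\cite{EHSS11}: every $n$-vertex $c$-closed graph can be stored in $\Oh(c\, n^{1.5}\log n)$ bits, so an upper bound on the number of vertices in a kernel yields a quantitatively controlled bit compression.

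I would fix $c$ constant, fix $\varepsilon > 0$, and assume, for contradiction, that \IS{} (the argument for \Ramsey{} is verbatim the same) admits a kernelization producing instances on at most $k^{4/3-\varepsilon'}$ vertices on graphs of $c$-closure at most $c$, for some $\varepsilon' > 0$. Composing this kernel in polynomial time with the Eschen et al.~encoding yields a polynomial-time compression whose output bitlength is
\[
  \Oh\!\left(c \cdot \left(k^{4/3-\varepsilon'}\right)^{3/2} \log k\right)
  \;=\; \Oh\!\left(k^{2 - \tfrac{3}{2}\varepsilon'}\log k\right),
\]
where the factor $c$ is absorbed into the hidden constant because $c$ is assumed constant.

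Choosing any $\varepsilon \in (0, \tfrac{3}{2}\varepsilon')$ swallows the $\log k$ factor and produces a compression of bitsize $\Oh(k^{2-\varepsilon})$, contradicting \Cref{prop:is:lb}. The same chain applies to \Ramsey{}, since \Cref{prop:is:lb} already rules out compressions of bitsize $k^{2-\varepsilon}$ for both problems on constant-closure graphs.

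There is no real obstacle here; the proposal is essentially bookkeeping. The only points to be careful about are, first, that the assumed $\varepsilon'$ on vertex count must be translated into a strictly positive $\varepsilon$ on bitsize after applying the $3/2$ exponent and absorbing the $\log k$ factor, and, second, that the Eschen et al.~encoding is polynomial-time computable so that the composition with the hypothetical kernel remains a valid polynomial-time compression in the sense of \Cref{lemma:is:noq}. Both are immediate from how the encoding and the compression framework are stated.
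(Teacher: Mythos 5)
Your argument is correct and is essentially the same as the paper's: assume a kernel with $k^{4/3-\varepsilon'}$ vertices, apply the Eschen et al.\ $\Oh(c\,n^{1.5}\log n)$-bit encoding to obtain a compression of bitsize $\Oh(k^{2-\varepsilon})$, and contradict the earlier $k^{2-\varepsilon}$ lower bound for both \IS{} and \Ramsey{}. The bookkeeping with the exponent $3/2$, the absorbed constant $c$, and the $\log k$ factor matches the paper's proof.
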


We also obtain another kernel lower bound on \IS{};
this bound excludes the existence of polynomial kernels (in terms of $c + k$) whose dependence on $c$ is sublinear.

\begin{theorem}
  For any $\varepsilon > 0$, \IS{} has no kernel of size $c^{1 - \varepsilon} k^{\Oh(1)}$ unless \PHC{}.
\end{theorem}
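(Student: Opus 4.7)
The key observation is that the weak composition constructed earlier in this section was deliberately stated for \emph{all} $q \ge 2$: it produces an \IS{} instance $(H,k')$ of $c$-closure $\cl(H) \in \Oh(t^{q-2})$ (\Cref{lemma:is:ch}) and output parameter $k' = qkt - qk + k$ from $t^q$ instances of \textsc{Multicolored Independent Set}. The plan is to play the gap between the exponent $q-2$ governing $c$ and the exponent $q$ controlled by \Cref{lemma:is:noq}: by choosing $q$ sufficiently large, a compression whose dependence on $c$ is sublinear can be turned into a forbidden compression in the parameter $k'$.

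Concretely, suppose for contradiction that \IS{} admits a compression of size at most $\beta \cdot c^{1-\varepsilon} \cdot k^{d}$ for some constants $\varepsilon > 0$, $d \ge 1$, and $\beta > 0$. I would pick the composition parameter $q$ to be any integer constant with $q > 2 + d/\varepsilon$, so that $(q-2)(1-\varepsilon) + d < q$. Since the inputs to the composition satisfy $\Delta \in \Oh(1)$ and $|V_i| \in \Oh(1)$, the bitsize of a single instance is $n = \Oh(k)$, and the output parameter satisfies $k' = \Theta(qkt) = \Theta(tn^{\Oh(1)})$, which is the shape required by the definition of a weak $q$-composition.

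Composing the weak $q$-composition with the hypothetical compression would map $t^q$ instances of \textsc{Multicolored Independent Set} of total bitsize $\Oh(t^q \cdot n^{\Oh(1)})$ to a string of bitsize
\[
\Oh\bigl((t^{q-2})^{1-\varepsilon} \cdot (qkt)^{d}\bigr) \;=\; \Oh\bigl(t^{(q-2)(1-\varepsilon)+d} \cdot k^{d}\bigr) \;=\; \Oh\bigl((k')^{q-\delta}\bigr)
\]
for some $\delta > 0$, using the inequality $(q-2)(1-\varepsilon) + d < q$ and $k \le k'$. Since \textsc{Multicolored Independent Set} is NP-hard even under the bounded-degree and bounded-clique-size restrictions noted before \Cref{lemma:is:ch}, the entire construction is a weak $q$-composition from an NP-hard problem to \IS{} whose output has bitsize $\Oh((k')^{q-\delta})$, contradicting \Cref{lemma:is:noq}.

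The main thing I would be careful about is confirming that $q$ is an honest constant (it depends only on $d$ and $\varepsilon$, both given), so that the composition remains polynomial-time and the $\Oh(\cdot)$ hidden constants absorb the $q$-dependent prefactors. The only genuine arithmetic to pin down is the choice of $q$ ensuring a positive slack $\delta$, which is where the $1-\varepsilon$ rather than $1$ in the exponent of $c$ is crucial: with a straight linear dependence on $c$ the exponent on $t$ in the compressed string would read $(q-2) + d$, which is $\ge q - 1$ and hence would not give $\Oh((k')^{q-\delta})$.
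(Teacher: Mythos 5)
Your overall route is the same as the paper's: both arguments exploit that the weak composition was stated for arbitrary constant $q$, giving $\cl(H) \in \Oh(t^{q-2})$ while $k' = \Theta(qkt)$, and both choose $q$ large relative to the polynomial degree $d$ and to $\varepsilon$ so that the $(1-\varepsilon)$ saving in the exponent of $c$ leaves slack below $q$. Where you diverge, and where there is a genuine gap, is in how \Cref{lemma:is:noq} is invoked. That lemma excludes compressions of \IS{} whose output size is $\Oh(k^{q-\delta})$ as a function of the parameter of the instance being compressed; your hypothesized kernel has size $c^{1-\varepsilon}k^{d}$, and on general \IS{} instances $c$ is not bounded by any function of $k$, so this is \emph{not} a compression of size $\Oh(k^{q-\delta})$ and the lemma does not apply as a black box. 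Your sentence that ``the entire construction is a weak $q$-composition \dots whose output has bitsize $\Oh((k')^{q-\delta})$'' conflates the composition (whose output is the large instance $(H,k')$) with the compression applied afterwards; the size bound you derive holds only on the images of the composition. The paper closes exactly this hole by re-parameterizing: it sets $\ell := c^{\frac{1}{q-2}\left(1-\frac{i}{q-\varepsilon}\right)} k'^{\frac{i}{q-\varepsilon}}$, checks that $\ell \in \Oh(t)$ on the composed instances (so the construction is a weak $q$-composition for the parameter $\ell$), and then notes that a kernel of size $c^{1-\varepsilon}k'^{i}$ is, on \emph{every} instance, a compression of size $\Oh(\ell^{q-\varepsilon})$, contradicting the lemma. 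Your argument can be repaired in the same spirit, or by restricting the target parameterized problem to instances with $c \in \Oh((k')^{q-2})$ (which the composed instances satisfy), but as written the contradiction is not obtained from the stated lemma.

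A smaller arithmetic point: with your choice $q > 2 + d/\varepsilon$, bounding $t \le k'$ and $k \le k'$ separately yields exponent $(q-2)(1-\varepsilon) + 2d$, which is below $q$ only when $d \le 2$. Either take $q > 2 + 2d/\varepsilon$, or use $kt = \Theta(k')$ and bound the product by $(k')^{\max\{(q-2)(1-\varepsilon)+d,\, d\}}$, both of which restore the slack $\delta > 0$ you need.
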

\begin{proof}
  We show that \IS{} admits no kernel of size $c^{1 - \varepsilon} k^{i}$ for any $\varepsilon, i > 0$, unless~\PHC{}.
  Let $q$ be a sufficiently large integer with $\frac{q - \varepsilon - i}{q - 2} > 1 - \varepsilon$ (that is, $q > \frac{i + 3 \varepsilon - 2}{\varepsilon}$).
  A straightforward calculation shows that $\ell := c^{\frac{1}{q - 2} \left( 1 - \frac{i}{q - \varepsilon} \right)} k'^{\frac{i}{q - \varepsilon}} \in \Oh(t)$, and hence \IS{} admits a weak $q$-decomposition for the parameterization $\ell$.
  Thus, \Cref{lemma:is:noq} implies that there is no kernel of size $\ell^{q - \varepsilon} = c^{\frac{q - \varepsilon - i}{q - 2}} k'^i > c^{1 - \varepsilon} k'^i$.
\end{proof}

Finally, we show that \textsc{$\mathcal{G}$-Subgraph} has a kernel of size $k^{\Oh(\gamma)}$ \iflong for any graph class $\mathcal{G}$ containing all complete graphs and edgeless graphs\fi .

\begin{proposition}
  \label{thm:ramsey}
  Any graph $G$ on at least $R_\gamma(a, b) \in (a + b)^{\gamma + \Oh(1)}$ vertices has a clique of size $a$ or an independent set of size $b$.
\end{proposition}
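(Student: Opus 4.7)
The plan is to prove the bound by induction on $a + b$, with the base cases $a \le 2$ (where $R_\gamma(2, b) = b$) and $b \le 2$ (where $R_\gamma(a, 2) = a$) being immediate. For the inductive step, I would take $v_1$ to be the first vertex in the closure ordering of $G$ and analyse the sets $A := N(v_1)$ and $B := V(G) \setminus N[v_1]$.

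If $|A| \ge R_\gamma(a - 1, b)$, then the induced subgraph $G[A]$ is weakly $\gamma$-closed, and by induction it contains either a clique of size $a - 1$ (which together with $v_1$ yields a clique of size $a$) or an independent set of size $b$. Otherwise $|A|$ is small, and I would exploit the closure property of $v_1$, which via \Cref{obs-intersection-q-neighbors} guarantees that each $w \in B$ has at most $\gamma - 1$ neighbors in $A$. Partitioning $B$ according to the neighborhood type $S_w := N(w) \cap A$ gives at most $|A|^{\gamma - 1}$ classes. If any class $B_S$ has size at least $R_\gamma(a, b - 1)$, then by induction $G[B_S]$ contains either a clique of size $a$ or an independent set of size $b - 1$; the latter extends to an independent set of size $b$ via the nonadjacent vertex $v_1$.

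The main obstacle will be closing the resulting recurrence $R_\gamma(a, b) \le R_\gamma(a-1, b) + R_\gamma(a-1, b)^{\gamma - 1} \cdot R_\gamma(a, b-1) + 1$ to the claimed bound $(a+b)^{\gamma + O(1)}$, as a direct expansion only yields a superpolynomial bound. I expect the sharp bound to require two additional ingredients. First, a tighter count of the realized neighborhood types via \Cref{prop:fw81}, applied to the family $\{S_w : w \in B\}$ using the pairwise posterior-intersection constraints inherited from the weak closure, should reduce the type count below the trivial $|A|^{\gamma - 1}$ bound. Second, a complementary maximum-degree recurrence of the form $R_\gamma(a, b) \le b \cdot R_\gamma(a-1, b)$ can be obtained by observing that if every vertex has fewer than $R_\gamma(a-1, b)$ neighbors, then $G$ is $R_\gamma(a-1, b)$-degenerate and hence admits an independent set of size $\lceil n / R_\gamma(a-1, b) \rceil$. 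Interleaving these two inequalities, using whichever is sharper at each step of the induction, should close the recursion at the advertised polynomial rate $(a+b)^{\gamma + O(1)}$.
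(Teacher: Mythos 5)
There is a genuine gap, and you partly acknowledge it yourself: the recursion you set up cannot be closed to a bound of the form $(a+b)^{\gamma+\Oh(1)}$, and the two ingredients you propose for closing it do not suffice. Both the plain two-branch split $R_\gamma(a,b)\le R_\gamma(a-1,b)+R_\gamma(a,b-1)+1$ and the degree/degeneracy recurrence $R_\gamma(a,b)\le b\cdot R_\gamma(a-1,b)$ are valid for \emph{all} graphs -- neither uses the weak closure at all -- so any bound obtained by interleaving them is an upper bound on the classical Ramsey number $R(a,b)$, which is exponential in $\min\{a,b\}$; no choice of ``whichever is sharper at each step'' can beat that. The only place your argument uses closure is \Cref{obs-intersection-q-neighbors} (each $w\in B$ has at most $\gamma-1$ neighbors in $A=N(v_1)$), but the way you then use it -- partitioning $B$ into at most $|A|^{\gamma-1}$ type classes and recursing on one class with parameter $(a,b-1)$ -- only multiplies the recurrence by an extra polynomial factor and still leaves the product structure over decrements of both $a$ and $b$, hence a superpolynomial solution. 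A Frankl--Wilson refinement of the type count (\Cref{prop:fw81}) cannot change this, since the obstruction is the shape of the recurrence, not the constant in the number of classes.

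The idea that is missing is an exchange argument that lets one induct on the independent-set size alone, paying only a fixed polynomial threshold per step instead of recursing in $a$. The paper fixes a closure ordering, splits it into $b$ blocks, and grows an independent set $I$ one block at a time. For the new block, any $\gamma$-subset $X\subseteq I$ has the property that the block vertices whose neighborhood contains $X$ form a clique (they share $\gamma$ common posterior neighbors), so all but at most $a\binom{|I|}{\gamma}$ block vertices see fewer than $\gamma$ vertices of $I$. Grouping those by their \emph{exact} neighborhood $X\subseteq I$, $|X|\le\gamma-1$, some class is large once $n$ exceeds a polynomial threshold; inside it one applies the classical bound $R(a,\gamma)\le\binom{a+\gamma}{\gamma}$ (polynomial in $a$ since $\gamma$ is the exponent) to extract an independent set $I'$ of size $\gamma$, and then swaps: $(I\setminus X)\cup I'$ is independent and strictly larger, because every vertex of the class has the same attachment $X$ and $|I'|=\gamma>|X|$. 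This swap is what converts the closure hypothesis into a per-step gain and yields the $(a+b)^{\gamma+\Oh(1)}$ bound; without it (or a comparable mechanism) your recursion stalls at the classical Ramsey growth rate.
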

\begin{proof}
  Ramsey's theorem states that any graph on at least $R(a, b)$ vertices contains a clique of size $a$ or an independent set of size $b$.
  It is known that $R(a, b) \le \binom{a + b}{b}$.
  So whenever $a \le \gamma$ or $b \le \gamma$, we have $R_{\gamma}(a, b) \le \binom{a + b}{\gamma} \in \Oh((a + b)^{\gamma})$.
  For $a, b > \gamma$, let $R_\gamma(a, b)$ be some number greater than $a \binom{b}{\gamma} + b \binom{a}{\gamma} \sum_{b' \in [b]} \binom{b' - 1}{\gamma}$.

  Let $n = |V(G)|$ and let $v_1, \dots, v_n$ be a closure ordering $\sigma$ of $G$.
  We divide $V(G)$ into $b$~subsets $V_1, \dots, V_b$ of equal size:
  let $V_i = \{ v_{(b - i)n / b + 1}, \dots, v_{(b - i + 1)n / b} \}$ for each $i \in [b]$.
  Notably, $V_1$ is the set of $n / b$ vertices occurring last in $\sigma$ and $V_b$ is the set of $n / b$ vertices occurring first in $\sigma$.
  Moreover, let $G_i := G[\{ v_{(b - i)n / b + 1}, \dots, v_n \}]$ be the induced subgraph by~$\bigcup_{i' \in [i]} V_{i'}$ for each $i \in [b]$.
  Suppose that $G$ contains no clique of size $a$.
  We will show that there is an independent set of size $b$ in $G$.
  More precisely, we prove by induction that $G_i$ contains an independent set of size $i$ for each $i \in [b]$.

  This clearly holds for $i = 1$.
  For $i > 1$, assume that there is an independent set~$I$ of size $i - 1$ in $G_{i - 1}$ by the induction hypothesis.
  Consider vertex sets $X \subseteq I$ of size $\gamma$ and~$V_X = \{ v \in V_i \mid N_G(v) \supseteq X \}$.
  Since $G$ is weakly $\gamma$-closed, $V_X$ is a clique.
  It follows that~$|V_X| < a$.
  Therefore, there are at most $a \binom{b}{\gamma}$ vertices in $V_i$ adjacent to at least $\gamma$~vertices in $I$.
  For $X \subseteq I$ with $X \ne \emptyset$ and $|X| < \gamma$, let $V_X' = \{ v \in V_i \mid N_G(v) \cap I = X \}$.
  Since~$n > R_\gamma(a, b)$, there exists $X \subseteq I$ of size at most $\gamma - 1$ such that $|V_X'| > R(a, \gamma)$.
  By Ramsey's theorem, we then find an independent set $I' \subseteq V_X'$ of size $\gamma$ (recall that $G$ has no clique of size $a$).
  It follows that $(I \setminus X) \cup I'$ is an independent set of size at least $i$ in $G_i$.
\end{proof}

\begin{corollary}
  Let $\mathcal{G}$ be a class of graphs containing all cliques and independent sets.
  Then, \textsc{Maximum $\mathcal{G}$-Subgraph} has a kernel of size $k^{\Oh(\gamma)}$.
\end{corollary}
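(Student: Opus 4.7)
The plan is to deduce the kernelization directly from Proposition~\ref{thm:ramsey} with the choice $a=b=k$. Given an instance $(G,k)$ of \textsc{Maximum $\mathcal{G}$-Subgraph}, I will first compute $|V(G)|$ and compare it against the explicit threshold $R_\gamma(k,k)$ provided by Proposition~\ref{thm:ramsey}. By the stated bound, $R_\gamma(k,k)\in (2k)^{\gamma+\Oh(1)}=k^{\Oh(\gamma)}$, so this threshold itself has the desired kernel size.

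If $|V(G)|\ge R_\gamma(k,k)$, Proposition~\ref{thm:ramsey} certifies that $G$ contains either a clique of size $k$ or an independent set of size $k$. Since $\mathcal{G}$ is assumed to contain all complete and all edgeless graphs, in either case $G$ already admits an induced subgraph on $k$ vertices belonging to $\mathcal{G}$. I may therefore output a trivial yes-instance of constant size (for instance, a clique on $k$ vertices, with parameter $k$). Otherwise $|V(G)|<R_\gamma(k,k)$, and I simply return $(G,k)$ unchanged; its size is then $k^{\Oh(\gamma)}$ as required.

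Both branches run in polynomial time: the only nontrivial step is a cardinality check against a fixed threshold, and Proposition~\ref{thm:ramsey} is an existential statement that we do not need to constructively witness in order to declare a yes-instance. There is essentially no obstacle remaining at this stage, because the combinatorial work has been packaged into Proposition~\ref{thm:ramsey}; the only subtlety to check is that the returned object in the large-graph case is a legitimate yes-instance, which is immediate from the assumption that $\mathcal{G}$ contains complete graphs of every size.
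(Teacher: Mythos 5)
Your proposal is correct and is exactly the argument the paper intends: the corollary follows from Proposition~\ref{thm:ramsey} by the standard win/win threshold check (return a trivial yes-instance if $|V(G)|\ge R_\gamma(k,k)$, which is explicitly computable from the proof of the proposition, and otherwise return the instance itself, of size $k^{\Oh(\gamma)}$). The only nitpick is that a clique on $k$ vertices is not of \emph{constant} size, but its size is clearly within the claimed $k^{\Oh(\gamma)}$ bound, so this does not affect correctness.
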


\section{Dominating Set in Bipartite and Split Graphs}

We now develop a $k^{\Oh(\gamma)}$-size kernel for \textsc{Dominating Set}
on split graphs and a kernel of size $k^{\Oh(\gamma^2)}$ on graphs with constant clique
size.  \DS{} is defined as follows:

\problemdef
{Dominating Set}
{A graph $G$ and $k \in \mathbb{N}$.}
{Is there a set $D$ of at most $k$ vertices such that $N[D] = V(G)$?}

\DS{} is W[2]-hard when parameterized by $k$, even on split graphs and bipartite graphs \cite{RS08} 
 On the positive side, Alon and Gutner \cite{AG09} proved the fixed-parameter tractability for $k + d$ (recall that $d$ is the degeneracy of the input graph).
 Later, Telle and Villanger \cite{TV12} extended the fixed-parameter tractability to $k + i + j$ on $K_{i, j}$-subgraph-free graphs (that is, graphs that contain no complete bipartite graph $K_{i, j}$ as a subgraph).
 On $c$-closed graphs, Koana et al.~\cite{KKS20} showed that \DS{} is solvable in $\Oh(3^{c/3} + (ck)^{\Oh(k)})$~time.
Furthermore, \DS{} admits kernels of size $\Oh(k^{\min\{ i^2 \!,\, j^2 \}})$ on $K_{i, j}$-subgraph-free graphs when $\min\{ i, j \}$ is constant~\cite{PRS12} and of size~$k^{\Oh(c)}$ on $c$-closed graphs~\cite{KKS20}.
These kernels are essentially optimal:
unless \PHC{}, there is no kernel of size $k^{o(d^2)}$ even in bipartite graphs~\cite{CGH17} or $k^{o(c)}$ even in split graphs~\cite{KKS20}.

\iflong
\vspace*{-2ex}
\fi
\subparagraph{A kernel for bipartite graphs.}

First we obtain a polynomial kernel for \textsc{Dominating Set} on weakly $\gamma$-closed bipartite graphs.
To do so, we exploit the fact that any weakly closed graph is bipartite-subgraph-free whenever its maximum clique size~$\omega$ is fixed:

\begin{lemma}
  \label{lemma:gammaomega}
  Any weakly $\gamma$-closed graph of maximum clique size $\omega$ is $K_{\rho, \rho}$-subgraph-free, where $\rho := \gamma + \omega + 1$.
\end{lemma}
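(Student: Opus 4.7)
The plan is a proof by contradiction: suppose $G$ contains $K_{\rho,\rho}$ as a subgraph with bipartition classes $A$ and $B$, each of size $\rho = \gamma + \omega + 1$. The goal is to exhibit a clique of size $\omega + 1$ inside $A \cup B$, contradicting the assumption $\omega(G) = \omega$.

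The key tool is the closure ordering $\sigma$ of $G$. I would list the vertices of $A \cup B$ as $v_1, v_2, \ldots, v_{2\rho}$ in the order induced by $\sigma$, and then prove the following main claim: for every $i \in \{1, \ldots, \omega + 1\}$ and every $j > i$, the edge $v_i v_j$ belongs to $E(G)$. This claim immediately finishes the argument, since then $\{v_1, \ldots, v_{\omega+1}\}$ induces a clique of size~$\omega+1$ in $G$.

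The adjacency $v_i v_j$ is established by a case distinction. If $v_i$ and $v_j$ lie on opposite sides of the bipartition $(A,B)$, the edge is given for free by the definition of $K_{\rho,\rho}$. For the same-side case, say $v_i, v_j \in A$ (the case $v_i, v_j \in B$ is symmetric), I would argue by contradiction: if $v_i v_j \notin E(G)$, then \Cref{obs-intersection-q-neighbors} gives $|Q(v_i) \cap N(v_j)| \le \gamma - 1$. However, since $v_i, v_j \in A$, the bipartite structure yields $B \subseteq N(v_i) \cap N(v_j)$, so $B \cap Q(v_i) \subseteq Q(v_i) \cap N(v_j)$. At most $i-1$ elements of $B$ can precede $v_i$ in $\sigma$, hence
\[
 |B \cap Q(v_i)| \ge \rho - (i-1) = \gamma + \omega + 2 - i \ge \gamma + 1
\]
as long as $i \le \omega + 1$. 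This contradicts the bound $\gamma - 1$, forcing $v_i v_j \in E(G)$.

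I do not anticipate a significant obstacle: the argument is essentially a direct position-counting application of the weak-closure forcing inequality to the bipartite subgraph. The only mildly delicate point is keeping track of which vertex of a nonadjacent pair plays the role of ``$u$'' in \Cref{obs-intersection-q-neighbors} (the one appearing first in $\sigma$, i.e., $v_i$), and verifying that the bound on $|B \cap Q(v_i)|$ holds for every possible interleaving of $A$- and $B$-vertices in the first $i$ positions, which boils down to the trivial fact that at most $i-1$ of those positions are occupied by $B$.
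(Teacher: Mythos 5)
Your proposal is correct and follows essentially the same route as the paper: both arguments use the closure ordering and the fact that, for an early vertex of the biclique, the opposite side supplies at least $\gamma$ common posterior neighbors, so nonadjacency would violate \Cref{obs-intersection-q-neighbors}, yielding a clique of size $\omega+1$ among the first $\omega+1$ biclique vertices. The only cosmetic difference is that you order $A\cup B$ jointly (so your clique may mix the two sides, using the cross edges for free), whereas the paper orders each side separately and finds the clique within one side after a WLOG on which side's $(\omega+1)$-th vertex comes first.
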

\begin{proof}
  Assume for contradiction that $G$ contains $K_{\rho, \rho}$ as a subgraph.
  Then, there are disjoint vertex sets $U := \{ u_1, \dots, u_{\rho} \}$ and $W := \{ v_1, \dots, v_{\rho} \}$ such that $uw \in E(G)$ for every $u \in U$ and $w \in W$.
  Without loss of generality, assume that $u_i$ (and $w_i$) appears before $u_{i + 1}$ (and $w_{i + 1}$, respectively) for each $i \in [\rho - 1]$ in the closure ordering $\sigma$ of~$G$.
  We can also assume without loss of generality that $u_{\omega + 1}$ appears before $w_{\omega + 1}$ in $\sigma$.
  Then, we have $U' := \{ u_{\omega + 1}, \dots, u_{\rho} \} \subseteq Q(w_{i})$ for every $i \in [\omega + 1]$, and hence we obtain $|Q(w_i) \cap Q(w_{i'})| \ge |U'| = \gamma$ for all $i < i' \in [\omega + 1]$.
  By the definition of the weak closure, $\{ w_1, \dots, w_{\omega + 1} \}$ is a clique.
  This contradicts the assumption that $G$ has no clique of size $\omega + 1$.
\end{proof}

\iflong For a graph $G$, let $\omega_G$ denote the maximum clique size of $G$ (we drop the subscript when clear from context). \fi 
Since \DS{} admits a kernel of size $k^{\min \{ i^2\!, j^2 \}}$ on $K_{i, j}$-subgraph-free graphs, we immediately obtain the following proposition from \Cref{lemma:gammaomega}:

\begin{proposition}
  \DS{} has a kernel of size $k^{\Oh((\gamma + \omega)^2)}$.
\end{proposition}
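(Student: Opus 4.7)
The plan is to obtain this kernel as an immediate corollary of \Cref{lemma:gammaomega} by black-box application of the existing kernelization of \DS{} on $K_{i,j}$-subgraph-free graphs due to Philip, Raman, and Sikdar~\cite{PRS12}.

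First, given an input instance $(G,k)$ where $G$ is weakly $\gamma$-closed with maximum clique size $\omega$, I would invoke \Cref{lemma:gammaomega} with $\rho := \gamma+\omega+1$ to conclude that $G$ contains no $K_{\rho,\rho}$ as a subgraph. This structural property is a hypothesis of the PRS12 kernelization, so the input to that algorithm is immediately well-formed; no additional preprocessing is required. Note also that we do not need to actually compute $\omega$: it suffices to set $\rho$ to be any upper bound (for instance $\gamma+\omega+1$), and the PRS12 kernelization only needs to know $\rho$ for the purposes of the size bound in its analysis, not as an algorithmic input.

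Second, I would feed $(G,k)$ into the PRS12 kernelization with $i=j=\rho$. Its guarantee is a Dominating Set instance of size $\Oh(k^{\min\{i^2,j^2\}})=\Oh(k^{\rho^2})=k^{\Oh((\gamma+\omega)^2)}$, which is exactly the bound claimed. Since the PRS12 rules are correctness-preserving reductions for \DS{}, the resulting instance is equivalent to $(G,k)$ and the proposition follows.

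The only mildly subtle point, and therefore the main thing to verify, is that the cited size bound really is $k^{\Oh(\rho^2)}$ uniformly in $\rho$, rather than a bound that hides factors depending super-polynomially on $\rho$; this is what makes the corollary a polynomial kernel precisely when $\gamma+\omega$ is constant. Inspecting~\cite{PRS12} one sees that the size bound $\Oh(k^{\min\{i^2,j^2\}})$ holds unconditionally as a size guarantee, and the ``$\min\{i,j\}$ constant'' phrasing in the excerpt is used only to declare when this bound is polynomial in $k$. Since our application uses $i=j=\rho$ and we are content with polynomiality in $k$ at constant $\gamma+\omega$, this matches the surrounding discussion, completing the proof.
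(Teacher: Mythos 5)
Your proposal is correct and matches the paper's argument exactly: the proposition is obtained by combining \Cref{lemma:gammaomega} with the known $\Oh(k^{\min\{i^2,j^2\}})$-size kernel for \DS{} on $K_{i,j}$-subgraph-free graphs~\cite{PRS12}, applied with $i=j=\rho=\gamma+\omega+1$. Your extra remarks on the uniformity of the bound in $\rho$ are a reasonable sanity check but add nothing beyond what the paper already treats as immediate.
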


Since \iflong$\omega_G \le 2$\else $\omega\le 2$\fi{}  for any bipartite graph $G$, we also obtain the following:

\begin{corollary}
  \DS{} has a kernel of size $k^{\Oh(\gamma^2)}$ in bipartite graphs.
\end{corollary}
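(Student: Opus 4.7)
The plan is to derive the corollary as an immediate specialization of the preceding Proposition, which states that \DS{} admits a kernel of size $k^{\Oh((\gamma+\omega)^2)}$ on every weakly $\gamma$-closed graph, where $\omega$ denotes the clique number of the input graph. The only additional ingredient required is a bound on~$\omega$ for bipartite graphs.

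First, I would observe that every bipartite graph is triangle-free, so its clique number satisfies $\omega \le 2$. Hence $(\gamma + \omega)^2 \le (\gamma + 2)^2 \in \Oh(\gamma^2)$ for all $\gamma \ge 1$. Substituting this estimate into the exponent of the kernel-size bound from the preceding Proposition yields a kernel of size $k^{\Oh(\gamma^2)}$, as claimed. Since the assumed bipartiteness of the input graph is preserved by the reduction rules underlying the Proposition (they are the $K_{i,j}$-subgraph-free kernelization applied to $i = j = \rho$, which operate by vertex deletion and thus cannot create edges), the kernel remains bipartite, so the corollary fits cleanly into the setting.

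The underlying machinery is already in place: \Cref{lemma:gammaomega} shows that a weakly $\gamma$-closed graph of clique number $\omega$ is $K_{\rho,\rho}$-subgraph-free for $\rho = \gamma + \omega + 1$, and the preceding Proposition plugs this into the known $k^{\min\{i^2,\,j^2\}}$-size kernel for \DS{} on $K_{i,j}$-subgraph-free graphs. Consequently, there is no real obstacle in the corollary itself; the substantive work was carried out in \Cref{lemma:gammaomega} and in the Proposition, and the corollary is a one-line specialization obtained by substituting $\omega \le 2$.
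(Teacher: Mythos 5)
Your proposal is correct and follows exactly the paper's route: the corollary is the specialization of the preceding Proposition obtained by noting that bipartite graphs are triangle-free, hence $\omega \le 2$ and $(\gamma+\omega)^2 \in \Oh(\gamma^2)$. The paper's own derivation is the same one-line substitution, so there is nothing to add.
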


\vspace*{-2ex}
\subparagraph{A kernel for split graphs.}

\iflong

\label{sec:dssplit}

We assume that there is no isolated vertices.
For a split graph $G$, there is a bipartition $(C, I)$ of $V(G)$ such that $I$ is a maximum independent set and $C$ is a clique.
We start with a simple observation on \DS{} on split graphs.

\begin{observation}
  \label{obs:dss:cc}
  If $D$ is a solution of $(G, k)$, then there exists a solution $D'$ such that $D' \cap I = \emptyset$.
\end{observation}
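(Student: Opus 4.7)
The plan is to prove the observation by a simple swap argument: any vertex of $D$ lying in the independent side $I$ can be replaced by one of its neighbors in the clique side $C$ without increasing the size or losing the domination property.

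First, I would fix the setup. Let $(C, I)$ be the given bipartition of the split graph $G$, with $C$ a clique and $I$ a maximum independent set, and assume (as the section does) that $G$ has no isolated vertices. Consequently, every vertex $v \in I$ has at least one neighbor in $V(G) \setminus I = C$. For each $v \in D \cap I$ choose an arbitrary such neighbor $u_v \in C \cap N(v)$, and define
\[
  D' := (D \setminus I) \cup \{\, u_v \mid v \in D \cap I \,\}.
\]
Clearly $D' \cap I = \emptyset$ and $|D'| \le |D| \le k$, so it only remains to verify that $D'$ dominates $V(G)$.

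The verification splits into two cases according to whether the vertex to be dominated lies in $C$ or in $I$. For $w \in C$, pick any $x \in D$ with $w \in N_G[x]$. If $x \in C$, then $x \in D'$ and dominates $w$ as before; if $x \in I$, then $xw \in E(G)$, so $u_x \in C$ lies in the clique together with $w$ and hence still dominates $w$. For $w \in I$, pick $x \in D$ with $w \in N_G[x]$. If $x \ne w$, then $x \in N(w) \subseteq C$ because $I$ is independent, so $x \in D'$ and continues to dominate $w$. If $x = w$, then $w \in D \cap I$ was replaced by $u_w \in N(w)$, which dominates $w$.

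There is essentially no obstacle here; the only point requiring a little care is ensuring that every $v \in D \cap I$ really has a neighbor in $C$ to swap to, which is guaranteed by the no-isolated-vertices assumption and the fact that $I$ is independent. Combining the two cases shows $D'$ is a dominating set of size at most $k$ with $D' \cap I = \emptyset$, as required.
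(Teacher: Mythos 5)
Your swap argument is correct and is exactly the intended (straightforward) justification for this observation: since $N[v]\subseteq\{v\}\cup C$ for every $v\in I$ and $C$ is a clique, replacing each $v\in D\cap I$ by any neighbor $u_v\in C$ (which exists because there are no isolated vertices) preserves domination without increasing the size. Your case analysis, including the check that the no-isolated-vertex assumption supplies the required neighbor, matches the paper's reasoning.
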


We obtain the following reduction rule as a consequence of \Cref{obs:dss:cc}.

\begin{rrule}
  \label{rr:dss:nc}
  If there is a vertex $v \in I$ such that $N(v) = C$, then delete $v$.
\end{rrule}

We will assume that \Cref{rr:dss:nc} is applied exhaustively.
Then, there is a unique bipartition $(C, I)$ of $V(G)$ such that $C$ is a maximum clique and $I$ is a maximum independent set.
We will show that there is a closure ordering $\sigma$ in which every vertex in $C$ appears before any vertex in $I$ (\Cref{lemma:dss:order}).
Before doing so, we prove an auxiliary lemma:

\begin{lemma}
  \label{lemma:dss:orderaux}
  If $\delta(G) \ge \gamma$, then there is a vertex $v \in C$ with $\cl(v) < \gamma$.
\end{lemma}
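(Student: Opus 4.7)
The plan is to take any vertex whose closure number in $G$ is less than $\gamma$, which must exist by the definition of weak $\gamma$-closure (applied to the induced subgraph $G$ itself), and argue that under the hypothesis $\delta(G) \ge \gamma$ such a vertex cannot lie in $I$, hence must lie in $C$. So let $v$ be a vertex with $\cl_G(v) < \gamma$ and assume for a contradiction that $v \in I$.

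The key observation is that in a split graph with the canonical bipartition $(C, I)$, any vertex $v \in I$ has $N(v) \subseteq C$, and since \Cref{rr:dss:nc} has been applied exhaustively, in fact $N(v) \subsetneq C$. Combined with $\delta(G) \ge \gamma$, this gives $|N(v)| \ge \gamma$ and guarantees the existence of some $u \in C \setminus N(v)$, so that $u$ and $v$ are nonadjacent.

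Now comes the decisive step: because $C$ is a clique and both $u$ and every vertex of $N(v)$ lie in $C$ (with $u \notin N(v)$), every neighbor of $v$ is adjacent to $u$. Thus $N(v) \subseteq N(u)$ and in particular $|N(v) \cap N(u)| \ge |N(v)| \ge \gamma$. This contradicts $\cl_G(v) < \gamma$, completing the argument.

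The proof is essentially one short chain of inclusions, so there is no real obstacle; the only subtlety is ensuring that the reduction rule has been applied so that $N(v) \subsetneq C$ for $v \in I$, which is why the lemma is stated after \Cref{rr:dss:nc}. I would state this dependency explicitly at the start of the proof to make the use of the rule transparent.
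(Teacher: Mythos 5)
Your proof is correct and takes essentially the same route as the paper: assuming the low-closure vertex guaranteed by weak $\gamma$-closure lies in $I$, you pick a non-neighbor $u \in C \setminus N(v)$ and use that $C$ is a clique to conclude $|N(u) \cap N(v)| \ge \deg(v) \ge \delta(G) \ge \gamma$, contradicting $\cl(v) < \gamma$. The only cosmetic difference is how the case $N(v) = C$ is excluded: you invoke the exhaustive application of \Cref{rr:dss:nc}, whereas the paper derives a contradiction with the maximality of the clique $C$ — both facts are part of the section's standing setup, so either is fine.
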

\begin{proof}
  By the definition of the weak $\gamma$-closure, there exists a vertex $v \in V(G)$ such that $\cl(v) < \gamma$.
  Clearly, the lemma holds for $v \in C$.
  So assume that $v \in I$.
  If there is a vertex $u \in C \setminus N(v)$, then $u$ and $v$ have
  \begin{align*}
    |N(u) \cap N(v)| \ge |N(u) \cap C| = \deg(v) \ge \delta(G) \ge \gamma
  \end{align*}
  common neighbors, contradicting that fact that $\cl(v) < \gamma$.
  Thus, we have $C \setminus N(v) = \emptyset$.
  It follows that $C \cup \{ v \}$ is a clique.
  However, this contradicts the maximality of $C$, and hence~$v \in C$.
\end{proof}

We say that a closure ordering is \emph{good} if every vertex in $C$ appears before any vertex in $I$.
We show the existence of a good closure ordering.

\begin{lemma}
  \label{lemma:dss:order}
  There is a good closure ordering.
\end{lemma}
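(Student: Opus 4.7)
My plan is to build $\sigma$ greedily: as long as the remaining subgraph still contains a vertex of $C$, pick one with closure less than $\gamma$, and once $C$ is exhausted append the remaining vertices of $I$ in any order. Since $I$ is independent, each trailing $I$-vertex sits in an edgeless subgraph and has closure~$0$, so the whole task reduces to showing that at each intermediate stage some vertex of $C$ still present in the current subgraph has closure less than $\gamma$ there.

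For this I would prove a strengthening of~\Cref{lemma:dss:orderaux} that drops both the minimum-degree assumption and the maximality of the clique side: in every weakly $\gamma$-closed split graph $H$ with bipartition $(C_H,I_H)$ and $C_H\ne\emptyset$, some $c\in C_H$ satisfies $\cl_H(c)<\gamma$. This is exactly what the greedy procedure needs, because after peeling off $v_1,\dots,v_{i-1}\in C$ the subgraph $G_i$ is still weakly $\gamma$-closed with split bipartition $(C\setminus\{v_1,\dots,v_{i-1}\},I)$, yet the clique side need no longer be maximum and $\delta(G_i)$ may well fall below $\gamma$. The algebraic key is the formula $\cl_H(c)=\max_{w\in I_H\setminus N_H[c]}\deg_H(w)$ for $c\in C_H$, which is immediate from $N_H(w)\subseteq C_H\setminus\{c\}\subseteq N_H(c)$ for any non-neighbor $w\in I_H$.

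To prove the strengthening by contradiction, assume $\cl_H(c)\ge\gamma$ for every $c\in C_H$. Each such $c$ then admits a \emph{heavy} non-neighbor in $I_H$, meaning a $w\in I_H$ with $\deg_H(w)\ge\gamma$, so $\bigcap_{w\in\tilde H}N_H(w)=\emptyset$, where $\tilde H$ denotes the heavy vertices of $I_H$. Let $H'$ be obtained from $H$ by deleting all non-heavy vertices of $I_H$; since heavy witnesses of $\cl_H(c)\ge\gamma$ survive, $\cl_{H'}(c)\ge\gamma$ still holds on $C_H$. Weak $\gamma$-closure of $H'$ then forces some $v^*\in\tilde H$ with $\cl_{H'}(v^*)<\gamma$. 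A short case split finishes: either $v^*$ has a non-neighbor $u\in C_H$, and then $N_H(v^*)\subseteq C_H\setminus\{u\}\subseteq N_H(u)$ forces $|N(v^*)\cap N(u)|\ge\deg_H(v^*)\ge\gamma$; or $C_H\subseteq N_H(v^*)$, in which case either some other heavy $w\in\tilde H\setminus\{v^*\}$ gives $|N(v^*)\cap N(w)|=\deg_H(w)\ge\gamma$, or else $\tilde H=\{v^*\}$ and $\bigcap_{w\in\tilde H}N_H(w)=N_H(v^*)\supseteq C_H\ne\emptyset$, directly contradicting the emptiness already derived. The main obstacle is precisely this last subcase, where one must realize that a lone heavy vertex already defeats the emptiness of the intersection; once the strengthening is established, the greedy construction described at the outset produces a good closure ordering without further effort.
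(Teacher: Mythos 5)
Your proposal is correct, but it takes a genuinely different route from the paper. The paper proves \Cref{lemma:dss:orderaux} only under the hypothesis $\delta(G)\ge\gamma$ (and using the maximality of $C$), and then argues by induction on $n$ with a case distinction: if $\delta\ge\gamma$ it prepends a clique vertex of closure less than $\gamma$, and if $\delta\le\gamma-1$ it instead peels from the \emph{back}, appending a low-degree vertex of $I$ to a good ordering of $G-v$ (this appending step is sound because in a split graph an $I$-vertex is never a common neighbor of any nonadjacent pair, so re-inserting it last cannot raise earlier closures, though the paper leaves this check implicit). You instead strengthen the auxiliary lemma so that it needs neither the minimum-degree assumption nor the maximality of the clique side: restricting to the heavy vertices of $I_H$ (degree at least $\gamma$) and applying weak closure to the induced subgraph on $C_H$ together with the heavy vertices forces the low-closure vertex to be a heavy $I$-vertex, and your case analysis (non-neighbor in $C_H$; another heavy vertex; a lone heavy vertex adjacent to all of $C_H$) correctly rules out every possibility, using exactly the identity $\cl_H(c)=\max_{w\in I_H\setminus N_H[c]}\deg_H(w)$. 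With that self-sustaining statement, your purely front-to-back greedy works because every intermediate graph is again a weakly $\gamma$-closed split graph with the remaining clique vertices as clique side, and the trailing $I$-vertices live in an edgeless graph. What your approach buys is a one-directional construction that does not rely on \Cref{rr:dss:nc} or on $C$ being a maximum clique (so it yields the more general fact that any split partition admits a closure ordering with the clique side first) and avoids the delicate appending step; the price is a somewhat longer auxiliary argument via the deletion of light $I$-vertices.
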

\begin{proof}
  We prove by induction on the number $n$ of vertices.
  The lemma clearly holds when $\Delta \le \gamma - 1$, as any ordering of $V(G)$ is a closure ordering.
  So assume that $\Delta \ge \gamma$.
  We consider two cases: $\delta \ge \gamma$ and $\delta \le \gamma - 1$.

  If $\delta \ge \gamma$, then there exists a vertex $v \in C$ such that $\cl(v) < \gamma$ by \Cref{lemma:dss:orderaux}.
  By the induction hypothesis, there is a good closure ordering $\sigma'$ of $G - v$.
  Prepending $v$ to $\sigma'$, we obtain a good closure ordering of $G$.

  Now suppose that $\delta \le \gamma - 1$.
  Then, there is a vertex $v \in I$ such that $\deg(v) \le \gamma - 1$.
  Appending $v$ to a closure ordering $\sigma'$ of $G - v$ yields a good closure ordering.
\end{proof}

Note that the proof of \Cref{lemma:dss:order} is constructive.
In fact, one can find a good ordering in polynomial time.
Let us fix a good ordering $\sigma = v_1, \dots, v_n$.
Observe that $N(v) = P(v)$ for every vertex $v \in I$ with respect to $\sigma$.

With a good closure ordering at hand, we aim to develop further reduction rules for \DS{} in split graphs.
To do so, we use the notion of \emph{sunflowers}~\cite{ER60}.
A sunflower with a \emph{core} $T$ is a set family $\mathcal{S} = \{ S_1, \dots, S_k \}$ such that $S_i \cap S_j = T$ for all $i < j \in [k]$.
We say that the sets $S_i \setminus T$ are the \emph{petals} of $\mathcal{S}$.

\begin{lemma}[{\cite[sunflower lemma]{ER60}}]
  \label{lemma:sunflower}
  Let $\mathcal{F}$ be a family of sets, each of size at most $\lambda$.
  If $|\mathcal{F}| \ge \lambda! k^\lambda$, then $\mathcal{F}$ contains a sunflower with at least $k$ petals.
  Moreover, such a sunflower can be found in polynomial time.
\end{lemma}

The sunflower lemma requires the cardinality of sets to be bounded.
Since the vertices of~$G$ could have arbitrary many neighbors, this does not seem to help.
However, by crucially exploiting the weak $\gamma$-closure, we will work around this issue.

To do so, let us introduce further notation.
Recall that $C = \{ v_1, \dots, v_{|C|} \}$ and $I = \{ v_{|C| + 1}, \dots, v_{n} \}$.
For each vertex $u \in I$, let $s_u$ be the smallest $i \in [|C| - 1]$ such that $v_{i + 1} \notin N(u)$ and let $S(u) := N(u) \setminus \{ v_1, \dots, v_{s_u} \}$.
Note that \Cref{rr:dss:nc} ensures that such an integer $s_u$ always exists.

\begin{lemma}
  \label{lemma:dss:sbound}
  For each vertex $u \in I$, it holds that $|S(u)| \le \gamma - 1$.
\end{lemma}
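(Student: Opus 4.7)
The plan is to identify a specific non-neighbor of $u$ in $C$ and then invoke Observation~\ref{obs-intersection-q-neighbors} (or rather the stronger fact stated just before it that $|Q(u) \cap N(v)| < \gamma$ for nonadjacent $u,v$).

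Concretely, fix $u \in I$ and consider the vertex $w := v_{s_u + 1}$. By the definition of $s_u$ we have $w \notin N(u)$, and since $s_u + 1 \le |C|$ the vertex $w$ lies in $C$ and in particular appears before every vertex of $I$ (including $u$) in the good closure ordering $\sigma$.

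Next I would show the key containment $S(u) \subseteq Q^\sigma(w) \cap N(u)$. Any vertex $x \in S(u)$ is by definition a neighbor of $u$ with $x \notin \{v_1,\dots,v_{s_u}\}$; since also $x \neq w$ (because $w \notin N(u)$), $x$ must appear strictly after $w$ in $\sigma$, hence $x \in Q^\sigma(w)$. Combined with $x \in N(u)$ this gives the containment.

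Finally, since $u$ and $w$ are nonadjacent, the observation that $|Q^\sigma(w) \cap N(u)| \le \gamma - 1$ (recorded just before Observation~\ref{obs-intersection-q-neighbors}) immediately yields $|S(u)| \le \gamma - 1$, as desired. I do not anticipate any real obstacle: the only subtlety is confirming that $w$ exists and lies in $C$, which is guaranteed by the exhaustive application of Reduction Rule~\ref{rr:dss:nc} (ensuring that some vertex of $C$ is nonadjacent to $u$) and by the goodness of the ordering.
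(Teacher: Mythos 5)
Your proof is correct and takes essentially the same route as the paper, which also observes that $u$ and $v_{s_u+1}$ are nonadjacent and concludes $|S(u)| = |N(u) \cap Q(v_{s_u + 1})| \le \gamma - 1$. One step you should make explicit: from ``$x$ appears after $w$'' you conclude $x \in Q^\sigma(w)$, but membership in $Q^\sigma(w)$ also requires $x \in N(w)$; this does hold because $S(u) \subseteq N(u) \subseteq C$ and $C$ is a clique containing $w = v_{s_u+1}$, so the argument goes through.
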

\begin{proof}
  By the definition of $s_u$, there is no edge between $u$ and $v_{s_u + 1}$.
  Thus, we have $|S(u)| = |N(u) \cap Q(v_{s_u + 1})| \le \gamma - 1$.
\end{proof}

Consequently, it follows from \Cref{lemma:sunflower} that the set family $\{ S(v) \mid v \in I \}$ contains a sunflower with at least $k + 2$ petals whenever $|I| \ge (\gamma - 1)! (k + 2)^{\gamma - 1}$.

\begin{rrule}
  \label{rr:dss:sunflower}
  If there is a set $I' \subseteq I$ of $k + 2$ vertices such that $\mathcal{S} := \{ S(v) \mid v \in I' \}$ is a sunflower, then delete $u := \argmax_{u' \in I'} s_{u'}$.
\end{rrule}

\begin{lemma}
  \Cref{rr:dss:sunflower} is correct.
\end{lemma}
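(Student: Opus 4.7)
The plan is to establish the equivalence of $(G, k)$ and $(G - u, k)$ in both directions. For the forward direction, \Cref{obs:dss:cc} lets me assume that any dominating set of $G$ lies entirely in $C$, so it does not contain $u \in I$ and remains a dominating set of $G - u$ of the same size. The bulk of the argument lies in the backward direction.

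Suppose $D$ is a dominating set of $(G - u, k)$. Since $G - u$ is still a split graph with clique $C$, I can apply \Cref{obs:dss:cc} again to assume $D \subseteq C$. The only way $D$ could fail to dominate $G$ is if $D \cap N(u) = \emptyset$, so I will derive a contradiction from this assumption by combining the sunflower structure with the choice of $u$.

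Let $T$ be the core of the sunflower $\mathcal{S}$. Since $T \subseteq S(u) \subseteq N(u)$, we get $D \cap T = \emptyset$. Enumerate $I' \setminus \{u\} = \{u_1, \ldots, u_{k+1}\}$. Each $u_j$ must be dominated by $D$, yielding some $w_j \in D \cap N(u_j)$. Now $N(u_j) = \{v_1, \ldots, v_{s_{u_j}}\} \cup S(u_j)$, and because $u$ maximizes $s_{u'}$ over $I'$, the prefix $\{v_1, \ldots, v_{s_{u_j}}\}$ is contained in $\{v_1, \ldots, v_{s_u}\} \subseteq N(u)$. Since $D$ avoids $N(u)$, this forces $w_j \in S(u_j)$, and combined with $D \cap T = \emptyset$ we obtain $w_j \in S(u_j) \setminus T$, i.e., $w_j$ lies in the petal attached to $u_j$. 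Because the petals are pairwise disjoint, $w_1, \ldots, w_{k+1}$ are pairwise distinct elements of $D$, contradicting $|D| \le k$.

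The main subtlety, rather than a true obstacle, is noticing that both ingredients of the rule are genuinely necessary. The sunflower structure yields $k+1$ pairwise disjoint petals, producing $k+1$ distinct witnesses in $D$; the argmax choice of $u$ ensures that a witness $w_j$ cannot be hidden inside the prefix $\{v_1, \ldots, v_{s_{u_j}}\}$ of $N(u_j)$ without also belonging to $N(u)$, which $D$ avoids by assumption. Once both features are exploited, the counting argument is immediate and no routine calculation remains.
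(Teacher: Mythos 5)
Your proof is correct and follows essentially the same route as the paper's: pass to a dominating set contained in $C$ via \Cref{obs:dss:cc}, then show by contradiction that the deleted vertex $u$ must be dominated by noting that the $k+1$ pairwise disjoint petals of the sunflower would otherwise each require a distinct witness in $D$. The only cosmetic difference is that the paper phrases the contradiction as $A \cap D' \neq \emptyset$ for $A := \{v_1,\dots,v_{s_u}\} \cup T$ (with $T$ the core), whereas you phrase it directly as $D \cap N(u) \neq \emptyset$; since $A \subseteq N(u)$ these are interchangeable, and the argmax choice of $u$ is used identically in both to keep witnesses out of the shared prefix.
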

\begin{proof}
  Let $G' := G - u$.
  If $D$ is a solution of $(G, k)$ (we can assume that $u \notin D$ by \Cref{obs:dss:cc}), then it is also a solution of $(G', k)$.

  Conversely, suppose that $D'$ is a solution of $(G', k)$.
  Let $C$ be the core of $\mathcal{S}$ and let $A := \{ v_1, \dots, v_{s_u} \} \cup C$.
  We prove by contradiction that $A \cap D' \ne \emptyset$.
  Recall that $N(u') = S(u') \cap \{ v_1, \dots, v_{s_{u'}} \}$ for every vertex $u' \in I' \setminus \{ u \}$.
  Since $s_{u'} < s_u$ (recall that $u = \argmax_{u' \in I'} s_{u'}$), we have $N(u') \setminus A = S(u') \setminus A \subseteq S(u') \setminus C$.
  Thus, the sets $N(u') \setminus A$ are the petals of $\mathcal{S}$ and thus they are pairwise disjoint for $u' \in I' \setminus \{ u \}$.
  If $A \cap D = \emptyset$, then $D$ must contain at least one vertex from $N(u') \setminus A$ for every vertex $u' \in I' \setminus \{ u \}$.
  However, this contradicts the fact that $|D| \le k$, and consequently, we have $A \cap D \ne \emptyset$.
  Since $N(v) \supseteq A$, $D'$ also dominates $u$ in $G$, implying that $D'$ is also a solution of $(G, k)$.
\end{proof}

\Cref{rr:dss:sunflower} gives us an upper bound on the size of $I$.
We use the following reduction rule to obtain an upper bound on the size of $C$.
The correctness follows from \Cref{obs:dss:cc}.

\begin{rrule}
  \label{rr:dss:twin}
  If there are vertices $u, v \in C$ such that $N(u) \supseteq N(v)$, then delete $v$.
\end{rrule}

Finally, we show that the aforementioned reduction rules yield an optimal (up to constants in the exponent) kernel.

\begin{theorem}
  \label{theorem:ds:split}
  \DS{} has a kernel of size $(\gamma k)^{\Oh(\gamma)}$ in split graphs.
\end{theorem}

\begin{proof}
  We apply \Cref{rr:dss:nc,rr:dss:sunflower,rr:dss:twin} exhaustively.
  First, note that $|I| \in (\gamma k)^{\Oh(\gamma)}$, since otherwise \Cref{rr:dss:sunflower} can be further applied by \Cref{lemma:sunflower}.
  Thus, it remains to show that $|C| \in (\gamma k)^{\Oh(\gamma)}$.
  By \Cref{lemma:dss:sbound}, there are at most $(\gamma - 1) |I|$ vertices in~$C$ that are contained in $S(u)$ for some vertex $u \in I$.
  Since $(\gamma - 1) |I| \in (\gamma k)^{\Oh(\gamma)}$, it remains to show that $|C'| \in (\gamma k)^{\Oh(\gamma)}$, where $C' \subseteq C$ is the set of vertices not contained in $S(u)$ for any $u \in I$.
  Let $T := \{ s_u \mid u \in I \} \cup \{ 0, |C| + 1 \}$ and let $0 = t_1 < t_2 \dots < t_{|T|} = |C| + 1 \in T$.
  Note that for every~$v_i\in C$ there exists an integer~$j_i\in[|T|]$ such that~$t_{j_i}\le i< t_{j_i+1}$.  
  Now, we show that for two distinct vertices~$v_i$ and~$v_{i'}$ the integers~$j_i$ and~$j_{i'}$ are pairwise different.
  Assume towards a contradiction that this is not the case.  
  Then, there are two vertices~$v_i, v_{i'} \in C$ with~$j_i=j_{i'}$ and hence $t_{j_i} \le i,i' < t_{j_i+1}$.
  It follows that $N(v_i) = N(v_{i'})$, which contradicts the fact that \Cref{rr:dss:twin} has been exhaustively applied.
  Thus, we have $|C'| \le |T| - 1\le |I| + 1 \in (\gamma k)^{\Oh(\gamma)}$.
\end{proof}

\else

We also obtain a kernel on split graphs.
The key tool for our kernelization algorithm is the well-known \emph{sunflower lemma}.
Recall that a sunflower with a core $T$ is a set family $\mathcal{S} = \{ S_1, \dots, S_k \}$ such that $S_i \cap S_j = T$ for all $i < j \in [k]$.

\begin{lemma}[{\cite[sunflower lemma]{ER60}}]
  \label{lemma:sunflower}
  Let $\mathcal{F}$ be a family of sets, each of size at most $\lambda$.
  If $|\mathcal{F}| \ge \lambda! k^\lambda$, then $\mathcal{F}$ contains a sunflower with at least $k$ petals.
  Moreover, such a sunflower can be found in polynomial time.
\end{lemma}

At first glance, \Cref{lemma:sunflower}  seems useless because it requires sets of bounded size and even weakly 1-closed split graphs (for example,
 complete graphs) can have arbitrarily large vertex degrees.  To work around
this issue, we use a \emph{good ordering} which we define to be
a~$\sigma$-ordering where the clique vertices come first.  This allows us to define an
\emph{$S$-neighborhood} for every vertex~$v$ in the independent set. This~$S$-neighborhood
basically ignores all neighbors of~$v$ up to its first non-neighbor. We show that the size of the $S$-neighborhood is at most $\gamma - 1$, and
then apply \Cref{lemma:sunflower} to obtain an upper bound $(\gamma k)^{\Oh(\gamma)}$ on
the independent set size.  An application of a simple reduction rule also yields the same
upper bound on the clique size.

\begin{theorem}
  \label{theorem:ds:split}
  \DS{} has a kernel of size $(\gamma k)^{\Oh(\gamma)}$ in split graphs.
\end{theorem}

\fi





\newpage
\iflong
\appendix

\section{Extension to a Generalization of Connected Vertex Cover}
\label{sec:coc-kernel}

One might think that to obtain the kernel with respect to~$\gamma+k$ for \ConVC{}{} the property of the remaining set~$V-S$ of being an independent set (for~$S$ being a vertex cover) is necessary. 
In the following, we show that this is not the case.
More precisely, we extend this result to problems where the sizes of connected components in~$V-S$ is at most~$\ell$ for some constant~$\ell$.
Formally, this parameter is defined as follows.
For a graph~$G$ the~$\ell-COC\ number$ is the smallest size of a vertex set~$S$ such that the size of each connected component in~$V(G)-S$ is at most~$\ell$.
Furthermore, an \emph{$\ell$-component set} is a vertex set where each connected component has size at most~$\ell$.
Clearly, the~$1-COC\ number$ is the smallest size of a vertex cover of~$G$, an~$1$-component set is an independent set, and the~$\ell-COC\ numbers$ are monotony decreasing.
In the corresponding \textsc{$\ell-COC$ Deletion} problem one wants to delete at most~$k$ vertices such that each remaining connected component has size at most~$\ell$. 
The \textsc{Connected $\ell-COC$ Deletion} is the corresponding problem such that the deletion set is connected.
In the remainder of this section we provide a kernel of size~$k^{\Oh(\gamma)}$ for \textsc{Connected $\ell-COC$ Deletion} if~$\ell$ is a fixed constant.

To obtain our kernels it is not sufficient anymore to exhaustively apply Reduction Rule~\ref{rr:twincrown} even for~$\ell=2$.
For example consider a graph~$G$ with~$V(G)=\{v_1, \ldots v_{2n}\}$ such that for each~$i\in[n]$ the vertices~$v_{2i-1}$ and~$v_{2i}$ are connected.  
Furthermore, for each odd~$i\ge 3$, vertex~$v_i$ is connected with vertex~$v_1$ and for each even~$i$ vertex~$v_i$ is connected with vertex~$v_2$.
Then,~$G$ does not have any twins,~$2-COC number(G)=2$ but the graph size is not bounded in some function depending only on~$\gamma$ and~$k$.
To this end, we lift the notation of being twins to larger sets of vertices.

\begin{definition}
Let~$G=(V,E)$ be a graph and let~$A,B\subseteq V(G)$ such that~$|A|=\ell=|B|$.
The sets~$A$ and~$B$ are \emph{$\ell$-twins} if and only if there exists an ordering~$a_1, \ldots, a_\ell$ of the vertices of~$A$ and an ordering~$b_1, \ldots , b_\ell$ of the vertices of~$B$ such that for each~$i\in[\ell]$ we have~$N(a_i)\setminus A=N(b_i)\setminus B$.
\end{definition}

Note that the twin relation is identical with the~$1$-twin relation.
Furthermore, observe that the~$\ell$-twin relation is also an equivalence relation which can be computed in $n^{2\ell+\Oh(1)}$~time.
In the following, we prove the generalization of Theorem~\ref{thm:bound-sets-p-and-q} for~$\ell$-twins and~$\ell$-component sets.

\begin{theorem}
\label{thm:bound-sets-p-and-q-generalization}
 Let~$G$ be a graph and let~$D$ be an~$\ell$-component set of~$G$.
  Suppose that for every connected component $Z\in D$, there are at most $t$ other connected components $Z' \in D-Z$ such that~$Z$ and~$Z'$ are~$|Z|$-twins.
  Then, it holds that~$|D| \in t \cdot k^{\Oh(\gamma)}$, where~$k = n - |D|$.
\end{theorem}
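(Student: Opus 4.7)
The plan is to mimic the proof of \Cref{thm:bound-sets-p-and-q} while accounting for internal edges of the components of $G[D]$. Set $S := V(G) \setminus D$ so that $|S| = k$, and observe that any two vertices lying in different components of $G[D]$ are nonadjacent in $G$; in particular, Observation~\ref{obs-intersection-q-neighbors} still applies to them. Because $|D|$ is at most $\ell$ times the number of components of $G[D]$ and $\ell$ is a fixed constant, it suffices to bound the number of components by $t \cdot k^{\Oh(\gamma)}$.

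The first step is to observe that the $s$-twin class of a size-$s$ component $Z$ is determined by the multiset $\{ N(z) \setminus Z : z \in Z \}$, which in our setting equals $\{ N(z) \cap S : z \in Z \}$ since different components of $G[D]$ share no edges in $G$. Hence the number of $s$-twin classes for size-$s$ components is at most the number of size-$s$ multisets from the set $\{ N(v) \cap S : v \in D \}$, and in particular at most $P^{\ell}$, where $P := |\{ N(v) \cap S : v \in D \}|$. Since each $s$-twin class contains at most $t+1$ components by hypothesis, this gives that the number of components is in $\Oh(t \cdot P^{\ell})$.

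Next I would bound $P$. Decomposing each external neighborhood under the closure ordering $\sigma$ as $N(v) \cap S = (P(v) \cap S) \cup (Q(v) \cap S)$, we have $P \le |\mathcal{P}_S| \cdot |\mathcal{Q}_S|$ with $\mathcal{P}_S := \{ P(v) \cap S : v \in D \}$ and $\mathcal{Q}_S := \{ Q(v) \cap S : v \in D \}$. To bound $|\mathcal{Q}_S|$, I would pick one representative vertex for each distinct value in $\mathcal{Q}_S$ and then partition these representatives by their position within their own component under $\sigma$; each of the at most $\ell$ resulting groups contains at most one representative per component, so all pairs of representatives within a group are nonadjacent in $G$ and have $Q$-sets with pairwise intersection in $[\gamma-1]$ by Observation~\ref{obs-intersection-q-neighbors}. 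Applying \Cref{prop:fw81} to each group yields at most $\Oh(k^{\gamma})$ representatives per group, and hence $|\mathcal{Q}_S| \le \ell \cdot \Oh(k^{\gamma}) = \Oh(k^{\gamma})$. The same position-splitting idea combined with the clique-enumeration argument for $|\mathcal{P}|$ in \Cref{thm:bound-sets-p-and-q} yields $|\mathcal{P}_S| \in \Oh(3^{\gamma/3} k^{\gamma+3})$. Chaining the bounds would then give $P \le k^{\Oh(\gamma)}$ and therefore $|D| \in t \cdot k^{\Oh(\gamma)}$.

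The main obstacle I expect is verifying that the clique-enumeration part of the $|\mathcal{P}_S|$ bound survives restriction to one position group at a time. Within a single position group the representatives lie in distinct components and are hence pairwise nonadjacent, which is precisely the property exploited by the original argument, so I expect both the $I_0$ part (representatives whose prior external neighborhood contains a nonadjacent pair in $S$) and the $\mathcal{P}_1$ part (representatives whose prior external neighborhood forms a clique in $G[S]$) to go through verbatim. The only additional care needed is to ensure that the multiplicative blow-up by the factor $\ell$ coming from the number of position groups is absorbed into $k^{\Oh(\gamma)}$, which is immediate since $\ell$ is a fixed constant.
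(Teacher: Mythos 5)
Your proposal is correct and follows essentially the same route as the paper's proof: both generalize \Cref{thm:bound-sets-p-and-q} by exploiting that vertices in distinct components of $G[D]$ are nonadjacent, bounding per vertex position the number of possible $Q$-neighborhoods in $S$ via \Cref{prop:fw81} and of $P$-neighborhoods via the nonadjacent-pair/maximal-clique argument, raising to the constant power $\ell$, and multiplying by the twin-class size bound $t+1$. The only difference is organizational: you count distinct vertex-level external neighborhoods and then size-$s$ multisets of them, whereas the paper books the same count through component-level $P_Z^f$-, $Q_Z^f$-, $N_Z^f$-equivalence classes over isomorphism types $Z$ and isomorphisms $f$ (absorbing an extra $\ell!\,2^{\ell^2}$ factor), which changes nothing substantive.
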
 

\begin{proof}
  Let $S := V(G) \setminus D$.
  In the following, let~$Z$ be a fixed graph with at most~$\ell$ vertices.
  Furthermore, let~$A$ and~$B$ be two connected components in~$D$ which are isomorphic to~$Z$ with the isomorphism~$f$ which maps the vertices~$a_1,\ldots , a_{|Z|}$ of~$A$ to the vertices of~$B$.  
  In the following, we say that~$A$ and~$B$ are \emph{$P_Z^f$-equivalent}, \emph{$Q_Z^f$-equivalent}, and \emph{$N_Z^f$-equivalent} if~$P(a_i)\setminus A = P(f(a_i))\setminus B$,~$Q(a_i)\setminus A = Q(f(a_i))\setminus B$, and~$N(a_i)\setminus A = N(f(a_i))\setminus B$, respectively.
  Note that since~$A$ and~$B$ are connected components in~$D$ all the above defined sets are contained in~$S$.
  Let~$\mathcal{P}_Z^f$,~$\mathcal{Q}_Z^f$, and~$\mathcal{N}_Z^f$ denote the collection of all~$P_Z^f$-equivalence,~$Q_Z^f$-equivalence, and~$N_Z^f$-equivalence classes, respectively.
  Furthermore, by~$\mathcal{N}$ we denote the union of all classes~$\mathcal{N}_Z^f$ for some graph~$Z$ with at most~$\ell$ vertices and some isomorphism~$f$ between two graphs which are isomorphic to~$Z$.
  Since there is at most one $N_Z^f$-equivalence class for every $P_Z^f$-equivalence class and $Q_Z^f$-equivalence class, we have $|\mathcal{N}_Z^f| \le |\mathcal{P}_Z^f| \cdot |\mathcal{Q}_Z^f|$.
  Furthermore, since there exist at most~$2^{\ell^2}$ graphs with at most~$\ell$ vertices and since there are at most~$\ell!$ many isomorphism for each fixed graph with at most~$\ell$ vertices, we conclude that~$\mathcal{N}\le \ell!\cdot 2^{\ell^2}|\mathcal{N}_Z^f|$ for a graph~$Z$ with at most~$\ell$ vertices and an isomorphism~$f$ which maximizes~$|\mathcal{N}_Z|$.
  By the assumption that there are at most~$t$ vertices in each~$N_Z^f$-equivalence class, we also have~$|D| \le t \cdot |\mathcal{N}|$.
  Since~$\ell$ is a constant, it thus suffices to show that~$|\mathcal{P}_Z^f|, |\mathcal{Q}_Z^f| \in k^{\Oh(\gamma)}$ for each graph~$Z$ with at most~$\ell$ vertices and each corresponding isomorphism~$f$.
  Now, the proof works analog to the proof of Theorem~\ref{thm:bound-sets-p-and-q}.

  First, we prove that $|\mathcal{Q}_Z^f| \in k^{\Oh(\gamma)}$, using the result of Frankl and Wilson (\Cref{prop:fw81}).
To this end, we consider two distinct equivalence classes~$X$ and~$Y$ in~$\mathcal{Q}_Z^f$, let~$A\in D$ be a connected component which is associated with the equivalence class~$X$, and let~$B\in D$ be a connected component which is associated with the equivalence class~$Y$. 
Furthermore, let~$a_1, \ldots , a_{|Z|}$ be a fixed ordering of the vertices of~$A$.
Since~$X$ and~$Y$ are distinct, we conclude by the definition of~$Q_Z^f$-equivalence that there exists an index~$i\in[|Z|]$ such that for~$A_i:=Q(a_i)\setminus A$ and~$B_i:=Q(f(a_i))\setminus B$ we have~$A_i\neq B_i$.
By the definition of weak-closure we observe that~$|A_i\cap B_i|<\gamma$.
Hence,~$|A_i\cap B_i|\in L$ for~$L\in[\gamma-1]$.
Consequently, by \Cref{prop:fw81}, we have $|S|^{\Oh(|L|)} = k^{\Oh(\gamma)}$ possibilities for different neighborhoods in~$S$ of the~$i$th vertex in~$A$.
Since~$|A|\le|Z|$ and~$\ell$ is a constant, there exist~$k^{\Oh(\gamma)\cdot\ell}=k^{\Oh(\gamma)}$ equivalence classes in~$\mathcal{Q}_Z^f$.

  Second, we show that $|\mathcal{P}_Z^f| \in k^{\Oh(\gamma)}$.
  Let~$D_0 := \{ v \in D \mid \exists u, w \in (P(v)\cap S) \colon uw \notin E(G) \}$ be the set of vertices in~$D$ with nonadjacent prior-neighbors in~$S$.
  By the definition of weak $\gamma$-closure, there are at most $\gamma- 1$ vertices of $D_0$ for every pair of nonadjacent vertices in $S$.
  Thus, we have $|D_0| < \gamma \binom{|S|}{2} \in \Oh(\gamma k^2)$.
  Hence, also the set of~$\mathcal{P}_Z^f$-equivalence classes in which there exists a vertex with nonadjacent prior-neighbors is bounded by~$\Oh(\gamma k^2)$.
  
  Let~$D_1:=D\setminus D_0$ be the set of vertices of~$D$ where all prior-neighbors form a clique. 
  Next, we bound the number of equivalence classes in~$\mathcal{P}_Z^f$.
  Since a weakly~$\gamma$-closed graph on~$n$ vertices has~$\Oh(3^{\gamma / 3} n^2)$ maximal cliques~\cite{FRSWW20}, there are~$\Oh(3^{\gamma / 3} k^2)$ equivalence classes $P_Z^f$ such that each graph~$A$ in that class contains a vertex~$v$ such that~$P(v)\setminus A$ constitutes a maximal clique in $G[S]$.
  Now, it remains to bound the number of equivalence classes in~$\mathcal{P}_Z^f$ such that for each graph~$A$ in that class and each vertex~$a_i\in A$ for~$i\in[Z]$ the set~$A_i:=P(a_i)\setminus A$ is a proper subset of some maximal clique~$C_i\subseteq S$.
  Formally,~$A_i\subset C_i\subseteq S$.
  Let~$u_i\in C_i$ be the first vertex in the closure ordering~$\sigma$.
  Since~$A_i\subseteq C_i\subseteq N(u_i)=P(u_i)\cup Q(u_i)$, we have~$A_i=(A_i\cap P(u_i))\cup (A_i\cap Q(u_i))$.
  Since~$A_i\cap P(u_i)=C_i\cap P(u_i)$ by the definition of~$u_i$, we can rewrite~$A_i=(C_i\cap P(u_i))\cap B_i$, where~$B_i:=A_i\cap Q(u_i)$.
  Hence, there is at most one different neighborhood in~$S$ for every maximal clique~$C_i$ in~$G[S]$, vertex~$u_i\in S$, and vertex set~$B_i\subseteq S$ and thereby, we have~$\Oh(3^{\gamma/3})k^2\cdot k\cdot b_i$ many of these neighborhoods, where~$b_i$ denotes the number of choices for~$B_i$.   
Since~$A_i=P(a_i)\setminus A\subseteq N(a_i)$ we conclude that~$B\subseteq Q(u_i)\cap N(a_i)$.
By the definition of weak~$\gamma$-closure we observe that~$|B|\le |Q(u_i)\cap N(a_i)|<\gamma$.
Hence,~$b_i\in k^{\Oh(\gamma)}$.
Overall, for each vertex~$a_i\in A$ we have~$k^{\Oh(\gamma)}$ different neighborhoods.
Thus, the number of such equivalence classes~$\mathcal{P}_Z^f$ is~$k^{\Oh(\gamma)\cdot\ell}=k^{\Oh(\gamma)}$ since~$\ell$ is a constant. 
\end{proof}

To obtain our kernels we generalize Reduction Rules~\ref{rr:isolated} and~\ref{rr:twincrown}.

\begin{rrule}
\label{rr:remove-compos-at-most-ell}
If~$G$ contains a connected component~$Z$ with at most~$\ell$ vertices, then delete~$Z$.
\end{rrule}

The correctness of this rule is obvious and clearly this rule can be exhaustively applied in $\Oh(n^\ell)$~time.

\begin{rrule}
  \label{rr:neighbors-conn-compo}
Let~$T_1, \ldots , T_x\subseteq V(G)$ be a set of~$x$ many~$r$-twins, for some~$r\in[\ell]$.
If~$x\ge k+\ell+2$ , then remove all vertices in~$T_x$ from~$G$.
\end{rrule}

\begin{lemma}
Reduction Rule~\ref{rr:neighbors-conn-compo} is correct.
\end{lemma}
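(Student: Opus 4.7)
The plan is to prove the equivalence $(G,k) \in L \iff (G - T_x, k) \in L$ for \textsc{Connected $\ell$-COC Deletion}~$L$, anchored on a single structural lemma: for any feasible solution $S$ of size at most $k$ of either instance, the common external neighborhood $N := N_G(T_i) \setminus T_i$ — which is the same set for every~$i \in [x]$ and, as a direct consequence of the $r$-twin definition, contains no vertex of any $T_j$ — is entirely contained in~$S$. To prove this, observe that the $T_i$'s are pairwise disjoint, so $|S| \le k$ forces at least $x - k \ge \ell + 2$ of them (or at least $\ell + 1$ if we are already working in $G - T_x$) to lie entirely in $V(G) \setminus S$. Each $u \in N$ has, by the $r$-twin bijection, a neighbor in every such $T_i$; so if some $u \in N$ were missing from $S$, the component of $u$ in $G - S$ would contain $u$ together with at least $\ell + 1$ pairwise-disjoint $T_i$'s, yielding size at least $1 + (\ell + 1) r > \ell$ and contradicting feasibility.

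Given this lemma, the backward direction is straightforward: for a solution $S'$ of $(G - T_x, k)$ we have $N \subseteq S'$, so in $G - S'$ the set $T_x$ is isolated (all its external neighbors lie in $N \subseteq S'$) and contributes components of total size at most $r \le \ell$; all other components coincide with those of $(G - T_x) - S'$. Since $S' \cap T_x = \emptyset$, the graph $G[S'] = (G - T_x)[S']$ is connected, $|S'| \le k$, and every component of $G - S'$ has size at most $\ell$, so $S'$ solves $(G, k)$.

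For the forward direction, let $S$ solve $(G, k)$. If $T_x \cap S = \emptyset$ then $\bar{S} := S$ works: by the lemma $T_x$ is already an isolated family of components in $G - S$, so removing~$T_x$ from the graph affects nothing else. Otherwise $T_x \cap S \ne \emptyset$, and among the at least $\ell + 2$ $T_i$'s avoiding $S$ there is some $j \ne x$; define $\bar{S} := (S \setminus T_x) \cup f(S \cap T_x)$, where $f : T_x \to T_j$ is the $r$-twin bijection. Then $|\bar{S}| = |S| \le k$, and because the external neighborhoods of $v \in T_x$ and $f(v) \in T_j$ coincide outside $T_x \cup T_j$, the edges between $\bar{S}$ and $V(G - T_x) \setminus \bar{S}$ mirror those between $S$ and $V(G) \setminus (T_x \cup S)$; hence the components of $(G - T_x) - \bar{S}$ are, up to substituting a $T_j$-fragment for the corresponding $T_x$-fragment, the same as those of $G - S$, all of size at most~$\ell$.

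The main obstacle is showing that $G[\bar{S}]$ remains connected: the $r$-twin definition constrains only external neighborhoods, so $G[T_x]$ and $G[T_j]$ may differ internally, and the naive swap need not induce an isomorphic subgraph. I would resolve this by a pigeonhole step on internal isomorphism types — the number of graphs on at most $\ell$ vertices is a constant $C_\ell$ depending only on $\ell$, so a sufficiently large pool among $T_1,\dots,T_x$ shares the same isomorphism type as $T_x$ (at worst tightening the threshold from $k + \ell + 2$ to $k + C_\ell(\ell + 2)$, or equivalently refining the $r$-twin equivalence to require $f$ to be a graph isomorphism $G[A] \to G[B]$, which rescales all earlier class counts by a constant factor). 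With such a $T_j$, $G[\bar{S}] \cong G[S]$ and connectivity carries over.
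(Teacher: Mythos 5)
Your structural lemma and your backward direction are exactly the paper's argument: the paper also first observes that for any solution $R$ of size at most $k$ at least $\ell+2$ of the (pairwise disjoint) twins avoid $R$, and that therefore their common external neighborhood $N$ must be contained in $R$, since otherwise a vertex $u\in N\setminus R$ together with its neighbors in those twins forms a component of size greater than $\ell$. (One small inaccuracy on your side: the component of $u$ need not contain the $T_i$'s entirely, as the definition of $r$-twins does not force $G[T_i]$ to be connected; but one neighbor of $u$ per avoided twin already gives size at least $\ell+2>\ell$, so the step stands.) Where you part ways with the paper is the forward direction. The paper simply asserts that $S':=S\setminus T_x$ is again a \emph{connected} solution, i.e., it silently skips precisely the point you flag: deleting $S\cap T_x$ (or replacing it by $f(S\cap T_x)$ in another twin) can destroy connectivity of the deletion set, because the $r$-twin relation of the paper constrains only external neighborhoods.

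Your worry is not an artifact of your proof strategy: with the paper's definition of $r$-twins, \Cref{rr:neighbors-conn-compo} as stated is in fact unsound, so no argument can close this gap without changing the rule or the definition. Take $\ell=r=2$, twins $T_i=\{a_i,b_i\}$ for $i\in[x]$ with $N(a_i)\setminus T_i=\{u\}$ and $N(b_i)\setminus T_i=\{v\}$ for all $i$, $uv\notin E(G)$, and the internal edge $a_ib_i$ present only for $i=x$. Then $\{u,a_x,b_x,v\}$ is a connected $2$-COC deletion set, so $(G,k)$ is a yes-instance for $k=4$ and any $x\ge k+\ell+2$; but $G-T_x$ is the disjoint union of two stars on $x$ vertices, any connected deletion set meets only one of them, and the other retains a component of size $x>\ell$, so $(G-T_x,k)$ is a no-instance for every $k$. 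This shows that the internal structure of the twins cannot be ignored, and it also shows that your pigeonhole variant does not suffice as phrased: in this example no other twin shares the internal isomorphism type of $T_x$, however large $x$ is, so one would have to restate the rule to delete a member of a sufficiently large subfamily of internally isomorphic twins rather than $T_x$ itself. Your other repair is the right one: refine the $r$-twin relation so that the bijection $f$ is additionally an isomorphism between $G[A]$ and $G[B]$. With that refinement your swap $\bar{S}=(S\setminus T_x)\cup f(S\cap T_x)$ preserves both the component bound (each twin's leftover is walled off by $N\subseteq\bar{S}$) and connectivity, the backward direction is unchanged, and the counting in \Cref{thm:bound-sets-p-and-q-generalization} is unaffected up to factors depending only on $\ell$, since its proof fixes an isomorphism type anyway. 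So, relative to the paper, you have not merely reproved the lemma differently; you have located a genuine flaw in its forward direction and identified the correct fix.
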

\begin{proof}
Let~$G':=G-T_x$ be the reduced graph. 
Before we prove the correctness, we make the following observation for any connected~$\ell-COC$ set of size at most~$k$.
Since~$|R|\le k$ and~$x\ge k+\ell+2$, we observe that there are at least~$\ell+2$ many~$r$-twins, denoted by~$T_1, \ldots , T_{\ell+2}$, which contain no vertices of~$R$.
Clearly,~$N(T_i)\subseteq R$ for each~$i\in[\ell+2]$ since otherwise~$G-R$ would contain a connected component with at least~$\ell+3$ vertices.

Hence, if~$G$ contains a connected~$\ell-COC$ set~$S$ of size at most~$k$, then also the set~$S':=S\setminus T_x$ is also a connected~$\ell-COC$ set for~$G$ and thus also for~$G'$.
Conversely, if~$G'$ contains a connected~$\ell-COC$ set~$S'$ of size at most~$k$, then by the above argumentation,~$S'$ is also a connected~$\ell-COC$ set of size at most~$k$ for~$G$.
\end{proof}

Note that Reduction Rule~\ref{rr:neighbors-conn-compo} can be exhaustively performed in polynomial time since for each~$r\in[\ell]$ the~$r$-twin relation can be computed in $n^{2r+\Oh(1)}$~time,~$\ell$ is a constant and Reduction Rule~\ref{rr:neighbors-conn-compo} can be applied at most~$n$ times.

Next, we can generalize Theorem~\ref{thm:convc}. 
Theorem~\ref{thm:kernel-l-coc-connected} is a direct consequence of exhaustively applying Reduction Rules~\ref{rr:remove-compos-at-most-ell} and~\ref{rr:neighbors-conn-compo} and the subsequent application of Theorem~\ref{thm:bound-sets-p-and-q-generalization}.

\begin{theorem}
\label{thm:kernel-l-coc-connected}
\textsc{Connected $\ell-COC$ Deletion} in weakly~$\gamma$-closed graphs has a kernel of size~$k^{\mathcal{O}(\gamma)}$.
\end{theorem}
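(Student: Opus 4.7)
The plan is to combine the two reduction rules with the generalized neighborhood-counting lemma (Theorem~\ref{thm:bound-sets-p-and-q-generalization}) in the same spirit as the proof of Theorem~\ref{thm:capvc}. First I would apply Reduction Rule~\ref{rr:remove-compos-at-most-ell} once so that every remaining connected component of $G$ has more than $\ell$ vertices; this is safe because an isolated small component can never be hit by a connected solution and, when empty, trivially satisfies the $\ell$-COC constraint. Then I would apply Reduction Rule~\ref{rr:neighbors-conn-compo} exhaustively, so that for every $r \in [\ell]$ no $r$-twin equivalence class contains more than $k + \ell + 1$ vertex sets. Both rules run in polynomial time since $\ell$ is constant and the $r$-twin relation is computable in $n^{2r+\mathcal{O}(1)}$ time.

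Next I would argue the size bound on yes-instances. Let $(G,k)$ be a reduced yes-instance and let $S$ be a connected $\ell$-COC deletion set of size at most $k$. Put $D := V(G) \setminus S$; by definition, $D$ is an $\ell$-component set of $G$. Now I want to invoke Theorem~\ref{thm:bound-sets-p-and-q-generalization} with this $D$. Its hypothesis requires that, for every connected component $Z$ of $G[D]$, at most $t$ other components $Z' \subseteq D$ form $|Z|$-twins with $Z$. After exhaustive application of Reduction Rule~\ref{rr:neighbors-conn-compo}, every $r$-twin class has at most $k+\ell+1$ members, so we may take $t = k + \ell + 1 \in \mathcal{O}(k)$. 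The theorem then yields
\[
  |D| \in t \cdot k^{\mathcal{O}(\gamma)} \subseteq k^{\mathcal{O}(\gamma)},
\]
where $k := n - |D| = |S| \le k$ in the notation of that theorem. Combined with $|S| \le k$, this gives $|V(G)| = |S| + |D| \in k^{\mathcal{O}(\gamma)}$, as desired.

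The subtle point — and the one I would treat most carefully — is the equivalence between being an element of an $r$-twin class in the \emph{reduced} graph and being one of the connected components referenced in Theorem~\ref{thm:bound-sets-p-and-q-generalization}. Reduction Rule~\ref{rr:neighbors-conn-compo} is formulated for $r$-twin \emph{vertex sets} in $G$, which a priori need not be connected components of $G[D]$. However, since every component of $G[D]$ has size at most $\ell$, each such component is a candidate vertex set for some $r \le \ell$; and any two components of $G[D]$ that are $|Z|$-twins in the sense of the theorem are in particular $r$-twins in the sense of Reduction Rule~\ref{rr:neighbors-conn-compo}, so the bound $t = k+\ell+1$ transfers. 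Finally, since the deletion set $S$ has size at most $k$ and every full $r$-twin class of components sits in $D$ after the reduction (its neighborhoods all lie in $S$, by the size-$\ell$ cap on components of $G[D]$), the preconditions of Theorem~\ref{thm:bound-sets-p-and-q-generalization} are met and the kernel size $k^{\mathcal{O}(\gamma)}$ follows, absorbing the constant dependence on $\ell$ into the $\mathcal{O}$-notation.
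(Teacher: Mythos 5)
Your proposal is correct and follows essentially the same route as the paper: exhaustively apply Reduction Rules~\ref{rr:remove-compos-at-most-ell} and~\ref{rr:neighbors-conn-compo} and then invoke Theorem~\ref{thm:bound-sets-p-and-q-generalization} with $D:=V(G)\setminus S$ and $t\in\mathcal{O}(k)$, exactly as in the analogue of Theorem~\ref{thm:capvc}. The paper states this as a direct consequence without further detail, so your write-up simply makes the same argument explicit.
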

\fi

\end{document}